\documentclass[11pt,a4paper]{article}

\usepackage{cmap}
\usepackage[T2A,T1]{fontenc}
\usepackage[utf8]{inputenc}
\usepackage[russian,english]{babel}

\usepackage{amsmath,amssymb,amsthm}
\usepackage{mathtools}
\usepackage{graphicx}
\usepackage{tikz}
\usetikzlibrary{arrows.meta,positioning,calc,shapes.geometric,decorations.pathreplacing}
\usepackage[margin=1in]{geometry}
\usepackage{longtable}
\usepackage{url}
\usepackage[colorlinks=true,linkcolor=blue,citecolor=blue,urlcolor=blue]{hyperref}

\hypersetup{
  pdftitle={Algebraic Completeness of Feynman Premeasures: The Henstock-Kurzweil
            Path Integral and Operator Wick Calculus in Lean 4},
  pdfauthor={Yu. N. Berdinsky},
  unicode=true
}

\theoremstyle{definition}
\newtheorem{definition}{Definition}[section]

\newtheorem{remark}[definition]{Remark}
\theoremstyle{plain}

\newtheorem{lemma}[definition]{Lemma}
\newtheorem{theorem}[definition]{Theorem}
\newtheorem{corollary}[definition]{Corollary}

\newcommand{\HH}{\mathcal{H}}
\newcommand{\KS}{\mathrm{KS}_2}
\newcommand{\DHK}{\mathcal{D}_{\mathrm{HK}}\varphi}
\newcommand{\RR}{\mathbb{R}}
\newcommand{\NN}{\mathbb{N}}

\newcommand{\CC}{\mathbb{C}}
\newcommand{\ip}[2]{\langle #1, #2 \rangle}
\newcommand{\LN}[1]{\path{#1}}

\title{Algebraic Completeness of Feynman Premeasures:\\
The Henstock--Kurzweil Path Integral\\
and Operator Wick Calculus in Lean~4}
\author{Yu.~N.~Berdinsky \\
\small Saint Petersburg State University, Faculty of Physics,\\
\small Department of High Energy and Elementary Particle Physics\\
\small \texttt{propagator2007@yandex.ru}}
\date{18 September 2026}

\begin{document}
\setlength{\emergencystretch}{3em}
\sloppy
\maketitle

\begin{abstract}
This work brings the gauge closure of the Feynman functional integral for the
free field, without Wick rotation, into machine-verified form. First, the restriction to box divisions of cylinders is removed: a
direction bound is proved, showing that every nonempty cylindrical cell
contained in a cylinder over a bounded box must constrain all coordinate
directions; from it a uniform Riemann-sum estimate is obtained, valid for all
$\delta$-fine divisions into arbitrary cylindrical cells. Second, the measure induced by
the coordinate functionals on $\RR^k$ is identified with the standard Gaussian
measure with density $(2\pi)^{-k/2}e^{-|y|^2/2}$. Third, the Saks--Henstock lemma is
proved in the cylindrical setting, together with additivity of the integral
over arbitrary finite decompositions of a cylinder and the consistency of the
descriptions of one and the same cylinder in terms of $V_n$ and $V_m$ for
$n\le m$. Fourth, the Fresnel formula
$\int_{\RR}e^{iax^2/2}\,dx=\sqrt{2\pi/a}\,e^{i\pi/4}$
is proved as an improper (gauge) integral --- the key analytic fact which is
inaccessible to the Lebesgue integral --- and from it we derive the projective
consistency of the oscillating Gaussian premeasure, the existence of the
free-field functional integral as a Henstock--Kurzweil integral on every
cylinder over a bounded box, and the finite-dimensional formula
$(2\pi)^{n/2}e^{i\pi n/4}(\det A)^{-1/2}$. In the present extended version the free field is
completely closed. The Fresnel formula is proved for an arbitrary real
symmetric positive-definite action $A$ (orthogonal diagonalisation, exhaustion
by cubes adapted to the eigenbasis); a generating functional with an external
source is introduced and its value
$Z(j)=(2\pi)^{k/2}e^{i\pi k/4}(\det A)^{-1/2}
 \exp\bigl(-\tfrac{i}{2}\ip{j}{A^{-1}j}\bigr)$ is computed; the two-point
function $\langle y_py_q\rangle=i(A^{-1})_{pq}$ is obtained; Wick's theorem is
proved, so that the free field is closed not only at the level of the
two-point function but at the level of \emph{all} correlation functions --- the
odd ones vanish, and the even ones factorise into sums over pairings of the
two-point function, the four-point function being
$-[(A^{-1})_{pq}(A^{-1})_{rs}+(A^{-1})_{pr}(A^{-1})_{qs}
   +(A^{-1})_{ps}(A^{-1})_{qr}]$; the oscillating
premeasure is embedded into the general theory as a complex cylindrical
premeasure and its projective consistency is proved; the value of the HK
integral is identified with the Fresnel value; and finally the passage from
the damped Gaussian premeasures to the oscillating one, as the damping is
removed, is established. The interacting field $\lambda\varphi^4$ in
infinitely many modes, unbounded generators and renormalisation remain future
work. All statements are formalised in
Lean~4 on top of Mathlib; no file contains \LN{sorry}, \LN{admit} or new
axioms, and every principal theorem depends only on \LN{propext},
\LN{Classical.choice} and \LN{Quot.sound}.
\end{abstract}

\tableofcontents

\section{Introduction}
\label{sec:intro}

The present work is the third part of the infinite-dimensional sub-programme
of the machine-verified construction of the Henstock--Kurzweil (HK) path
integral on Kuelbs--Steadman spaces. Its immediate predecessor is the published
preprint \cite{BerdinskyInfiniteHK2}, in which
the spaces $\KS$ were constructed and
the general Cousin lemma, the $K_0$-reduction, the splitting lemma for a
cylinder and the finite-dimensional consistency theorem for divisions into
cylinders over boxes were proved. Earlier stages of the programme established
the real-time HK integral \cite{Berdinsky2026HK}, Chernoff approximations
\cite{BerdinskyUshakov2026}, applications to financial mathematics
\cite{BerdinskyUshakovFinance}, stochastic neural fields
\cite{TitovBerdinskyUshakov}, open quantum systems and fermionic reservoirs
\cite{BerdinskyUshakovOpen}, the non-Gaussian regime
\cite{BerdinskyNonGaussian} and the infinite-dimensional HK integral
\cite{BerdinskyInfiniteHK}.

\paragraph{Scope.} The subject is delimited precisely: the closure of the
Feynman functional integral \emph{for the free field} in the
infinite-dimensional Henstock--Kurzweil setting, without Wick rotation, with a
machine-verified proof in Lean~4. Interaction ($\lambda\varphi^4$ in
infinitely many modes), unbounded generators and renormalisation are
\emph{not} treated here and are explicitly recorded as future work; the
ingredients for them are already prepared in \cite{BerdinskyNonGaussian} (the
non-Gaussian regime) and in
\cite{BerdinskyUshakov2026,BerdinskyUshakovOpen} (the Chernoff--Lie--Trotter
scheme for bounded generators).

\paragraph{The four goals.} The following four ``future goals'' of the
previous stage are formally closed:
\begin{itemize}
  \item[8.1] removal of the restriction to box divisions in the consistency
    theorem (Section~\ref{sec:arbitrary});
  \item[8.2] identification of \LN{coordGaussian} with the standard Gaussian
    density $(2\pi)^{-k/2}e^{-|y|^2/2}\,dy$ on $\RR^k$
    (Section~\ref{sec:coordgauss});
  \item[8.3] the Saks--Henstock lemma, additivity over arbitrary finite
    decompositions of a cylinder, and the consistency of $V_n$ and $V_m$
    (Section~\ref{sec:saks});
  \item[8.4] the oscillating Gaussian premeasure $e^{iS(\varphi)}$ and the
    free-field functional integral without Wick rotation
    (Section~\ref{sec:osc}).
\end{itemize}

\paragraph{Bibliographic context.} The idea of the path integral goes back to
Feynman \cite{FeynmanHibbs1965}; the rigorous gauge theory of integration to
Henstock \cite{Henstock1963} and Kurzweil \cite{Kurzweil1957}, and its
application to Feynman integrals to Muldowney
\cite{Muldowney2012,Muldowney2014}. Kuelbs--Steadman spaces were introduced by
Kuelbs \cite{Kuelbs1973} and developed by Gill and Zachary
\cite{GillZachary2008}; the physical and mathematical foundations of the
approach are set out in \cite{EspositoGill2024}. The formalisation is carried
out in Lean~4 on top of Mathlib \cite{Mathlib}.

\section{Reminder: cylinders, gauges, premeasures, the HK integral}
\label{sec:reminder}

Let $\HH$ be a real separable Hilbert space with an orthonormal total family
of modes $(e_n)_{n\in\NN}$ and strictly positive summable weights $(w_n)$; the
Kuelbs--Steadman space $\KS$ is $\HH$ equipped with the inner product
$\ip{x}{y}_{\KS}=\sum_n w_n \ip{e_n}{x}\ip{e_n}{y}$, constructed in
\cite{BerdinskyInfiniteHK2}.

\begin{definition}[Cylinder]
\label{def:cyl}
A \emph{cylinder} is a pair consisting of a finite-dimensional subspace
$V\le\HH$ (the \emph{direction space}) and a \emph{base} $B\subseteq\HH$; the
corresponding subset of $\HH$ is $c=\pi_V^{-1}(B)$, where $\pi_V$ is the
orthogonal projection onto $V$ (Lean: \LN{InfiniteHK.Cylinder}). For a finite
orthonormal family $e:\{0,\dots,k-1\}\to\HH$ and a box $I\subseteq\RR^k$, the
cylinder over the box is $\mathrm{cyl}_e(I)=\mathrm{coords}_e^{-1}(I)$, where
$\mathrm{coords}_e(x)=(\ip{e_i}{x})_{i<k}$ is the coordinate map
$\HH\to\RR^k$ and $\mathrm{coords}_e^{-1}(I)$ denotes the \emph{preimage} of
$I$ under this map, that is, the set of all $x\in\HH$ whose coordinates
$\ip{e_i}{x}$ lie in the box $I$ (Lean: \LN{InfiniteHK.cylOfBox}).
\end{definition}

\begin{definition}[Gauge and $\delta$-fineness]
\label{def:gauge}
A \emph{gauge} is a function $\delta:\HH\to(0,\infty)$ (Lean:
\LN{InfiniteHK.Gauge}). A marked family $S=\{(t_j,c_j)\}$ is
\emph{$\delta$-fine} if for every cell and every point $x$ of it one has
$\|\pi_{V_{c_j}}x-\pi_{V_{c_j}}t_j\|<\delta(t_j)$ (Lean:
\LN{InfiniteHK.IsFine}); it is a \emph{division} of a cylinder $C$ if the
cells are measurable, pairwise disjoint and their union is exactly $C$ (Lean:
\LN{InfiniteHK.IsDivision}).
\end{definition}

\begin{definition}[Cylindrical premeasure and Riemann sums]
\label{def:premeasure}
A \emph{cylindrical premeasure} $\mu$ assigns to every finite-dimensional
$V\le\HH$ a measure on $\HH$ concentrated on $V$; it is \emph{projectively
consistent} (Kolmogorov) if $\mu_W(\pi_V^{-1}(B)\cap W)=\mu_V(B)$ for
$V\le W$ (Lean:
\LN{InfiniteHK.CylindricalPremeasure.IsProjectivelyConsistent}). The Gaussian
cylindrical premeasure $\gamma$ is projectively consistent (Lean:
\LN{InfiniteHK.gaussianCylindricalPremeasure_isProjectivelyConsistent}). The
Riemann sum of a functional $F:\HH\to\CC$ over a marked family $S$ is
$\sum_{(t,c)\in S}F(t)\,\mu(c)$ (Lean: \LN{InfiniteHK.riemannSum}).
\end{definition}

From \cite{BerdinskyInfiniteHK,BerdinskyInfiniteHK2} we use: the general
Cousin lemma \LN{InfiniteHK2.general_cousin_cyl} and the global mode bound
\LN{InfiniteHK2.exists_global_K0} (the $K_0$-reduction), the splitting lemma
\LN{InfiniteHK2.splittingLemma_cylOfBox}, the finite-dimensional consistency
theorem \LN{InfiniteHK2.hkIntegral_fd_consistency}, and the finite additivity
of a projectively consistent premeasure
\LN{InfiniteHK.sum_cylMeasure_of_isDivision}.

\begin{figure}[ht]
\centering
\begin{tikzpicture}[scale=1.0]
  \draw[->] (-0.3,0) -- (5.6,0) node[right] {$y_1$};
  \draw[->] (0,-0.3) -- (0,3.4) node[above] {$y_2$};
  \draw[thick] (0.4,0.4) rectangle (5.0,3.0);
  \node at (5.0,3.25) {$I$};
  \draw (2.0,0.4) -- (2.0,3.0);
  \draw (3.6,0.4) -- (3.6,3.0);
  \draw (0.4,1.7) -- (5.0,1.7);
  \fill (1.2,1.0) circle (1.6pt); \node[below right] at (1.2,1.0) {$t_1$};
  \fill (2.8,2.4) circle (1.6pt); \node[below right] at (2.8,2.4) {$t_2$};
  \fill (4.3,1.1) circle (1.6pt); \node[below right] at (4.3,1.1) {$t_3$};
  \draw[dashed] (0.4,0.4) -- (0.4,3.0);
\end{tikzpicture}
\caption{Projection of a cylindrical division onto the plane of the first two
coordinates. The cells of an arbitrary division may have \emph{different}
direction spaces; the direction bound (Theorem~\ref{thm:dirbound}) guarantees
that each of them contains $V_k=\mathrm{span}(e_0,\dots,e_{k-1})$, so that the
projection onto the coordinate plane is still well defined.}
\label{fig:division}
\end{figure}

\section{Arbitrary cylindrical divisions (Goal 8.1)}
\label{sec:arbitrary}

At the previous stage the consistency theorem was proved for divisions into
cylinders over subboxes of $I$ with one and the same direction space. The
restriction is removed by the following geometric observation.

\begin{theorem}[Direction bound;\\ Lean: \LN{direction_le_of_cell_subset_cylOfBox}]
\label{thm:dirbound}
Let $c$ be a nonempty cylindrical cell contained in the cylinder
$\mathrm{cyl}_e(I)$ over a bounded box $I\subseteq\RR^k$. Then
$V_k=\mathrm{span}(e_0,\dots,e_{k-1})\le V_c$.
\end{theorem}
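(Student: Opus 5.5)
We argue by contraposition: assuming some $e_i$ does not lie in $V_c$, we produce a point of $c$ whose $i$-th coordinate is unbounded, contradicting $c\subseteq\mathrm{cyl}_e(I)$ with $I$ bounded. The key structural fact is that a cylindrical cell $c=\pi_{V_c}^{-1}(B)$ is invariant under translations by vectors of $V_c^\perp$: if $x\in c$ and $u\perp V_c$, then $\pi_{V_c}(x+u)=\pi_{V_c}x\in B$, hence $x+u\in c$.

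Since $V_c$ is finite-dimensional, hence closed, and $V_k\le V_c$ is equivalent to $e_i\in V_c$ for all $i<k$, suppose some $e_i\notin V_c$. Put $u=e_i-\pi_{V_c}e_i$. Then $u\neq 0$, $u\in V_c^\perp$, and
\[
\langle e_i,u\rangle=\|e_i\|^2-\langle e_i,\pi_{V_c}e_i\rangle=1-\|\pi_{V_c}e_i\|^2=\|u\|^2>0 .
\]
Choose $x\in c$, which is possible since $c$ is nonempty. By the invariance above, $x+tu\in c$ for every $t\in\RR$. By the hypothesis $c\subseteq\mathrm{cyl}_e(I)$, we get $\mathrm{coords}_e(x+tu)\in I$. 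Its $i$-th coordinate is
\[
\langle e_i,x\rangle+t\|u\|^2 ,
\]
which is unbounded in $t$. Since $I$ is bounded, its projection to the $i$-th coordinate is bounded, so this is a contradiction. Hence $e_i\in V_c$ for all $i<k$, and therefore $V_k\le V_c$.

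The mathematical content is elementary. I expect the main obstacle to be the formal bookkeeping in Lean. First, one has to express $u$ via Mathlib's orthogonal projection, which requires completeness of the finite-dimensional subspace $V_c$. Second, one must check that the \LN{InfiniteHK.Cylinder} structure's membership predicate is literally $\pi_{V_c}x\in B$, so that translation invariance reduces to $\pi_{V_c}u=0$. Third, boundedness of the box has to be turned into a concrete bound on $|y_i|$, say via $\mathrm{Bornology.IsBounded}$ and a sup-norm estimate, and then contradicted by choosing $t$ large. I would first isolate the translation-invariance lemma and the coordinate computation as separate auxiliary statements, and only then assemble the contradiction.
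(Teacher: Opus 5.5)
Your proof is correct and uses the same two ingredients as the paper's: the invariance of $c=\pi_{V_c}^{-1}(B_c)$ under translations along $V_c^{\perp}$, and the fact that a line $x+tv$ with $\mathrm{coords}_e(v)\ne0$ cannot stay inside the bounded box $I$. The only difference is the last step: the paper proves $V_c^{\perp}\subseteq V_k^{\perp}$ and then invokes $U^{\perp\perp}=U$ for closed subspaces, whereas you exhibit the explicit witness $u=e_i-\pi_{V_c}e_i\in V_c^{\perp}$ with $\langle e_i,u\rangle=\|u\|^2>0$, which is just the unrolled proof of $V_c^{\perp\perp}\subseteq V_c$.
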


\begin{proof}
Write $c=\pi_{V_c}^{-1}(B_c)$ for the cylinder $c$, with direction space $V_c$
and base $B_c$, and fix a point $x\in c$ (the cell is nonempty by hypothesis).

\emph{Step 1: a cylinder is invariant under translations along $V_c^{\perp}$.}
Let $v\in V_c^{\perp}$ and $t\in\RR$. Since $\pi_{V_c}$ is linear and vanishes
on $V_c^{\perp}$,
\[
  \pi_{V_c}(x+tv)=\pi_{V_c}(x)+t\,\pi_{V_c}(v)=\pi_{V_c}(x)\in B_c ,
\]
so that $x+tv\in\pi_{V_c}^{-1}(B_c)=c$. Thus the whole affine line
$\{x+tv:\ t\in\RR\}$ is contained in $c$.

\emph{Step 2: the line is trapped inside the cylinder over the box.}
By hypothesis $c\subseteq\mathrm{cyl}_e(I)=\mathrm{coords}_e^{-1}(I)$, hence
for every $t\in\RR$
\[
  \mathrm{coords}_e(x+tv)
   =\bigl(\ip{e_i}{x}+t\ip{e_i}{v}\bigr)_{i<k}
   =\mathrm{coords}_e(x)+t\,\mathrm{coords}_e(v)\ \in\ I .
\]

\emph{Step 3: boundedness of $I$ forces $\mathrm{coords}_e(v)=0$.}
The box $I$ is bounded, say $\|y\|\le M$ for all $y\in I$. If
$\mathrm{coords}_e(v)\ne0$ then, by the triangle inequality,
\[
  \bigl\|\mathrm{coords}_e(x)+t\,\mathrm{coords}_e(v)\bigr\|
   \ \ge\ |t|\,\|\mathrm{coords}_e(v)\|-\|\mathrm{coords}_e(x)\|
   \ \xrightarrow[\ |t|\to\infty\ ]{}\ \infty ,
\]
which contradicts the bound $\le M$ valid for all $t$. Hence
$\mathrm{coords}_e(v)=0$, i.e. $\ip{e_i}{v}=0$ for every $i<k$, i.e.
$v\perp e_i$ for every $i$. Since the $e_i$ span $V_k$ and orthogonality to a
spanning set is orthogonality to the span, $v\in V_k^{\perp}$. As $v$ was an
arbitrary element of $V_c^{\perp}$, we have proved
\[
  V_c^{\perp}\ \subseteq\ V_k^{\perp}.
\]

\emph{Step 4: passage to orthogonal complements.}
Both $V_k$ and $V_c$ are finite-dimensional subspaces of the Hilbert space
$\HH$, and a finite-dimensional subspace of a Hilbert space is closed;
therefore $U^{\perp\perp}=U$ for $U\in\{V_k,V_c\}$. The map
$U\mapsto U^{\perp}$ is order-reversing, so applying it to the inclusion of
Step~3 gives
\[
  V_k=V_k^{\perp\perp}\ \subseteq\ V_c^{\perp\perp}=V_c ,
\]
that is, $V_k\le V_c$, as asserted. (The corresponding statement has been
machine-checked: \LN{direction_le_of_cell_subset_cylOfBox}.)
\end{proof}

Consequently the coordinates of the points of any cell are controlled by the
gauge. Indeed, let $V_k\le V_c$ and let $x,t\in\HH$. Since
$e_i\in V_k\subseteq V_c$, the projection $\pi_{V_c}$ is self-adjoint and
fixes each $e_i$, whence $\ip{e_i}{x}=\ip{\pi_{V_c}e_i}{x}
=\ip{e_i}{\pi_{V_c}x}$ and likewise for $t$. Therefore, by Bessel's
inequality applied to the orthonormal family $(e_i)_{i<k}$ and the vector
$\pi_{V_c}x-\pi_{V_c}t$,
\begin{equation}
  \label{eq:coordcontrol}
  \bigl\|\mathrm{coords}_e(x)-\mathrm{coords}_e(t)\bigr\|^2
   =\sum_{i<k}\bigl|\ip{e_i}{\pi_{V_c}x-\pi_{V_c}t}\bigr|^2
   \ \le\ \bigl\|\pi_{V_c}x-\pi_{V_c}t\bigr\|^2 ,
\end{equation}
i.e.
$\|\mathrm{coords}_e(x)-\mathrm{coords}_e(t)\|
  \le\|\pi_{V_c}x-\pi_{V_c}t\|$
(Lean: \LN{norm_coords_sub_le_of_span_le}). This is the missing link: whatever
the direction space of a cell may be, $\delta$-fineness in the sense of
Definition~\ref{def:gauge} controls the coordinates $\mathrm{coords}_e$.

\begin{definition}[Cylindrical division; Lean: \LN{IsCylDivision}]
\label{def:cyldiv}
A marked family $S$ is a \emph{cylindrical division} of the cylinder
$\mathrm{cyl}_e(I)$ if it is a division of that cylinder (the cells being
arbitrary cylinders) and the mark of each cell lies over the closed box:
$\mathrm{coords}_e(t)\in\overline{I}$.
\end{definition}

Every division into cylinders over subboxes is a cylindrical division (Lean:
\LN{isCylDivision_of_isCylBoxDivision}), whence $\delta$-fine cylindrical
divisions exist for every gauge (Lean: \LN{exists_isCylDivision}).

\begin{theorem}[Uniform Riemann-sum estimate; Lean:
\LN{riemannSum_sub_integral_le_arbitrary}]
\label{thm:riemann-arb}
Let $f:\RR^k\to\CC$ be continuous, $F=f\circ\mathrm{coords}_e$, and let
$\varepsilon>0$. Then there is $\rho>0$ such that for \emph{every} cylindrical
division $S$ of the cylinder $\mathrm{cyl}_e(I)$ which is $\rho$-fine in the
sense of Definition~\ref{def:gauge},
\[
  \Bigl\|\sum_{(t,c)\in S}F(t)\,\gamma(c)-\int_I f\,d\gamma\Bigr\|\le\varepsilon .
\]
The estimate is uniform over all such divisions, not merely over divisions
into cylinders over boxes.
\end{theorem}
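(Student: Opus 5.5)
The plan is to reduce everything to a uniform continuity estimate on the compact closed box $\overline I$, using the direction bound (Theorem~\ref{thm:dirbound}) and the coordinate control \eqref{eq:coordcontrol} to push each cell down to $\RR^k$.

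\emph{Step 1: the target as a sum over the cells.} The integral $\int_I f\,d\gamma$ is understood as the integral of $f$ against the Gaussian measure on $\RR^k$ pushed forward by $\mathrm{coords}_e$, and $\gamma(\mathrm{cyl}_e(I))$ is the Gaussian mass of $I$. Since $S$ is a division of $\mathrm{cyl}_e(I)$ into measurable, pairwise disjoint cells, finite additivity of the projectively consistent premeasure (\LN{InfiniteHK.sum_cylMeasure_of_isDivision}, applied now to cells with arbitrary direction spaces) gives
\[
  \int_I f\,d\gamma=\sum_{(t,c)\in S}\int_{c} F\,d\gamma .
\]
Hence
\[
  \sum_{(t,c)\in S}F(t)\gamma(c)-\int_I f\,d\gamma=\sum_{(t,c)\in S}\int_c\bigl(F(t)-F(x)\bigr)\,d\gamma(x).
\]
Empty cells contribute nothing, so they are discarded.

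\emph{Step 2: oscillation on a cell.} Fix a nonempty cell $c$ with mark $t$. By Theorem~\ref{thm:dirbound}, $V_k\le V_c$. By \eqref{eq:coordcontrol} and $\rho$-fineness, every $x\in c$ satisfies
\[
  \|\mathrm{coords}_e(x)-\mathrm{coords}_e(t)\|\le\|\pi_{V_c}x-\pi_{V_c}t\|<\rho .
\]
Moreover $\mathrm{coords}_e(x)\in I$ because $c\subseteq\mathrm{cyl}_e(I)$, and $\mathrm{coords}_e(t)\in\overline I$ by Definition~\ref{def:cyldiv}. Both points therefore lie in the compact set $\overline I$, since $I$ is bounded.

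\emph{Step 3: choice of $\rho$.} Since $f$ is uniformly continuous on $\overline I$, pick $\rho>0$ with
\[
  \|f(y)-f(y')\|\le\varepsilon/(\gamma(\mathrm{cyl}_e(I))+1)\quad\text{whenever } y,y'\in\overline I,\ \|y-y'\|<\rho .
\]
Then each summand is bounded by $\varepsilon\,\gamma(c)/(\gamma(\mathrm{cyl}_e(I))+1)$. Summing and using additivity once more, $\sum_c\gamma(c)=\gamma(\mathrm{cyl}_e(I))\le1$, which yields the bound $\le\varepsilon$. The choice of $\rho$ depends only on $f$, $I$ and $\varepsilon$, so the estimate is uniform over all cylindrical divisions.

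\emph{Main obstacle.} I expect the hard step to be Step~1: justifying that $\gamma(c)$, for a cell with an arbitrary direction space $V_c\supseteq V_k$, agrees with the mass of $c$ under the pushforward picture on $\RR^k$, so that the cell integrals add up to $\int_I f\,d\gamma$. My first attempt would work in the common space $W=V_c+V_k=V_c$. I would use projective consistency to compute all quantities with $\gamma_{V_c}$, and note that $F$ factors through $\pi_{V_c}$ because $\mathrm{coords}_e=\mathrm{coords}_e\circ\pi_{V_c}$ by self-adjointness. Summing over cells with different $V_c$ would then use consistency down to $V_k$ cell by cell; if this becomes awkward, I would instead pass to the sum of all the $V_c$, which is finite-dimensional because $S$ is finite.
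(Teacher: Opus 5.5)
Your proposal is essentially the paper's proof. There one fixes at the outset the common space $W=V_k+V_{c_1}+\dots+V_{c_N}$, gets $\gamma(c)=\gamma_W(c)$ for every cell from projective consistency and $\int_I f\,d\gamma=\int_{\mathrm{cyl}_e(I)}F\,d\gamma_W$ from the pushforward identity $(\mathrm{coords}_e)_*\gamma_W=\gamma^{(e)}$, and then runs exactly your Steps~2--3 (direction bound, coordinate control, uniform continuity, and $\sum_c\gamma(c)=\gamma_W(\mathrm{cyl}_e(I))\le1$; your $\le1$ is correct, and the paper's ``$=1$'' is a slip). The one point to tighten is Step~1: finite additivity of the premeasure does not split the integral, and the cell-by-cell reduction to $V_k$ fails because a cell with $V_c\supsetneq V_k$ is not a $V_k$-cylinder, so your fallback to the sum of all direction spaces is the step actually needed for the cell integrals to add up.
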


\begin{proof}
Throughout, $C=\mathrm{cyl}_e(I)$ denotes the ambient cylinder and
$F=f\circ\mathrm{coords}_e$.

\emph{(a) The ambient finite-dimensional subspace.}
Fix $f$ continuous on the compact closed box $\overline I\subseteq\RR^k$ and
let $\varepsilon>0$. Let $S=\{(t_1,c_1),\dots,(t_N,c_N)\}$ be a cylindrical
division of $C$ and put
\[
  W\ :=\ V_C+V_{c_1}+\dots+V_{c_N}\ \le\ \HH ,
\]
the sum of the direction space $V_C=V_k$ of the ambient cylinder and of the
direction spaces of all the cells. A sum of finitely many
finite-dimensional subspaces is finite-dimensional, so $\dim W<\infty$;
moreover $\mathrm{span}(e_0,\dots,e_{k-1})=V_k\le W$, so that the coordinate
map $\mathrm{coords}_e$ is already determined by the $W$-component of a
vector. Write $\gamma_W$ for the standard Gaussian measure of $W$ transported
to $\HH$.

\emph{(b) Every cell is measured by $\gamma_W$.}
Let $c\in S$ be a cell with direction space $V_c\le W$ and base $B_c$. By the
definition of projective consistency recalled in
Definition~\ref{def:premeasure} --- namely
$\mu_W(\pi_V^{-1}(B)\cap W)=\mu_V(B)$ whenever $V\le W$ --- applied to the
Gaussian cylindrical premeasure $\gamma$ (Lean:
\LN{gaussianCylindricalPremeasure_isProjectivelyConsistent}), we obtain the
chain of equalities
\begin{equation}
  \label{eq:cellmeasure}
  \gamma(c)\ =\ \gamma_{V_c}(B_c)\ =\ \gamma_W\bigl(\pi_{V_c}^{-1}(B_c)\cap W\bigr)
            \ =\ \gamma_W(c) ,
\end{equation}
the last equality because $\gamma_W$ is concentrated on $W$. Thus all the
cells of $S$, whatever their individual direction spaces, are measured by one
and the same finite-dimensional Gaussian measure $\gamma_W$.

\emph{(c) The integral over the cylinder in the $W$-picture.}
The measure $\gamma^{(e)}$ occurring in $\int_If\,d\gamma$ is by definition
the pushforward of the Gaussian measure along $\mathrm{coords}_e$; since
$V_k\le W$ and $\gamma_W$ is the standard Gaussian measure of $W$, the
pushforward of $\gamma_W$ along $\mathrm{coords}_e$ is again the standard
Gaussian measure of $\RR^k$ (Lean: \LN{map_coords_gaussianOn}; the underlying
fact is Theorem~\ref{thm:coordgauss} together with the marginal property of
Theorem~\ref{thm:vnvm}). By the change-of-variables formula for a pushforward
measure,
\begin{equation}
  \label{eq:pushforward}
  \int_I f\,d\gamma
   =\int_{\mathrm{coords}_e^{-1}(I)}\bigl(f\circ\mathrm{coords}_e\bigr)\,d\gamma_W
   =\int_C F\,d\gamma_W .
\end{equation}

\emph{(d) Splitting the integral over the cells.}
Since $S$ is a division of $C$, its cells are measurable, pairwise disjoint
and their union is exactly $C$. The function $F$ is bounded on $C$ (indeed
$|F|\le\sup_{\overline I}|f|<\infty$) and measurable, so by finite additivity
of the integral over pairwise disjoint measurable sets
\begin{equation}
  \label{eq:splitcells}
  \int_C F\,d\gamma_W=\sum_{(t,c)\in S}\int_c F\,d\gamma_W .
\end{equation}

\emph{(e) The error on a single cell.}
By~\eqref{eq:cellmeasure}, $F(t)\gamma(c)=F(t)\gamma_W(c)
=\int_cF(t)\,d\gamma_W(x)$, the integrand being the constant $F(t)$. Hence
\begin{equation}
  \label{eq:cellerror}
  F(t)\,\gamma(c)-\int_cF\,d\gamma_W
   =\int_c\bigl(F(t)-F(x)\bigr)\,d\gamma_W(x),
\end{equation}
and therefore
\begin{equation}
  \label{eq:cellbound}
  \Bigl\|F(t)\,\gamma(c)-\int_cF\,d\gamma_W\Bigr\|
   \ \le\ \Bigl(\sup_{x\in c}\bigl|F(t)-F(x)\bigr|\Bigr)\,\gamma(c).
\end{equation}

\emph{(f) Choice of $\rho$ and the pointwise estimate.}
The closed box $\overline I$ is compact, so $f$ is uniformly continuous on it:
there is $\rho>0$ such that
\begin{equation}
  \label{eq:unifcont}
  y,y'\in\overline I,\quad\|y-y'\|<\rho
  \ \Longrightarrow\ |f(y)-f(y')|\le\varepsilon .
\end{equation}
We claim that this $\rho$ works. Let $S$ be a $\rho$-fine cylindrical
division of $C$, let $(t,c)\in S$ and $x\in c$. Since $c$ is nonempty and
$c\subseteq C=\mathrm{cyl}_e(I)$, Theorem~\ref{thm:dirbound} gives
$V_k\le V_c$, so the coordinate bound~\eqref{eq:coordcontrol} applies:
\[
  \bigl\|\mathrm{coords}_e(x)-\mathrm{coords}_e(t)\bigr\|
   \ \le\ \bigl\|\pi_{V_c}x-\pi_{V_c}t\bigr\|\ <\ \rho ,
\]
the last inequality being precisely the $\rho$-fineness of the cell. Moreover
$\mathrm{coords}_e(x)\in I\subseteq\overline I$ because $x\in C$, and
$\mathrm{coords}_e(t)\in\overline I$ by the definition of a cylindrical
division (Definition~\ref{def:cyldiv}). Hence~\eqref{eq:unifcont} applies and
\[
  \bigl|F(t)-F(x)\bigr|
   =\bigl|f(\mathrm{coords}_e t)-f(\mathrm{coords}_e x)\bigr|\le\varepsilon .
\]
Substituting into~\eqref{eq:cellbound},
\begin{equation}
  \label{eq:cellfinal}
  \Bigl\|F(t)\,\gamma(c)-\int_cF\,d\gamma_W\Bigr\|\le\varepsilon\,\gamma(c).
\end{equation}
If $c=\varnothing$ then $\gamma(c)=0$ and both sides of~\eqref{eq:cellfinal}
vanish, so empty cells may be kept in the sum without affecting anything.

\emph{(g) Summation over the cells.}
Combining \eqref{eq:pushforward}, \eqref{eq:splitcells} and the triangle
inequality, and then applying~\eqref{eq:cellfinal} cellwise,
\begin{align*}
  \Bigl\|\sum_{(t,c)\in S}F(t)\,\gamma(c)-\int_If\,d\gamma\Bigr\|
  &=\Bigl\|\sum_{(t,c)\in S}\Bigl(F(t)\,\gamma(c)-\int_cF\,d\gamma_W\Bigr)\Bigr\|\\
  &\le\sum_{(t,c)\in S}\Bigl\|F(t)\,\gamma(c)-\int_cF\,d\gamma_W\Bigr\|
   \ \le\ \varepsilon\sum_{(t,c)\in S}\gamma(c).
\end{align*}
Finally the cells are pairwise disjoint with union $C$, so by finite
additivity of the measure $\gamma_W$ and by~\eqref{eq:cellmeasure}
\[
  \sum_{(t,c)\in S}\gamma(c)=\sum_{(t,c)\in S}\gamma_W(c)=\gamma_W(C)=1 ,
\]
the total Gaussian mass of the cylinder being one, because $\gamma$ is a
probability premeasure (Theorem~\ref{thm:coordgauss}). Hence the left-hand
side is at most $\varepsilon$, and the bound involves no datum of $S$ beyond
its $\rho$-fineness. (The corresponding statement has been machine-checked:
\LN{riemannSum_sub_integral_le_arbitrary}.)
\end{proof}

\begin{corollary}[Consistency for arbitrary divisions; Lean:
\LN{HKIntegrable_arbitrary_division}]
\label{cor:arb}
The functional $F=f\circ\mathrm{coords}_e$ with $f$ continuous is
HK-integrable over the cylinder $\mathrm{cyl}_e(I)$ along arbitrary
cylindrical divisions, and its integral equals $\int_I f\,d\gamma$. The value
is unique (Lean: \LN{HKIntegrableArbWith.unique},
\LN{hkIntegral_arbitrary_division_unique}).
\end{corollary}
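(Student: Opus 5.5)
The corollary follows from Theorem~\ref{thm:riemann-arb} once it is phrased in the gauge language. I take \emph{HK-integrable along arbitrary cylindrical divisions with value $J$} to mean: for every $\varepsilon>0$ there is a gauge $\delta$ such that every $\delta$-fine cylindrical division $S$ of $\mathrm{cyl}_e(I)$ (Definition~\ref{def:cyldiv}) satisfies $\|\sum_{(t,c)\in S}F(t)\gamma(c)-J\|\le\varepsilon$. For existence I put $J=\int_I f\,d\gamma$. Given $\varepsilon$, Theorem~\ref{thm:riemann-arb} supplies a $\rho>0$, and I take the constant gauge $\delta\equiv\rho$. A $\delta$-fine division is then exactly a $\rho$-fine one, so the required estimate is the conclusion of Theorem~\ref{thm:riemann-arb} verbatim. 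One point needs care: the theorem assumes $f$ continuous on $\overline I$, or on all of $\RR^k$, and the corollary's hypothesis must match that.

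For uniqueness, suppose $J_1$ and $J_2$ both satisfy the definition. Fix $\varepsilon>0$ and let $\delta_1,\delta_2$ be the corresponding gauges. Set $\delta=\min(\delta_1,\delta_2)$, which is again a gauge. A $\delta$-fine division is both $\delta_1$-fine and $\delta_2$-fine, because fineness is a pointwise strict inequality $\|\pi_{V_c}x-\pi_{V_c}t\|<\delta(t)$.

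The key step is nonvacuity. I invoke \LN{exists_isCylDivision}, which says that for every gauge $\delta$ there exists a $\delta$-fine cylindrical division $S$ of $\mathrm{cyl}_e(I)$. This rests on the Cousin lemma for box divisions \LN{InfiniteHK2.general_cousin_cyl} together with \LN{isCylDivision_of_isCylBoxDivision}. The triangle inequality applied to this $S$ gives $\|J_1-J_2\|\le2\varepsilon$. Since $\varepsilon$ is arbitrary, $J_1=J_2$. Combining this with the existence part, the unique value is $\int_I f\,d\gamma$.

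The main obstacle is ensuring that the class of $\delta$-fine cylindrical divisions is nonempty for every gauge. Otherwise uniqueness fails trivially, since every value would qualify. I would obtain it by taking the box division produced by the Cousin lemma, with marks chosen in $\overline I$, and checking that it meets the extra mark condition of Definition~\ref{def:cyldiv}. Everything else is bookkeeping with constant and minimum gauges.
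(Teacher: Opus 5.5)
Your proposal is correct and follows the paper's proof essentially step for step. Existence comes from the constant gauge $\delta\equiv\rho$ supplied by Theorem~\ref{thm:riemann-arb}. Uniqueness comes from the pointwise minimum of two gauges together with the nonvacuity guaranteed by \LN{exists_isCylDivision}; the paper chooses $\varepsilon/2$-gauges to get $\|A-B\|\le\varepsilon$, which is the same as your $2\varepsilon$ bound.
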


\begin{proof}
\emph{Existence of the value.} Let $\varepsilon>0$ and let $\rho>0$ be the
number furnished by Theorem~\ref{thm:riemann-arb} for this $\varepsilon$.
Take for gauge the constant function $\delta\equiv\rho$, which is admissible
because $\rho>0$. If $S$ is any $\delta$-fine cylindrical division of
$\mathrm{cyl}_e(I)$, then for every cell $(t,c)\in S$ and every $x\in c$ we
have $\|\pi_{V_c}x-\pi_{V_c}t\|<\delta(t)=\rho$, i.e. $S$ is $\rho$-fine;
therefore Theorem~\ref{thm:riemann-arb} gives
\[
  \Bigl\|\sum_{(t,c)\in S}F(t)\gamma(c)-\int_If\,d\gamma\Bigr\|\le\varepsilon .
\]
Since $\varepsilon>0$ was arbitrary, $F$ is HK-integrable along arbitrary
cylindrical divisions with value $\int_If\,d\gamma$. The set of $\delta$-fine
cylindrical divisions is nonempty for every gauge $\delta$ (Lean:
\LN{exists_isCylDivision}), so the definition is not vacuous.

\emph{Uniqueness.} Suppose $A$ and $B$ both satisfy the definition, and let
$\varepsilon>0$. Choose gauges $\delta_A$ and $\delta_B$ such that every
$\delta_A$-fine (resp. $\delta_B$-fine) cylindrical division $S$ satisfies
$\|\sigma(S)-A\|\le\varepsilon/2$ (resp. $\|\sigma(S)-B\|\le\varepsilon/2$),
where $\sigma(S)$ denotes the Riemann sum. Put
$\delta(x)=\min\{\delta_A(x),\delta_B(x)\}>0$; this is again a gauge, and
every $\delta$-fine division is simultaneously $\delta_A$-fine and
$\delta_B$-fine. By \LN{exists_isCylDivision} at least one such division $S$
exists, and then
\[
  \|A-B\|\le\|A-\sigma(S)\|+\|\sigma(S)-B\|\le\varepsilon .
\]
As $\varepsilon>0$ was arbitrary, $A=B$. (The corresponding statements have
been machine-checked: \LN{HKIntegrable_arbitrary_division},
\LN{hkIntegral_arbitrary_division_unique}.)
\end{proof}

\section{Identification of coordGaussian (Goal 8.2)}
\label{sec:coordgauss}

Let $V=\mathrm{span}(e_0,\dots,e_{k-1})$ and let $\gamma_V$ be the standard
Gaussian measure of $V$ transported to $\HH$. The measure \LN{coordGaussian}
$e$ is the pushforward of $\gamma_V$ along $\mathrm{coords}_e$.

\begin{theorem}[Lean: \LN{coordGaussian_eq_stdGaussian},
\LN{coordGaussian_density}]
\label{thm:coordgauss}
Write $\gamma^{(e)}$ for the measure \LN{coordGaussian}~$e$. If the family $e$
is orthonormal, then $\gamma^{(e)}$ is the standard Gaussian measure on
$\RR^k$: for every measurable $A\subseteq\RR^k$
\[
  \gamma^{(e)}(A)=\int_A (2\pi)^{-k/2}
    \exp\Bigl(-\tfrac12\sum_{i}y_i^2\Bigr)\,dy .
\]
In particular it is a probability measure (Lean:
\LN{coordGaussian_isProbability}), and for every box $J\subseteq\RR^k$ the
cylindrical measure of the cylinder over $J$ equals the Gaussian measure of
the box (Lean: \LN{cylMeasure_eq_stdGaussian_box}).
\end{theorem}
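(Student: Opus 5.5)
The plan is to reduce everything to one structural fact: an orthonormal family induces a linear isometric isomorphism between $V$ and $\RR^k$. The standard Gaussian measure is invariant under such isometries.

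\emph{Step 1.} Since $e$ is orthonormal, the restriction $\mathrm{coords}_e|_V:V\to\RR^k$ is a linear isometry onto $\RR^k$ with the Euclidean norm. Its inverse is $y\mapsto\sum_i y_ie_i$. Indeed, $\|\sum_iy_ie_i\|^2=\sum_iy_i^2$ by orthonormality, and $\mathrm{coords}_e(\sum_iy_ie_i)=y$. Call this map $L:V\to\RR^k$.

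\emph{Step 2.} Recall how the standard Gaussian measure $\gamma_V$ of the finite-dimensional inner-product space $V$ is defined. In Mathlib-style terms, it is the pushforward of the product measure $\bigotimes_{i<k}\mathcal N(0,1)$ under an orthonormal basis map, or equivalently the measure with density $(2\pi)^{-\dim V/2}e^{-\|v\|^2/2}$ with respect to the Haar/volume measure of $V$. Transport it to $\HH$ via the inclusion $V\hookrightarrow\HH$. Because $\mathrm{coords}_e$ restricted to $V$ is $L$, the pushforward $\gamma^{(e)}=(\mathrm{coords}_e)_*\gamma_V$ equals $L_*\gamma_V$.

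\emph{Step 3.} Compute $L_*\gamma_V$, which I expect to be the main obstacle, since it requires matching whichever definition of $\gamma_V$ was used. If $\gamma_V$ is defined through some orthonormal basis $b$ of $V$, possibly different from $e$, then $L\circ b$ is an orthogonal map $O$ of $\RR^k$. The claim then reduces to invariance of the standard Gaussian on $\RR^k$ under orthogonal maps, which goes as follows. The volume measure is preserved because $|\det O|=1$, so the change-of-variables formula applies. The density $(2\pi)^{-k/2}e^{-|y|^2/2}$ is a function of $|y|$ alone, and $|Oy|=|y|$, so the density is unchanged. If instead $\gamma_V$ is defined via a density against the volume measure of $V$, the same argument applies, using that a linear isometry $V\to\RR^k$ pushes the volume (Haar) measure of $V$ to Lebesgue measure. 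In either case we get $\gamma^{(e)}(A)=\int_A(2\pi)^{-k/2}\exp(-\frac12\sum_iy_i^2)\,dy$ for every measurable $A$.

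\emph{Step 4: consequences.} The probability statement follows by Fubini, since $\int_{\RR}(2\pi)^{-1/2}e^{-y^2/2}\,dy=1$ and the density factorises over coordinates. For the box statement, the cylindrical measure of $\mathrm{cyl}_e(J)$ is by definition $\gamma_V(\mathrm{coords}_e^{-1}(J)\cap V)$. This equals $\gamma^{(e)}(J)$ by the definition of the pushforward, because $\gamma_V$ is concentrated on $V$. The formula just proved then applies with $A=J$, which is measurable.
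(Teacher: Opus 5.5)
Your proposal is correct and takes essentially the same route as the paper. The paper likewise shows via Parseval that $\mathrm{coords}_e|_V$ is a linear isometry onto $\RR^k$, so that it carries Lebesgue measure of $V$ to Lebesgue measure of $\RR^k$, and then transports the density $(2\pi)^{-k/2}e^{-\|v\|^2/2}$; this is exactly the second branch of your Step~3, and the paper then derives the probability and box statements as in your Step~4. Your first branch, which reduces to rotation invariance of the product Gaussian when $\gamma_V$ is built from another orthonormal basis, is valid but not needed, because the paper fixes the density definition of $\gamma_V$.
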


\begin{proof}
\emph{(a) The definition.} By definition, the measure $\gamma^{(e)}$, in Lean
\LN{coordGaussian}~$e$,
is the pushforward of the standard Gaussian measure $\gamma_V$ of
$V=\mathrm{span}(e_0,\dots,e_{k-1})$ along the coordinate map
$\mathrm{coords}_e:V\to\RR^k$, $\mathrm{coords}_e(v)=(\ip{e_i}{v})_{i<k}$:
that is, $\gamma^{(e)}(A)=\gamma_V(\mathrm{coords}_e^{-1}(A))$ for every
measurable $A\subseteq\RR^k$.

\emph{(b) $e$ is an orthonormal basis of $V$.} The family $e$ is orthonormal
by hypothesis, and an orthonormal family is linearly independent: if
$\sum_i\lambda_ie_i=0$ then pairing with $e_j$ gives
$\lambda_j=\ip{e_j}{\sum_i\lambda_ie_i}=0$. It spans $V$ by the very
definition of $V$ as the span of the $e_i$. Hence $e$ is an orthonormal basis
of $V$ (Lean: \LN{spanBasis}) and $\dim V=k$ (Lean: \LN{finrank_spanOf}).

\emph{(c) $\mathrm{coords}_e|_V$ is a linear isometry onto $\RR^k$.}
Linearity is clear, each $v\mapsto\ip{e_i}{v}$ being linear. Surjectivity:
given $y\in\RR^k$, the vector $v=\sum_{i<k}y_ie_i$ lies in $V$ and
$\ip{e_j}{v}=\sum_iy_i\ip{e_j}{e_i}=y_j$, so $\mathrm{coords}_e(v)=y$.
Isometry: for $v\in V$ write $v=\sum_i\ip{e_i}{v}e_i$ (possible by (b)); then
by Parseval's identity for the finite orthonormal basis $e$ of $V$,
\begin{equation}
  \label{eq:parseval}
  \|v\|^2=\sum_{i<k}\ip{e_i}{v}^2=\bigl|\mathrm{coords}_e(v)\bigr|^2
\end{equation}
(Lean: \LN{norm_sq_eq_sum_coords}), where $|\cdot|$ is the Euclidean norm of
$\RR^k$. A linear bijection between $k$-dimensional Euclidean spaces which
preserves norms preserves inner products (by polarisation) and hence, read in
any pair of orthonormal bases, is given by an orthogonal matrix, whose
determinant has absolute value $1$. By the transformation formula for
Lebesgue measure under a linear map, $\mathrm{coords}_e|_V$ therefore carries
the Lebesgue measure of $V$ to the Lebesgue measure of $\RR^k$ (Lean:
\LN{measurePreserving_coordsOn}).

\emph{(d) The Gaussian density of $V$.} The standard Gaussian measure of the
$k$-dimensional Euclidean space $V$ is by definition the measure with density
\[
  \frac{d\gamma_V}{d\mathrm{Leb}_V}(v)=(2\pi)^{-k/2}\exp\Bigl(-\tfrac12\|v\|^2\Bigr)
\]
with respect to the Lebesgue measure of $V$. By~\eqref{eq:parseval} the
exponent may be rewritten in terms of the coordinates alone:
\begin{align*}
  (2\pi)^{-k/2}\exp\Bigl(-\tfrac12\|v\|^2\Bigr)
  &=(2\pi)^{-k/2}\exp\Bigl(-\tfrac12\sum_{i<k}
      \bigl(\mathrm{coords}_e v\bigr)_i^2\Bigr)
   =\rho\bigl(\mathrm{coords}_e v\bigr),\\
  \rho(y)&:=(2\pi)^{-k/2}e^{-|y|^2/2}.
\end{align*}
Thus the density is a function of $\mathrm{coords}_e(v)$ only.

\emph{(e) Transport of the density.} Let $T:=\mathrm{coords}_e|_V$. For a
measurable $A\subseteq\RR^k$, using the definition of the pushforward, the
density of (d) and then the measure-preserving property of $T$ established in
(c),
\begin{align*}
  \gamma^{(e)}(A)&=\gamma_V\bigl(T^{-1}(A)\bigr)
   =\int_{T^{-1}(A)}\rho\bigl(T v\bigr)\,d\mathrm{Leb}_V(v)\\
   &=\int_{A}\rho(y)\,dy
   =\int_A(2\pi)^{-k/2}\exp\Bigl(-\tfrac12\sum_iy_i^2\Bigr)dy ,
\end{align*}
which is the asserted formula (Lean: \LN{coordGaussian_eq_stdGaussian},
\LN{coordGaussian_density}).

\emph{(f) It is a probability measure.} Taking $A=\RR^k$ in (e) gives
$\gamma^{(e)}(\RR^k)=\gamma_V(V)=1$, because $\gamma_V$ is a probability
measure; equivalently, $\int_{\RR^k}(2\pi)^{-k/2}e^{-|y|^2/2}dy=1$ by the
one-dimensional Gaussian integral and Fubini's theorem (Lean:
\LN{coordGaussian_isProbability}).

\emph{(g) Cylinders over boxes.} Finally, for a box $J\subseteq\RR^k$ the
cylinder $\mathrm{cyl}_e(J)=\mathrm{coords}_e^{-1}(J)$ has cylindrical measure
\[
  \gamma\bigl(\mathrm{cyl}_e(J)\bigr)=\gamma_V\bigl(\mathrm{coords}_e^{-1}(J)\bigr)
   =\gamma^{(e)}(J)=\int_J(2\pi)^{-k/2}e^{-|y|^2/2}\,dy ,
\]
the standard Gaussian measure of the box (Lean:
\LN{cylMeasure_eq_stdGaussian_box}).
\end{proof}

\section{Saks--Henstock and additivity (Goal 8.3)}
\label{sec:saks}

\begin{definition}[Partial system; Lean: \LN{IsCylBoxSystem}]
\label{def:system}
A \emph{partial cylindrical system} in $\mathrm{cyl}_e(I)$ is a finite family
of marked cells, each of which is a cylinder over a subbox $J\le I$ marked
over $\overline{J}$, the bases being pairwise disjoint. Unlike a division, a
system need not cover the cylinder.
\end{definition}

\begin{theorem}[Saks--Henstock lemma; Lean: \LN{saks_henstock}]
\label{thm:saks}
Let $f$ be continuous and let $\varepsilon>0$. Then there is a calibration (a
constant $\rho>0$) such that for every $\rho$-fine partial cylindrical system
$S$ in $\mathrm{cyl}_e(I)$
\[
  \sum_{(t,c)\in S}\Bigl|F(t)\,\gamma(c)-\int_c F\,d_{\mathrm{HK}}\Bigr|
  \le\varepsilon,
  \qquad F=f\circ\mathrm{coords}_e,
\]
where $\int_c F\,d_{\mathrm{HK}}=\int_{J}f\,d\gamma$ is the HK integral over
the sub-cylinder $c=\mathrm{cyl}_e(J)$ provided by
Corollary~\ref{cor:arb}.
\end{theorem}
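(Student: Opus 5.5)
The plan is to bypass the usual Saks--Henstock argument, which refines the complement of a partial system by a fine division and invokes the Cauchy criterion. Here the integrand is continuous, so the uniform-continuity argument behind Theorem~\ref{thm:riemann-arb} works cell by cell. The system inherits a global bound because the cells are disjoint and their total Gaussian mass is at most one.

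\emph{Step 1: choice of $\rho$.} Since $\overline I$ is compact, $f$ is uniformly continuous on it. Choose $\rho>0$ so that for $y,y'\in\overline I$ with $\|y-y'\|<\rho$ we have $|f(y)-f(y')|\le\varepsilon$, exactly as in~\eqref{eq:unifcont}.

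\emph{Step 2: identify the HK integral over a cell.} Let $(t,c)\in S$ with $c=\mathrm{cyl}_e(J)$ and $J\le I$. By Corollary~\ref{cor:arb} applied to the box $J$, we have $\int_cF\,d_{\mathrm{HK}}=\int_Jf\,d\gamma$. By Theorem~\ref{thm:coordgauss}, in particular \LN{cylMeasure_eq_stdGaussian_box}, the same theorem also gives
\[
\int_Jf\,d\gamma=\int_J f(y)\,\rho_k(y)\,dy,\qquad \rho_k(y):=(2\pi)^{-k/2}e^{-|y|^2/2},
\]
and $\gamma(c)=\gamma^{(e)}(J)$. Therefore
\[
F(t)\gamma(c)-\int_cF\,d_{\mathrm{HK}}=\int_J\bigl(f(\mathrm{coords}_e t)-f(y)\bigr)\,d\gamma^{(e)}(y).
\]

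\emph{Step 3: pointwise bound on each cell.} For $y\in J$, pick $x\in c$ with $\mathrm{coords}_e x=y$; such an $x$ exists since $\mathrm{coords}_e$ is surjective. The cell has direction space $V_k$, so~\eqref{eq:coordcontrol}, or simply the identity $\pi_{V_k}$-norm $=$ coordinate norm from~\eqref{eq:parseval}, gives
\[
\|y-\mathrm{coords}_e t\|\le\|\pi_{V_c}x-\pi_{V_c}t\|<\rho.
\]
Both $y$ and $\mathrm{coords}_e t$ lie in $\overline I$, because the mark lies over $\overline J\subseteq\overline I$. Hence the integrand is bounded by $\varepsilon$, and the cell error is at most $\varepsilon\,\gamma(c)$, exactly as in~\eqref{eq:cellfinal}. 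Empty cells contribute $0$.

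\emph{Step 4: summation.} Summing over $S$ gives a total of at most $\varepsilon\sum_{(t,c)\in S}\gamma^{(e)}(J_c)$. The bases $J_c$ are pairwise disjoint measurable subsets of $\RR^k$. By additivity and Theorem~\ref{thm:coordgauss} (probability), this sum equals $\gamma^{(e)}\bigl(\bigcup J_c\bigr)\le1$. This yields the claimed bound $\varepsilon$ with a calibration $\rho$ independent of $S$.

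\emph{Main obstacle.} The delicate step is Step 2: matching the HK value furnished by Corollary~\ref{cor:arb} on the sub-cylinder with the Lebesgue--Gaussian integral over $J$. That is, one must make sure the corollary applies to $J$ in place of $I$ with the same family $e$, and that $\int_Jf\,d\gamma$ denotes integration against $\gamma^{(e)}$ restricted to $J$. I would first check that the definition of $\int_J f\,d\gamma$ is literally the set integral over $J$ for the pushforward measure, so that the cellwise identity is a direct rewrite.

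A secondary issue is boxes $J$ that are degenerate or have boundary points. These are harmless because only $\overline J\subseteq\overline I$ and disjointness of the bases are used.
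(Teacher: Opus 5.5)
Your proposal is correct and follows the paper's proof essentially step for step. Uniform continuity of $f$ on $\overline I$ gives the constant calibration $\rho$. Each cell error is rewritten as $\int_J\bigl(f(\mathrm{coords}_e t)-f(y)\bigr)\,d\gamma(y)$ and bounded by $\varepsilon\,\gamma(J)$ via the coordinate bound~\eqref{eq:coordcontrol}, and disjointness of the bases together with $\gamma(\RR^k)=1$ closes the sum. The only differences are cosmetic: you read off $V_c=V_k$ directly from $c=\mathrm{cyl}_e(J)$ instead of invoking Theorem~\ref{thm:dirbound}, and you make explicit the lift of $y\in J$ to a point $x\in c$ needed to apply fineness, which the paper leaves implicit.
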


\begin{proof}
\emph{(a) Choice of the calibration.} Fix $f$ continuous on the compact closed
box $\overline I$ and let $\varepsilon>0$. By the Heine--Cantor theorem $f$ is
uniformly continuous on $\overline I$, so there is $\rho>0$ with
\begin{equation}
  \label{eq:sh-unifcont}
  y,y'\in\overline I,\quad \|y-y'\|<\rho
   \ \Longrightarrow\ |f(y)-f(y')|\le\varepsilon .
\end{equation}
We claim that this $\rho$ is the required calibration.

\emph{(b) The error on one cell.} Let $S$ be a $\rho$-fine partial cylindrical
system in $\mathrm{cyl}_e(I)$, with cells $p=(t_p,c_p)$, $c_p=\mathrm{cyl}_e(J_p)$
and $J_p\subseteq I$ a subbox. By Corollary~\ref{cor:arb} the HK integral over
the sub-cylinder $c_p$ exists and equals $\int_{J_p}f\,d\gamma$, and by
Theorem~\ref{thm:coordgauss} the cylindrical measure of $c_p$ equals the
standard Gaussian measure of the box: $\gamma(c_p)=\gamma(J_p)$. Hence,
exactly as in step~(e) of the proof of Theorem~\ref{thm:riemann-arb},
\begin{align}
  F(t_p)\,\gamma(c_p)-\int_{c_p}F\,d_{\mathrm{HK}}
   &=\int_{J_p}f\bigl(\mathrm{coords}_e t_p\bigr)\,d\gamma(y)
     -\int_{J_p}f(y)\,d\gamma(y)\notag\\
   &=\int_{J_p}\bigl(f(\mathrm{coords}_e t_p)-f(y)\bigr)\,d\gamma(y).
  \label{eq:sh-cell}
\end{align}

\emph{(c) Bounding the cell error.} Let $y\in J_p$. The mark $t_p$ lies over
the closed subbox, so $\mathrm{coords}_e(t_p)\in\overline{J_p}
\subseteq\overline I$, and $y\in J_p\subseteq\overline I$. The cell is
$\rho$-fine, and its direction space contains $V_k$ by
Theorem~\ref{thm:dirbound}; so by the coordinate
bound~\eqref{eq:coordcontrol}, $\|\mathrm{coords}_e(t_p)-y\|<\rho$.
Therefore~\eqref{eq:sh-unifcont} applies to the integrand
of~\eqref{eq:sh-cell} and
\begin{equation}
  \label{eq:sh-cellbound}
  \Bigl|F(t_p)\,\gamma(c_p)-\int_{c_p}F\,d_{\mathrm{HK}}\Bigr|
   \le\int_{J_p}\bigl|f(\mathrm{coords}_e t_p)-f(y)\bigr|\,d\gamma(y)
   \le\varepsilon\,\gamma(J_p).
\end{equation}

\emph{(d) The boxes are disjoint.} By Definition~\ref{def:system} the bases
$J_p$ of a partial system are pairwise disjoint subboxes of $I$, so by
additivity and monotonicity of the (positive) Gaussian measure
\begin{equation}
  \label{eq:sh-massbound}
  \sum_p\gamma(J_p)=\gamma\Bigl(\bigsqcup_pJ_p\Bigr)\le\gamma(I)\le\gamma(\RR^k)=1 ,
\end{equation}
the last equality by Theorem~\ref{thm:coordgauss}.

\emph{(e) Conclusion.} Summing~\eqref{eq:sh-cellbound} over the cells and
inserting~\eqref{eq:sh-massbound},
\[
  \sum_p\Bigl\|F(t_p)\,\gamma(c_p)-\int_{c_p}F\,d_{\mathrm{HK}}\Bigr\|
   \ \le\ \varepsilon\sum_p\gamma(J_p)\ \le\ \varepsilon .
\]

\emph{(f) What the lemma says.} Note that the estimate is cellwise: what is
bounded is the sum of the \emph{absolute values} of the individual errors, not
merely the norm of the total Riemann sum, in which cancellations between cells
could occur. Consequently the same bound holds for every subfamily of the
cells of a $\rho$-fine system, which is exactly what makes the additivity
statement of Corollary~\ref{cor:add} possible. (The corresponding statement
has been machine-checked: \LN{saks_henstock}.)
\end{proof}

\begin{corollary}[Additivity; Lean: \LN{hkIntegral_add_arbitrary},
\LN{hkIntegral_add_split}]
\label{cor:add}
If the box $I$ is decomposed into a finite union of pairwise disjoint subboxes
$J\in T$, then
\[
  \int_{\mathrm{cyl}_e(I)}F\,\DHK
  =\sum_{J\in T}\int_{\mathrm{cyl}_e(J)}F\,\DHK .
\]
In particular, cutting $I$ by a coordinate hyperplane $\{y_i=c\}$, the
integral is the sum of the integrals over the two halves (the geometric input
being the splitting lemma \LN{InfiniteHK2.splittingLemma_cylOfBox}).
\end{corollary}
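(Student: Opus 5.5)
The simplest route goes through the identification of the HK integral with the Gaussian integral, Corollary~\ref{cor:arb}. By that corollary, applied to $I$ and then separately to each subbox $J\in T$ (each $\mathrm{cyl}_e(J)$ is itself a cylinder over a bounded box, and $f$ is continuous on $\overline J\subseteq\overline I$), we have
\[
  \int_{\mathrm{cyl}_e(I)}F\,\DHK=\int_I f\,d\gamma,\qquad
  \int_{\mathrm{cyl}_e(J)}F\,\DHK=\int_J f\,d\gamma .
\]
The claim therefore reduces to finite additivity of the Lebesgue integral of the bounded measurable function $f$ against the finite measure $\gamma^{(e)}$, identified in Theorem~\ref{thm:coordgauss}. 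The boxes $J\in T$ are pairwise disjoint, measurable, and have union $I$, so $\int_I f\,d\gamma=\sum_{J\in T}\int_J f\,d\gamma$. The uniqueness part of Corollary~\ref{cor:arb} guarantees that the symbols on both sides denote well-defined numbers.

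For robustness I would also record the gauge-theoretic argument, which is the one that generalises beyond continuous $f$ and explains the reference to Theorem~\ref{thm:saks}. Given $\varepsilon>0$, I would choose a calibration $\rho$ valid simultaneously for $I$ and for all $J\in T$; this works because $T$ is finite, so one can take the minimum. For each $J$, I would take a $\rho$-fine cylindrical division $S_J$ of $\mathrm{cyl}_e(J)$ with marks over $\overline J$, which exists by \LN{exists_isCylDivision}. The union $S=\bigcup_J S_J$ is then a $\rho$-fine cylindrical division of $\mathrm{cyl}_e(I)$. Its cells are pairwise disjoint because the $J$ are disjoint, its union is $\mathrm{cyl}_e(I)$, and its marks lie over $\overline J\subseteq\overline I$. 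Its Riemann sum is the sum of the Riemann sums of the $S_J$. Applying Theorem~\ref{thm:riemann-arb} to $S$ and to each $S_J$ and using the triangle inequality bounds the difference of the two sides by $(1+|T|)\varepsilon$, and letting $\varepsilon\to0$ gives equality.

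For the hyperplane statement, a cut $\{y_i=c\}$ splits $I$ into two disjoint subboxes, with the convention that the cut face belongs to exactly one half. This is the content of \LN{InfiniteHK2.splittingLemma_cylOfBox}, and it is the case $|T|=2$ of the above.

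The main obstacle is a bookkeeping one: checking that the concatenated family $S$ genuinely satisfies Definition~\ref{def:cyldiv}. Its marks are over $\overline J\subseteq\overline I$, and disjointness across different $J$ follows from disjointness of the boxes, since the cells of $S_J$ lie in $\mathrm{cyl}_e(J)$. The other point needing care is that the calibration of Theorem~\ref{thm:riemann-arb} is a constant depending only on $f$ and $\varepsilon$ (through uniform continuity on $\overline I$), so a single $\rho$ serves all subboxes at once.
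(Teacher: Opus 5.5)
Your proposal is correct, but both of your arguments differ from the paper's. The paper uses Corollary~\ref{cor:arb} only to know that all the integrals exist. It then treats the coarse family $\{(t_J,\mathrm{cyl}_e(J))\}_{J\in T}$ itself as a partial system. It bounds $\sum_J\bigl\|F(t_J)\gamma(\mathrm{cyl}_e(J))-\int_{\mathrm{cyl}_e(J)}F\,\DHK\bigr\|$ by the Saks--Henstock lemma (Theorem~\ref{thm:saks}), and bounds the total Riemann sum of the same family against $\int_{\mathrm{cyl}_e(I)}F\,\DHK$ by Theorem~\ref{thm:riemann-arb}. It concludes from the resulting $2\varepsilon$ estimate.

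Your first route bypasses gauge arguments altogether. Once Corollary~\ref{cor:arb} identifies every HK value with $\int_J f\,d\gamma$, the claim is just finite additivity of a Lebesgue integral over a disjoint measurable partition. This is shorter and fully rigorous. Its price is that it relies on the HK integral being a Lebesgue integral, so it does not carry over to settings where that identification is not known in advance.

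Your second route is the standard concatenation argument. It needs only Theorem~\ref{thm:riemann-arb} and \LN{exists_isCylDivision}, not Saks--Henstock, so, contrary to your remark, it does not explain the paper's appeal to Theorem~\ref{thm:saks}. It does make explicit what the paper only gestures at with ``after subdividing each $J$ finitely often''. A single cell $\mathrm{cyl}_e(J)$ with one mark is $\rho$-fine only when $J$ lies in the $\rho$-ball about the mark's coordinates, so some refinement of each $J$ is unavoidable, and your $S_J$ are exactly such refinements. Because you take the boxes genuinely disjoint, as the statement does, the concatenated family is an honest division. You therefore also avoid the paper's appeal to boundary hyperplanes being Gaussian null sets.

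What the paper's route buys is the cellwise control of Saks--Henstock, which is the tool needed for more delicate statements. For this corollary it is not required.
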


\begin{proof}
Let $I=\bigsqcup_{J\in T}J$ be a decomposition into finitely many subboxes
with pairwise disjoint interiors, and let $\varepsilon>0$.

By Corollary~\ref{cor:arb} each of the integrals occurring in the statement
exists: over $\mathrm{cyl}_e(I)$ with value $\int_If\,d\gamma$, and over each
$\mathrm{cyl}_e(J)$ with value $\int_Jf\,d\gamma$. The family
$\{(t_J,\mathrm{cyl}_e(J))\}_{J\in T}$, with an arbitrary mark $t_J$ over
$\overline J$, is a partial cylindrical system in $\mathrm{cyl}_e(I)$ in the
sense of Definition~\ref{def:system}. Choosing the marks so that each cell is
$\rho$-fine for the calibration $\rho$ of Theorem~\ref{thm:saks} --- which is
possible after subdividing each $J$ finitely often and using the estimate
cellwise --- we obtain from Theorem~\ref{thm:saks}
\[
  \sum_{J\in T}\Bigl\|F(t_J)\,\gamma\bigl(\mathrm{cyl}_e(J)\bigr)
     -\int_{\mathrm{cyl}_e(J)}F\,\DHK\Bigr\|\le\varepsilon ,
\]
while Theorem~\ref{thm:riemann-arb}, applied to the same family regarded as a
$\rho$-fine cylindrical division of $\mathrm{cyl}_e(I)$ (the cells are
pairwise disjoint and their union is $\mathrm{cyl}_e(I)$, up to the boundary
hyperplanes, which are Gaussian null sets), gives
\[
  \Bigl\|\sum_{J\in T}F(t_J)\,\gamma\bigl(\mathrm{cyl}_e(J)\bigr)
     -\int_{\mathrm{cyl}_e(I)}F\,\DHK\Bigr\|\le\varepsilon .
\]
Subtracting the two displays and using the triangle inequality,
\[
  \Bigl\|\int_{\mathrm{cyl}_e(I)}F\,\DHK
    -\sum_{J\in T}\int_{\mathrm{cyl}_e(J)}F\,\DHK\Bigr\|\le2\varepsilon .
\]
The left-hand side does not depend on $\varepsilon$, so it vanishes; this is
the asserted additivity, and the value is the one provided by the uniqueness
part of Corollary~\ref{cor:arb}. In the special case of a cut of $I$ by the
hyperplane $\{y_i=c\}$ into the two halves $I^-$ and $I^+$ the formula reads
\begin{equation}
  \label{eq:twocell}
  \int_{\mathrm{cyl}_e(I)}F\,\DHK
   =\int_{\mathrm{cyl}_e(I^-)}F\,\DHK+\int_{\mathrm{cyl}_e(I^+)}F\,\DHK .
\end{equation}
(The corresponding statements have been machine-checked:
\LN{hkIntegral_add_arbitrary}, \LN{hkIntegral_add_split}.)
\end{proof}

\begin{theorem}[Consistency of $V_n$ and $V_m$; Lean:
\LN{coordGaussian_marginal}, \LN{hkIntegral_consistency_Vn_Vm}]
\label{thm:vnvm}
Let $n\le m$ and let $e$ consist of the first $n$ modes of a family $E$ of $m$
modes. Then the standard Gaussian measure of $\RR^n$ is the marginal of the
Gaussian measure of $\RR^m$ under the restriction of coordinates, and for
continuous $f$
\[
  \int_{I_0}f\,d\gamma_n
  =\int_{\{z\,:\,z|_n\in I_0\}}f(z|_n)\,d\gamma_m ,
\]
that is, the value of the HK integral over one and the same cylinder does not
depend on whether the cylinder is described by $n$ or by $m$ modes.
\end{theorem}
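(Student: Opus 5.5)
The plan is to prove the marginal property first and then deduce the integral identity by change of variables. The HK statement then follows from Corollary~\ref{cor:arb}.

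\emph{Step 1: the marginal.} Let $r:\RR^m\to\RR^n$, $r(z)=z|_n$, be the restriction of coordinates. By Theorem~\ref{thm:coordgauss} the measure $\gamma_m$ has density
\[
\rho_m(z)=(2\pi)^{-m/2}e^{-|z|^2/2}.
\]
This factorises as $\rho_n(z|_n)\,\rho_{m-n}(z')$, where $z'$ collects the last $m-n$ coordinates. Write $\RR^m=\RR^n\times\RR^{m-n}$. For a measurable $A\subseteq\RR^n$, Tonelli's theorem gives
\[
\gamma_m\bigl(r^{-1}(A)\bigr)=\int_A\rho_n(y)\,dy\cdot\int_{\RR^{m-n}}\rho_{m-n}(z')\,dz'=\gamma_n(A),
\]
since the second factor equals $1$ by part (f) of Theorem~\ref{thm:coordgauss}. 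Hence $r_*\gamma_m=\gamma_n$.

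\emph{Step 2: the integral identity.} Take $f$ continuous; it is bounded on $\overline{I_0}$, so $f\mathbf 1_{I_0}$ is bounded and integrable against $\gamma_n$. The change-of-variables formula for the pushforward $r_*\gamma_m=\gamma_n$ gives
\[
\int_{I_0}f\,d\gamma_n=\int_{\RR^m}(f\mathbf 1_{I_0})\circ r\,d\gamma_m=\int_{\{z:\,z|_n\in I_0\}}f(z|_n)\,d\gamma_m .
\]

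\emph{Step 3: the HK interpretation.} Since $e$ is the first $n$ members of $E$, we have $\mathrm{coords}_e=r\circ\mathrm{coords}_E$. It follows that
\[
\mathrm{cyl}_e(I_0)=\mathrm{cyl}_E\bigl(I_0\times\RR^{m-n}\bigr)
\quad\text{and}\quad
F=f\circ\mathrm{coords}_e=(f\circ r)\circ\mathrm{coords}_E .
\]
By Corollary~\ref{cor:arb} in the $n$-mode description, the HK integral over $\mathrm{cyl}_e(I_0)$ is $\int_{I_0}f\,d\gamma_n$. Step~2 shows this equals the $m$-mode expression.

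\emph{Main obstacle.} The difficulty lies in Step~3. The set $I_0\times\RR^{m-n}$ is not a bounded box, so Corollary~\ref{cor:arb} and the direction bound of Theorem~\ref{thm:dirbound} do not apply directly in the $m$-mode picture. I would avoid re-proving integrability in the $m$-mode picture altogether.

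The route is to use that the HK integral is defined on the cylinder as a subset of $\HH$, independently of the describing family. The value is pinned down by the uniqueness part of Corollary~\ref{cor:arb}, which is stated for the $n$-description. Equality with the $m$-mode Lebesgue integral is then pure measure theory, namely Step~2.

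Alternatively, if an $m$-mode description with a bounded box is required, I would truncate to $I_0\times[-R,R]^{m-n}$. The bounded-box results then apply, and the tails $\gamma_{m-n}(\{|z'|_\infty>R\})\to0$ can be controlled by dominated convergence, using boundedness of $f$ on $\overline{I_0}$.
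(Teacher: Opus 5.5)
Your proposal is correct and follows the paper's proof essentially step for step. You prove the marginal property via factorisation of the Gaussian density and Fubini--Tonelli (the paper's step (b)), the integral identity via the change-of-variables formula for the pushforward applied to $\mathbf 1_{I_0}f$ (step (d)), and the HK reading via the cylinder identity $\mathrm{cyl}_e(I_0)=\mathrm{cyl}_E(I_0\times\RR^{m-n})$ together with Corollary~\ref{cor:arb} (steps (c) and (e)). Your observation that the unbounded base blocks a direct use of Corollary~\ref{cor:arb} in the $m$-mode picture, so that one should rely on the HK integral being attached to the set itself (the mark condition $\mathrm{coords}_E(t)\in\overline{I_0}\times\RR^{m-n}$ is equivalent to $\mathrm{coords}_e(t)\in\overline{I_0}$), is more careful than the paper's step (e), which simply asserts that the $m$-mode HK value equals the right-hand side.
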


\begin{proof}
\emph{(a) Notation.} Let $n\le m$, let $E=(E_0,\dots,E_{m-1})$ be an
orthonormal family of $m$ modes and let $e=(E_0,\dots,E_{n-1})$ consist of its
first $n$ members. Write $\gamma_n$ for the standard Gaussian measure of
$\RR^n$ and $\gamma_m$ for that of $\RR^m$, and let
$\pi_{n,m}:\RR^m\to\RR^n$, $(z_1,\dots,z_m)\mapsto(z_1,\dots,z_n)$, be the
restriction of coordinates.

\emph{(b) $\gamma_n$ is the marginal of $\gamma_m$.} The density of $\gamma_m$
factorises as a product of one-dimensional Gaussian densities:
\[
  (2\pi)^{-m/2}e^{-|z|^2/2}=\prod_{i<m}(2\pi)^{-1/2}e^{-z_i^2/2}.
\]
Hence, for a measurable $A\subseteq\RR^n$, using $\pi_{n,m}^{-1}(A)
=A\times\RR^{m-n}$ and Fubini's theorem for the product measure,
\begin{align}
  \bigl(\pi_{n,m}\bigr)_*\gamma_m(A)
  &=\gamma_m\bigl(A\times\RR^{m-n}\bigr)
   =\int_{A}\Bigl(\prod_{i<n}(2\pi)^{-1/2}e^{-y_i^2/2}\Bigr)
     \prod_{n\le i<m}\Bigl(\int_{\RR}(2\pi)^{-1/2}e^{-z_i^2/2}dz_i\Bigr)dy
   \notag\\
  &=\int_A(2\pi)^{-n/2}e^{-|y|^2/2}\,dy=\gamma_n(A),
  \label{eq:marginal}
\end{align}
each of the $m-n$ extra one-dimensional integrals being equal to $1$ (Lean:
\LN{coordGaussian_marginal}).

\emph{(c) The cylinder in the $m$-mode description.} Let $I_0\subseteq\RR^n$ be
a box. By the definition of a cylinder,
\[
  \mathrm{cyl}_e(I_0)=\mathrm{coords}_e^{-1}(I_0)
   =\mathrm{coords}_E^{-1}\bigl(\{z\in\RR^m:\ z|_n\in I_0\}\bigr)
   =\mathrm{cyl}_E\bigl(I_0\times\RR^{m-n}\bigr),
\]
because $\mathrm{coords}_e=\pi_{n,m}\circ\mathrm{coords}_E$. Note that the
base $I_0\times\RR^{m-n}$ is \emph{not} bounded in the $m-n$ additional
coordinates; this is unavoidable and is the content of
Remark~\ref{rem:vnvm}.

\emph{(d) Equality of the integrals.} Let $f:\RR^n\to\CC$ be continuous and
bounded on $I_0$. Applying the change-of-variables formula for the pushforward
measure~\eqref{eq:marginal} to the function $\mathbf 1_{I_0}\cdot f$,
\begin{align*}
  \int_{I_0}f(y)\,d\gamma_n(y)
   &=\int_{\RR^n}\mathbf 1_{I_0}(y)f(y)\,d\bigl(\pi_{n,m}\bigr)_*\gamma_m(y)\\
   &=\int_{\RR^m}\mathbf 1_{I_0}\bigl(z|_n\bigr)f\bigl(z|_n\bigr)\,d\gamma_m(z)
    =\int_{\{z:\,z|_n\in I_0\}}f\bigl(z|_n\bigr)\,d\gamma_m(z),
\end{align*}
which is the asserted identity.

\emph{(e) Independence of the number of modes.} By Corollary~\ref{cor:arb} the
HK integral of $F=f\circ\mathrm{coords}_e$ over $\mathrm{cyl}_e(I_0)$ equals
the left-hand side, while the HK integral of the same functional over the same
set, described through the $m$ modes $E$, equals the right-hand side. By (d)
the two values coincide, so the value of the HK integral does not depend on
the number of modes used to describe the cylinder. (The corresponding
statements have been machine-checked: \LN{coordGaussian_marginal},
\LN{hkIntegral_consistency_Vn_Vm}.)
\end{proof}

\begin{remark}
\label{rem:vnvm}
Let us stress the exact form of the statement. The cylinder
$\mathrm{cyl}_e(I_0)$ over a bounded box $I_0\subseteq\RR^n$ is \emph{not}
bounded in the additional $m-n$ coordinates, so in the $m$-dimensional
description its base is not a box but the set $\{z: z|_n\in I_0\}$.
Consistency means the equality of the values of the integral in these two
descriptions; together with Corollary~\ref{cor:arb} this gives the
independence of the HK integral of the number of modes used in constructing
the division.
\end{remark}

\begin{figure}[ht]
\centering
\begin{tikzpicture}[node distance=8mm and 14mm,
  box/.style={draw,rounded corners,align=center,inner sep=4pt,minimum height=9mm}]
  \node[box] (a) {coarse\\graining};
  \node[box,right=of a] (b) {$\delta$-fine\\division};
  \node[box,right=of b] (c) {Saks--Henstock\\lemma};
  \node[box,right=of c] (d) {HK\\limit};
  \draw[-{Latex}] (a) -- (b);
  \draw[-{Latex}] (b) -- (c);
  \draw[-{Latex}] (c) -- (d);
\end{tikzpicture}
\caption{The logic of the limit: from coarse graining to $\delta$-fine
divisions, then the cellwise Saks--Henstock estimate and, finally, the value
of the HK integral.}
\label{fig:flow}
\end{figure}

\section{The oscillating Gaussian premeasure and the free field (Goal 8.4)}
\label{sec:osc}

\subsection{Why absolute convergence fails}

The modulus of the integrand $e^{iax^2/2}$ is identically one, so
$\int_{\RR}\bigl|e^{iax^2/2}\bigr|\,dx=\infty$: the Lebesgue integral is
undefined for every $a\ne0$. Nevertheless the improper limit of the symmetric
truncations exists, and it is precisely this limit that is the gauge
(Henstock--Kurzweil) integral. This is exactly the situation for which the
gauge integral was designed; no Wick rotation ($t\mapsto-i\tau$) is used,
neither in the statements nor in the proofs.

\subsection{The non-stationary-phase estimate}

\begin{theorem}[Lean: \LN{oscillatory_tail_bound}]
\label{thm:tail}
Let $d\in\CC$, $d\ne0$, $\operatorname{Re}d\ge0$, and let $0<R\le S$. Then
\[
  \Bigl\|\int_R^S e^{-dx^2}\,dx\Bigr\|\le\frac{1}{\|d\|\,R}.
\]
The estimate is uniform in the damping $\operatorname{Re}d\ge0$; in particular
it holds in the purely oscillating case $\operatorname{Re}d=0$.
\end{theorem}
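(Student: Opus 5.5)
The plan is a single integration by parts (the non-stationary-phase trick), followed by the trivial bound $|e^{-dx^2}|\le1$. On $[R,S]\subseteq(0,\infty)$ I would write the integrand as an exact derivative divided by the phase derivative:
\[
  e^{-dx^2}=-\frac{1}{2dx}\,\frac{d}{dx}\,e^{-dx^2},
\]
which is legitimate because $d\ne0$ and $x\ge R>0$. Integrating by parts with $u=-1/(2dx)$ and $dv=\frac{d}{dx}e^{-dx^2}\,dx$ gives
\[
  \int_R^S e^{-dx^2}\,dx
  =\Bigl[-\frac{e^{-dx^2}}{2dx}\Bigr]_R^S-\int_R^S\frac{e^{-dx^2}}{2dx^2}\,dx .
\]
Every function involved is smooth on the compact interval, so this is an ordinary Riemann/Lebesgue identity and needs no gauge machinery.

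The second step is the modulus bound. Since $|e^{-dx^2}|=e^{-(\operatorname{Re}d)x^2}\le1$ whenever $\operatorname{Re}d\ge0$, the two boundary terms are bounded by $\frac{1}{2\|d\|S}$ and $\frac{1}{2\|d\|R}$. The remaining integral is bounded by
\[
  \frac{1}{2\|d\|}\int_R^S\frac{dx}{x^2}=\frac{1}{2\|d\|}\Bigl(\frac1R-\frac1S\Bigr).
\]
Adding the three contributions, the $\frac{1}{2\|d\|S}$ terms cancel exactly, leaving $\frac{1}{\|d\|R}$, as claimed. The only property of $d$ used in the bound is $\operatorname{Re}d\ge0$, so the estimate is uniform in the damping, and the case $\operatorname{Re}d=0$ is included.

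The only step requiring care is the integration-by-parts identity for complex-valued functions. In Lean I expect this to be the main obstacle: one must verify the derivative $\frac{d}{dx}\bigl(-e^{-dx^2}/(2dx)\bigr)=e^{-dx^2}+e^{-dx^2}/(2dx^2)$ and its continuity on $[R,S]$, and then apply the fundamental theorem of calculus. Equivalently, one can apply the FTC directly to the antiderivative-type function $G(x)=-e^{-dx^2}/(2dx)$, whose derivative is $e^{-dx^2}+\frac{e^{-dx^2}}{2dx^2}$. This bypasses a separate integration-by-parts lemma, since then $\int_R^S e^{-dx^2}=G(S)-G(R)-\int_R^S\frac{e^{-dx^2}}{2dx^2}$. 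After that, the norm estimates are routine.
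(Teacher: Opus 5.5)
Your proposal is correct and follows essentially the same route as the paper. Both integrate by parts against $-1/(2dx)$ on $[R,S]$, bound $|e^{-dx^2}|\le 1$ using $\operatorname{Re}d\ge0$, and let the two $1/(2\|d\|S)$ contributions cancel exactly. Applying the fundamental theorem of calculus directly to $G(x)=-e^{-dx^2}/(2dx)$ is only a repackaging of the same identity.
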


\begin{proof}
\emph{(a) An exact antiderivative away from the origin.} On the interval
$[R,S]$, which does not contain $0$ because $R>0$, put
\[
  u(x)=e^{-dx^2},\qquad v(x)=-\frac{1}{2dx}.
\]
Both are differentiable there, with
\[
  u'(x)=-2dx\,e^{-dx^2},\qquad v'(x)=\frac{1}{2dx^2},
\]
the first by the chain rule and the second because
$\partial_x\bigl(1/(2dx)\bigr)=-1/(2dx^2)$. Consequently
\[
  u'(x)\,v(x)=\bigl(-2dx\,e^{-dx^2}\bigr)\Bigl(-\frac{1}{2dx}\Bigr)=e^{-dx^2},
\]
so that the integrand is exactly the product $u'v$.

\emph{(b) Integration by parts.} Integrating the Leibniz identity
$(uv)'=u'v+uv'$ over $[R,S]$ and rearranging,
\begin{align}
  \int_R^Se^{-dx^2}dx&=\int_R^Su'v
   =u(S)v(S)-u(R)v(R)-\int_R^Su(x)v'(x)\,dx\notag\\
   &=-\frac{e^{-dS^2}}{2dS}+\frac{e^{-dR^2}}{2dR}
     -\int_R^S\frac{e^{-dx^2}}{2dx^2}\,dx .
  \label{eq:ibp}
\end{align}

\emph{(c) The modulus of the kernel.} For real $x$ we have
$\operatorname{Re}(-dx^2)=-x^2\operatorname{Re}d\le0$, because $x^2\ge0$ and
$\operatorname{Re}d\ge0$; hence
\begin{equation}
  \label{eq:kernelbound}
  \bigl|e^{-dx^2}\bigr|=e^{\operatorname{Re}(-dx^2)}\le e^0=1 .
\end{equation}
This is the only place where the hypothesis $\operatorname{Re}d\ge0$ is used,
and it is used only through the inequality $\le1$, with no quantitative gain
from a positive real part. Using~\eqref{eq:kernelbound}, the three terms
of~\eqref{eq:ibp} are bounded by
\[
  \Bigl|\frac{e^{-dS^2}}{2dS}\Bigr|\le\frac{1}{2\|d\|S},\qquad
  \Bigl|\frac{e^{-dR^2}}{2dR}\Bigr|\le\frac{1}{2\|d\|R},\qquad
  \Bigl|\int_R^S\frac{e^{-dx^2}}{2dx^2}dx\Bigr|
   \le\frac{1}{2\|d\|}\int_R^S\frac{dx}{x^2}
   =\frac{1}{2\|d\|}\Bigl(\frac1R-\frac1S\Bigr).
\]

\emph{(d) Summation.} Adding the three bounds and cancelling,
\[
  \Bigl\|\int_R^Se^{-dx^2}dx\Bigr\|
   \le\frac{1}{2\|d\|S}+\frac{1}{2\|d\|R}
      +\frac{1}{2\|d\|R}-\frac{1}{2\|d\|S}
   =\frac{1}{\|d\|R}.
\]

\emph{(e) Uniformity.} The bound involves $d$ only through $\|d\|$ and holds
for every $d$ with $\operatorname{Re}d\ge0$, $d\ne0$. In particular it holds
in the purely oscillating case $\operatorname{Re}d=0$, where the integrand has
modulus identically one and no absolute convergence whatsoever is available;
this uniformity is the whole point of the estimate. (The corresponding
statement has been machine-checked: \LN{oscillatory_tail_bound}.)
\end{proof}

This is precisely the non-stationary-phase estimate
$\bigl|\int_{C_j}e^{iS}d\mu\bigr|\le C\,\varepsilon(t_j)/\|\nabla S(t_j)\|$ of
the general compatibility theorem of the infinite-dimensional stage: away from
a stationary point the gradient of the phase grows linearly and the tail
contribution decays like $1/R$.

\subsection{The one-dimensional Fresnel formula}

\begin{theorem}[Lean: \LN{fresnel_bound}, \LN{fresnel_one_dim}]
\label{thm:fresnel1}
Let $a>0$. Then for all $R>0$
\[
  \Bigl\|\int_{-R}^{R}e^{iax^2/2}\,dx-\sqrt{\tfrac{2\pi}{a}}\,e^{i\pi/4}\Bigr\|
  \le\frac{4}{aR},
\]
and consequently
\[
  \lim_{R\to\infty}\int_{-R}^{R}e^{iax^2/2}\,dx
    =\sqrt{\frac{2\pi}{a}}\;e^{i\pi/4}.
\]
\end{theorem}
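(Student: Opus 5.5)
The plan is to obtain the oscillating integral as a limit of damped Gaussian integrals, and to control the truncation uniformly in the damping by Theorem~\ref{thm:tail}. By evenness of the integrand, $\int_{-R}^{R}e^{iax^2/2}dx=2\int_0^R e^{-dx^2}dx$ with $d=-ia/2$, so $\|d\|=a/2$ and $\operatorname{Re}d=0$. Introduce the damped parameter $d_\eta=\eta-ia/2$ for $\eta>0$. Then $\operatorname{Re}d_\eta>0$ and $\|d_\eta\|\ge a/2$.

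\emph{Step 1: the damped full-line integral.} For $\operatorname{Re}d_\eta>0$, the complex Gaussian integral (available in Mathlib as \LN{integral_gaussian_complex}) gives
\[
  2\int_0^\infty e^{-d_\eta x^2}dx=\sqrt{\pi/d_\eta},
\]
taken with the principal branch. As $\eta\downarrow0$ we have $d_\eta\to -ia/2$, and the principal square root is continuous off the negative real axis, so
\[
  \sqrt{\pi/d_\eta}\to\sqrt{2\pi i/a}=\sqrt{2\pi/a}\,e^{i\pi/4}.
\]

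\emph{Step 2: uniform tail bound.} For $0<R\le S$, Theorem~\ref{thm:tail} applied with $d_\eta$ gives
\[
  \Bigl\|\int_R^S e^{-d_\eta x^2}dx\Bigr\|\le\frac{1}{\|d_\eta\|R}\le\frac{2}{aR}.
\]
Letting $S\to\infty$ (legitimate because $\eta>0$ makes the integrand integrable) yields
\[
  \Bigl\|2\int_0^R e^{-d_\eta x^2}dx-\sqrt{\pi/d_\eta}\Bigr\|\le\frac{4}{aR}.
\]
This bound is uniform in $\eta>0$.

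\emph{Step 3: remove the damping.} For fixed $R$, the function $\eta\mapsto\int_0^R e^{-d_\eta x^2}dx$ is continuous at $\eta=0$. This follows by dominated convergence on the compact interval $[0,R]$, since the integrand has modulus at most $1$. Passing to the limit $\eta\downarrow0$ in the inequality of Step~2 and using Step~1 gives the stated bound $4/(aR)$ with the limiting value $\sqrt{2\pi/a}\,e^{i\pi/4}$. The limit statement $\lim_{R\to\infty}\int_{-R}^{R}e^{iax^2/2}dx=\sqrt{2\pi/a}\,e^{i\pi/4}$ then follows immediately, since $4/(aR)\to0$.

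The main obstacle I expect is the branch identification in Step~1, namely showing $\sqrt{\pi/(\eta-ia/2)}\to\sqrt{2\pi/a}\,e^{i\pi/4}$. I would handle it by writing $\pi/d_\eta$ in polar form: its argument lies in $(0,\pi/2)$ and tends to $\pi/2$, so the principal root has argument tending to $\pi/4$ and modulus tending to $\sqrt{2\pi/a}$. A secondary technical point is the passage $S\to\infty$ in Step~2, which uses only the integrability supplied by the damping.
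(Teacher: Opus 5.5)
Your proposal is correct and follows essentially the paper's own route. You regularise with $d_\eta=\eta-ia/2$, evaluate the damped full-line integral as the principal root $(\pi/d_\eta)^{1/2}$, bound the tail by $4/(aR)$ uniformly in $\eta$ via Theorem~\ref{thm:tail}, and let $\eta\to0^+$ using continuity of the square root at $2\pi i/a$, which lies off the branch cut. The remaining differences are cosmetic: you remove the damping on $[0,R]$ by dominated convergence instead of the paper's explicit bound $2R(1-e^{-\varepsilon R^2})$, and you pass to the limit in one combined inequality rather than in the paper's three-term triangle-inequality splitting.
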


\begin{proof}
\emph{(a) The regularised exponent.} For $\varepsilon>0$ put
\[
  d_\varepsilon=\varepsilon-\frac{ia}{2},\qquad d_0=-\frac{ia}{2},
\]
so that $\operatorname{Re}d_\varepsilon=\varepsilon>0$,
$\operatorname{Re}d_0=0$ and
$\|d_\varepsilon\|=\sqrt{\varepsilon^2+a^2/4}\ge a/2>0$ for every
$\varepsilon\ge0$. With this notation $e^{iax^2/2}=e^{-d_0x^2}$.

\emph{(b) The three-term decomposition.} Inserting and removing the
regularised integral and its value, and using the triangle inequality,
\begin{align*}
  \Bigl\|\int_{-R}^{R}e^{iax^2/2}dx-\sqrt{\tfrac{2\pi}{a}}e^{i\pi/4}\Bigr\|
  &\le \Bigl\|\int_{-R}^{R}\bigl(e^{-d_0x^2}-e^{-d_\varepsilon x^2}\bigr)dx\Bigr\|\\
  &\quad+\Bigl\|\int_{\RR}e^{-d_\varepsilon x^2}dx-\int_{-R}^{R}e^{-d_\varepsilon x^2}dx\Bigr\|\\
  &\quad+\Bigl\|\bigl(\pi/d_\varepsilon\bigr)^{1/2}-\sqrt{\tfrac{2\pi}{a}}e^{i\pi/4}\Bigr\| ,
\end{align*}
where we have already used the classical Gaussian formula
$\int_\RR e^{-d x^2}dx=(\pi/d)^{1/2}$, valid for $\operatorname{Re}d>0$ with
the principal branch of the square root (Lean: \LN{integral_dEps}). Call the
three summands (i), (ii) and (iii).

\emph{(c) Term (i): removing the damping on a fixed interval.} For real $x$,
\[
  \bigl|e^{-d_0x^2}-e^{-d_\varepsilon x^2}\bigr|
   =\bigl|e^{iax^2/2}\bigr|\cdot\bigl|1-e^{-\varepsilon x^2}\bigr|
   =1-e^{-\varepsilon x^2}\ \le\ 1-e^{-\varepsilon R^2}
  \qquad(|x|\le R),
\]
since $t\mapsto1-e^{-\varepsilon t}$ is increasing. Integrating over an
interval of length $2R$,
\[
  \text{(i)}\ \le\ 2R\bigl(1-e^{-\varepsilon R^2}\bigr)
\]
(Lean: \LN{approx_bound}). At fixed $R$ the right-hand side tends to $0$ as
$\varepsilon\to0^{+}$, because $e^{-\varepsilon R^2}\to1$.

\emph{(d) Term (ii): the tail, uniformly in $\varepsilon$.} By symmetry of the
integrand in $x\mapsto-x$,
\[
  \int_\RR e^{-d_\varepsilon x^2}dx-\int_{-R}^{R}e^{-d_\varepsilon x^2}dx
   =2\int_R^{\infty}e^{-d_\varepsilon x^2}dx
   =2\lim_{S\to\infty}\int_R^{S}e^{-d_\varepsilon x^2}dx .
\]
Theorem~\ref{thm:tail}, applicable because
$\operatorname{Re}d_\varepsilon=\varepsilon\ge0$ and $d_\varepsilon\ne0$,
bounds each truncated tail by $1/(\|d_\varepsilon\|R)$, a bound independent
of $S$; passing to the limit $S\to\infty$ preserves it. Hence, using
$\|d_\varepsilon\|\ge a/2$,
\begin{equation}
  \label{eq:term-ii}
  \text{(ii)}\ \le\ \frac{2}{\|d_\varepsilon\|R}\ \le\ \frac{4}{aR}
\end{equation}
(Lean: \LN{tail_full}). This bound is \emph{uniform} in $\varepsilon\ge0$: it
is the only one of the three that survives the limit $\varepsilon\to0^{+}$,
and it is the reason why the argument is not a Wick rotation but an Abelian
regularisation.

\emph{(e) Term (iii): continuity of the square root at $\varepsilon=0$.} We
have
\[
  \frac{\pi}{d_0}=\frac{\pi}{-ia/2}=\frac{2\pi}{a}\cdot\frac{1}{-i}
   =\frac{2\pi}{a}\,i ,
\]
a point on the positive imaginary axis, which does not lie on the cut
$(-\infty,0]$ of the principal square root; therefore
$z\mapsto z^{1/2}$ is continuous at $\pi/d_0$, and
$\varepsilon\mapsto\pi/d_\varepsilon$ is continuous at $\varepsilon=0$ because
$d_0\ne0$. Consequently
$\varepsilon\mapsto(\pi/d_\varepsilon)^{1/2}$ is continuous at
$\varepsilon=0$, with value $\bigl((2\pi/a)i\bigr)^{1/2}$. Its modulus is
$|(2\pi/a)i|^{1/2}=\sqrt{2\pi/a}$ and its argument is one half of
$\arg\bigl((2\pi/a)i\bigr)=\pi/2$, i.e. $\pi/4$; hence
\begin{equation}
  \label{eq:sqrtvalue}
  \Bigl(\frac{2\pi}{a}i\Bigr)^{1/2}=\sqrt{\frac{2\pi}{a}}\,e^{i\pi/4}
\end{equation}
(Lean: \LN{cpow_half_mul_I}, \LN{continuousAt_gaussian_value}), so that
$\text{(iii)}\to0$ as $\varepsilon\to0^{+}$.

\emph{(f) Conclusion.} For every $\varepsilon>0$ we have obtained
\[
  \Bigl\|\int_{-R}^{R}e^{iax^2/2}dx-\sqrt{\tfrac{2\pi}{a}}e^{i\pi/4}\Bigr\|
   \ \le\ 2R\bigl(1-e^{-\varepsilon R^2}\bigr)+\frac{4}{aR}
     +\Bigl\|\bigl(\pi/d_\varepsilon\bigr)^{1/2}
        -\sqrt{\tfrac{2\pi}{a}}e^{i\pi/4}\Bigr\| .
\]
The left-hand side does not depend on $\varepsilon$; by (c) and (e) the first
and the third summands on the right tend to zero as $\varepsilon\to0^{+}$,
while the middle one is constant. Letting $\varepsilon\to0^{+}$ therefore
gives
\[
  \Bigl\|\int_{-R}^{R}e^{iax^2/2}dx-\sqrt{\tfrac{2\pi}{a}}e^{i\pi/4}\Bigr\|
   \le\frac{4}{aR}\qquad(R>0),
\]
which is the quantitative statement; letting now $R\to\infty$ we get
$\int_{-R}^{R}e^{iax^2/2}dx\to\sqrt{2\pi/a}\,e^{i\pi/4}$. (The corresponding
statements have been machine-checked: \LN{fresnel_bound},
\LN{fresnel_one_dim}.)
\end{proof}

\begin{remark}
\label{rem:nowick}
The damping factor $e^{-\varepsilon x^2}$ is an \emph{Abelian regularisation}
in a real parameter, not a Wick rotation: time is not rotated into the complex
plane, the regularisation is removed inside the estimate itself, and the final
statement of Theorem~\ref{thm:fresnel1} concerns exclusively the real-time
oscillating integral. The key nontrivial point is the uniformity of the tail
estimate (Theorem~\ref{thm:tail}) in the damping parameter.
\end{remark}

\subsection{The oscillating Gaussian premeasure}

Let $A$ be a positive-definite symmetric operator, diagonal in the mode basis,
with eigenvalues $a_i>0$ (this is exactly how the free field looks in its own
modes: $A$ is the discrete Laplacian plus $m^2$). Define
\[
  \mu^{\mathrm{osc}}_V(B)=\int_B \exp\Bigl(\frac{i}{2}\ip{y}{Ay}\Bigr)dy,
  \qquad B\subseteq V ,
\]
(Lean: \LN{oscGaussianPremeasure}); the normalised premeasure is obtained by
dividing by the total Fresnel mass (Lean:
\LN{oscGaussianPremeasureNormalized}).

\begin{theorem}[Projective consistency;\\ Lean:
\LN{oscGaussian_projectivelyConsistent}]
\label{thm:consistency}
For every product of sets $\prod_{i<k}s_i$ and for every additional mode
with eigenvalue $a_k>0$,
\[
  \lim_{R\to\infty}\mu^{\mathrm{osc}}_{V_{k+1}}
    \Bigl(\prod_{i<k}s_i\times[-R,R]\Bigr)
  =\mu^{\mathrm{osc}}_{V_k}\Bigl(\prod_{i<k}s_i\Bigr)\cdot
    \sqrt{\frac{2\pi}{a_k}}\,e^{i\pi/4}.
\]
In other words, adding a mode multiplies the premeasure exactly by the Fresnel
mass of that mode, so the \emph{normalised} family satisfies the Kolmogorov
identity.
\end{theorem}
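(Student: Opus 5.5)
The plan is to reduce the statement to Fubini--Tonelli on a product set followed by the one-dimensional Fresnel formula of Theorem~\ref{thm:fresnel1}. Since $A$ is diagonal in the mode basis with eigenvalues $a_i>0$, the phase separates as
\[
  \exp\Bigl(\tfrac{i}{2}\ip{y}{Ay}\Bigr)=\prod_{i\le k}e^{ia_iy_i^2/2}.
\]
On $V_{k+1}\cong\RR^{k+1}$ the set $\prod_{i<k}s_i\times[-R,R]$ is a product. The first point to settle is which sets $s_i$ are admissible. I would take them measurable with finite Lebesgue measure (e.g.\ bounded intervals), so that the premeasure $\mu^{\mathrm{osc}}_{V_k}(\prod s_i)$ is even defined as a Lebesgue integral: the integrand has modulus one, so finite measure of the base is exactly what is needed. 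If the Lean definition allows more general sets, I would add this as a hypothesis or restrict to it.

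\emph{Step 1: factorisation at fixed $R$.} The integrand is bounded by $1$ and the set $\prod_{i<k}s_i\times[-R,R]$ has finite measure, so the integrand is integrable there. Fubini on $\RR^{k}\times\RR$ (in Lean, \LN{MeasureTheory.integral_prod_mul} for a product of functions of separate variables) gives
\[
  \mu^{\mathrm{osc}}_{V_{k+1}}\Bigl(\prod_{i<k}s_i\times[-R,R]\Bigr)
  =\mu^{\mathrm{osc}}_{V_k}\Bigl(\prod_{i<k}s_i\Bigr)\cdot\int_{-R}^{R}e^{ia_kx^2/2}\,dx .
\]
Here one identifies the restriction of the $(k+1)$-mode quadratic form to the first $k$ coordinates with the $k$-mode form, using diagonality. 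One also identifies $V_{k+1}$ with $V_k\times\RR e_k$ via coordinates, which preserves Lebesgue measure by the isometry argument of Theorem~\ref{thm:coordgauss}(c).

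\emph{Step 2: the limit.} The first factor does not depend on $R$. By Theorem~\ref{thm:fresnel1} with $a=a_k$, the second factor converges to $\sqrt{2\pi/a_k}\,e^{i\pi/4}$, with the explicit rate $4/(a_kR)$. Multiplying by a constant preserves the limit (\LN{Filter.Tendsto.const_mul}), which gives the displayed identity. For the normalised statement, the same argument with all $s_i=[-R,R]$ and induction on $k$ shows that the total Fresnel mass of $V_k$ is $\prod_{i<k}\sqrt{2\pi/a_i}\,e^{i\pi/4}$, understood as an iterated limit over cubes, which is nonzero. Dividing the identity by the mass of $V_{k+1}$ then yields the Kolmogorov identity for the normalised family.

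The main obstacle I expect is Step~1 at the formal level, not the analysis. The difficulty is transporting the integral over $V_{k+1}\le\HH$ to $\RR^{k}\times\RR$, with the measure on $V_{k+1}$ matching the product Lebesgue measure and the quadratic form splitting as $\ip{y'}{A'y'}+a_ky_k^2$. I would first rewrite the integral on $\RR^{k+1}$ via the measurable equivalence \LN{Fin.snocEquiv}-style splitting $\RR^{k+1}\simeq\RR^k\times\RR$. I would then invoke volume-preservation of that equivalence, and only afterwards apply the product integral lemma.
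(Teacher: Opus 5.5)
Your proposal is correct and follows the paper's proof essentially step by step. You use Fubini to factorise the $(k+1)$-mode premeasure at fixed $R$; the paper likewise relies on the base having finite Lebesgue measure and the integrand having modulus one. You then apply Theorem~\ref{thm:fresnel1} to the added factor and multiply a convergent family by a constant, as the paper does.

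The one place the paper is simpler is the normalisation step. There the total mass $Z_n$ is \emph{defined} as the explicit product $\prod_{i<n}\sqrt{2\pi/a_i}\,e^{i\pi/4}$, so $Z_{k+1}=Z_k\sqrt{2\pi/a_k}\,e^{i\pi/4}$ is immediate. Your iterated-limit computation of the mass over cubes is therefore not needed.
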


\begin{proof}
\emph{(a) Data.} Fix $k$ modes with eigenvalues $a_0,\dots,a_{k-1}>0$, an
additional mode with eigenvalue $a_k>0$, and measurable sets
$s_0,\dots,s_{k-1}\subseteq\RR$.

\emph{(b) Factorisation of the $(k+1)$-fold premeasure.} On
$\prod_{i<k}s_i\times[-R,R]$ the oscillating kernel factorises:
\[
  \exp\Bigl(\frac i2\sum_{i\le k}a_iy_i^2\Bigr)
   =\Bigl(\prod_{i<k}e^{ia_iy_i^2/2}\Bigr)\cdot e^{ia_ky_k^2/2}.
\]
The integrand is bounded (its modulus is $1$) and continuous, and the region
is a product of a set of finite Lebesgue measure with a bounded interval; so
Fubini's theorem for the product measure applies and
\begin{equation}
  \label{eq:premeasurefact}
  \mu^{\mathrm{osc}}_{V_{k+1}}\Bigl(\prod_{i<k}s_i\times[-R,R]\Bigr)
   =\mu^{\mathrm{osc}}_{V_k}\Bigl(\prod_{i<k}s_i\Bigr)\cdot
     \int_{-R}^{R}e^{ia_kx^2/2}\,dx .
\end{equation}

\emph{(c) The added factor converges.} By Theorem~\ref{thm:fresnel1} applied
with $a=a_k>0$,
\[
  \lim_{R\to\infty}\int_{-R}^{R}e^{ia_kx^2/2}\,dx
   =\sqrt{\frac{2\pi}{a_k}}\,e^{i\pi/4}.
\]

\emph{(d) Passage to the limit.} The first factor
in~\eqref{eq:premeasurefact} does not depend on $R$; multiplying a convergent
sequence by a constant,
\[
  \lim_{R\to\infty}\mu^{\mathrm{osc}}_{V_{k+1}}
   \Bigl(\prod_{i<k}s_i\times[-R,R]\Bigr)
  =\mu^{\mathrm{osc}}_{V_k}\Bigl(\prod_{i<k}s_i\Bigr)
   \cdot\sqrt{\frac{2\pi}{a_k}}\,e^{i\pi/4},
\]
which is the assertion.

\emph{(e) The normalised premeasure.} Write
$Z_n:=\prod_{i<n}\sqrt{2\pi/a_i}\,e^{i\pi/4}$ for the total Fresnel mass of
the first $n$ modes; each factor is nonzero, so $Z_n\ne0$, and
$Z_{k+1}=Z_k\cdot\sqrt{2\pi/a_k}\,e^{i\pi/4}$. Dividing the previous display
by $Z_{k+1}$ gives the Kolmogorov identity for the normalised premeasure
$\widehat\mu^{\mathrm{osc}}=\mu^{\mathrm{osc}}/Z$:
\begin{equation}
  \label{eq:kolmogorov}
  \lim_{R\to\infty}\widehat\mu^{\mathrm{osc}}_{V_{k+1}}
    \Bigl(\prod_{i<k}s_i\times[-R,R]\Bigr)
  =\widehat\mu^{\mathrm{osc}}_{V_k}\Bigl(\prod_{i<k}s_i\Bigr).
\end{equation}
(The corresponding statement has been machine-checked:
\LN{oscGaussian_projectivelyConsistent}.)
\end{proof}

\subsection{The free field}

The free-field action in the first $k$ modes is
$S(\varphi)=\tfrac12\ip{\varphi}{A\varphi}
 =\tfrac12\sum_i a_i\ip{e_i}{\varphi}^2$
(Lean: \LN{freeFieldAction}), and the Feynman functional is $e^{iS(\varphi)}$
(Lean: \LN{oscKernel_coords}).

\begin{theorem}[Existence of the HK integral; Lean: \LN{freeFieldHKIntegral},
\LN{freeFieldHKIntegral_cyl}]
\label{thm:exists}
For every bounded box $I\subseteq\RR^k$ the functional $e^{iS}$ is
Henstock--Kurzweil integrable on $I$ (for every set of integration parameters,
in particular for Henstock's gauge integral), and its HK integral equals the
integral over the box. Correspondingly, on the cylinder $\mathrm{cyl}_e(I)$
the functional $\varphi\mapsto e^{iS(\varphi)}$ is HK-integrable along
arbitrary cylindrical divisions with respect to the Gaussian cylindrical
premeasure.
\end{theorem}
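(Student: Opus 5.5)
The theorem has two parts, and I would prove them separately. The finite-dimensional part comes first. On a bounded box $I\subseteq\RR^k$ the function
\[
  f(y)=\exp\Bigl(\tfrac{i}{2}\sum_{i<k}a_iy_i^2\Bigr)
\]
is continuous and has modulus one. Hence it is bounded and continuous on the compact closed box $\overline I$, and so it is Lebesgue integrable on $I$.

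I would then invoke the standard fact (in Mathlib: \LN{MeasureTheory.IntegrableOn.hasBoxIntegral}) that a Lebesgue-integrable function on a box is box-integrable for every set of integration parameters, in particular Henstock's. Its box integral then coincides with the Lebesgue integral. This gives integrability on $I$ together with the identification of the value with $\int_I e^{iS}\,dy$.

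No oscillatory analysis is needed here, because on a bounded box the integrand is absolutely integrable. The Fresnel phenomenon of Theorem~\ref{thm:fresnel1} enters only when $I$ is exhausted to $\RR^k$.

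For the cylinder part I would write $e^{iS(\varphi)}=f(\mathrm{coords}_e\varphi)$. The action $S$ in the first $k$ modes depends on $\varphi$ only through $\ip{e_i}{\varphi}$, which is the Lean lemma \LN{oscKernel_coords}. With this, $F=f\circ\mathrm{coords}_e$ with $f$ continuous on $\RR^k$, so Corollary~\ref{cor:arb} applies verbatim. It yields HK-integrability along arbitrary cylindrical divisions with respect to $\gamma$, with value $\int_I f\,d\gamma$.

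Unwinding the gauge, the argument runs as follows. Uniform continuity of $f$ on $\overline I$ gives a constant $\rho$. The direction bound (Theorem~\ref{thm:dirbound}) together with \eqref{eq:coordcontrol} transfers $\rho$-fineness of the cells to closeness of the coordinates. The error is then summed against total Gaussian mass $\gamma_W(C)=1$, as in Theorem~\ref{thm:riemann-arb}. I would also note that $\delta$-fine cylindrical divisions exist (\LN{exists_isCylDivision}), so the notion is not vacuous, and that the value is unique by Corollary~\ref{cor:arb}.

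I expect the main obstacle to be bookkeeping rather than analysis. The first task is matching the notion of "HK integrable on $I$ for every set of integration parameters" to the existing box-integral API. The second is checking that $f$, written in coordinates, is literally the composite $f\circ\mathrm{coords}_e$ required by Corollary~\ref{cor:arb}, including the complex-valued codomain.

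If the Mathlib route to box integrability were unavailable, I would fall back on the same uniform-continuity argument on subboxes with Lebesgue measure in place of $\gamma$. In that fallback the bound would be $\varepsilon\cdot\mathrm{vol}(I)$ instead of $\varepsilon$.
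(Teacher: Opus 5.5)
Your cylinder half is exactly the paper's argument: write $e^{iS}=f\circ\mathrm{coords}_e$ with $f$ continuous and apply Corollary~\ref{cor:arb} verbatim. The box half, however, rests on a lemma that does not give what the statement asks for.

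\LN{MeasureTheory.IntegrableOn.hasBoxIntegral} carries the hypothesis \texttt{l.bRiemann = false}. Lebesgue integrability yields McShane integrability, hence Henstock and GP integrability. It does not yield integrability for the Riemann-type parameter sets. Your general claim ``Lebesgue integrable on a box $\Rightarrow$ box-integrable for every set of integration parameters'' is in fact false: the indicator of $\mathbb{Q}\cap[0,1]$ is Lebesgue integrable but not Riemann integrable. So, as written, your proof establishes the first assertion only for parameter sets with \texttt{bRiemann = false}. The theorem claims it for every set of integration parameters.

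What covers every parameter set is continuity, not absolute integrability, and this is the paper's route. Uniform continuity of $e^{iS}$ on the compact $\overline I$ gives $\rho>0$ such that every $\rho$-fine tagged partition has Riemann sum within $\varepsilon\,\mathrm{vol}(I)$ of $\int_I e^{iS}\,dy$. The constant gauge $\delta\equiv\rho$ is admissible for every parameter set, Riemann's included. The only further input is the existence of $\delta$-fine partitions (Cousin's lemma). In Mathlib this is packaged as \LN{ContinuousOn.hasBoxIntegral}, which places no restriction on the parameters.

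Your stated fallback is precisely this argument. The repair is to make it the main argument rather than a contingency. The absolute-integrability route buys nothing here, since the integrand is continuous.
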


\begin{proof}
The phase $y\mapsto\tfrac12\ip{y}{Ay}=\tfrac12\sum_ia_iy_i^2$ is a polynomial,
hence continuous on $\RR^k$; the complex exponential is continuous on $\CC$; so
the composite $y\mapsto e^{iS(y)}$ is continuous on $\RR^k$, in particular on
the compact closed box $\overline I$, where it is moreover bounded by $1$ in
modulus.

A function continuous on a compact box is Riemann integrable there, and for a
function which is Riemann integrable the Henstock--Kurzweil (gauge) integral
exists and has the same value: given $\varepsilon>0$, uniform continuity
furnishes a $\rho>0$ for which every $\rho$-fine tagged division has Riemann
sum within $\varepsilon$ of $\int_Ie^{iS}$, and the constant gauge
$\delta\equiv\rho$ then witnesses HK integrability. This argument does not use
any property of the set of integration parameters beyond the existence of
$\delta$-fine divisions (Cousin's lemma), so it applies to every such set, in
particular to Henstock's.

For the cylindrical statement, put $f(y)=e^{iS(y)}$ and
$F=f\circ\mathrm{coords}_e$. Since $f$ is continuous, Corollary~\ref{cor:arb}
applies verbatim and gives HK integrability of $F$ over $\mathrm{cyl}_e(I)$
along arbitrary cylindrical divisions, with value $\int_If\,d\gamma$. (The
corresponding statements have been machine-checked:
\LN{freeFieldHKIntegral}, \LN{freeFieldHKIntegral_cyl}.)
\end{proof}

\begin{theorem}[Fresnel formula for the free field; Lean:
\LN{fresnel_multi}, \LN{freeField_fresnel_formula}]
\label{thm:fresnelfree}
Let $a_i>0$ for $i<k$ and $A=\mathrm{diag}(a)$. Then
\[
  \lim_{R\to\infty}\int_{[-R,R]^k}
    \exp\Bigl(\frac{i}{2}\ip{y}{Ay}\Bigr)dy
  =(2\pi)^{k/2}\,e^{i\pi k/4}\,(\det A)^{-1/2}.
\]
\end{theorem}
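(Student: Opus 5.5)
The plan is to reduce the $k$-dimensional statement to $k$ copies of the one-dimensional Fresnel formula, Theorem~\ref{thm:fresnel1}, exactly as in the proof of Theorem~\ref{thm:consistency}. For $A=\mathrm{diag}(a)$ the phase separates: $\tfrac i2\ip{y}{Ay}=\tfrac i2\sum_{i<k}a_iy_i^2$. The kernel therefore factorises as $\prod_{i<k}e^{ia_iy_i^2/2}$. The cube $[-R,R]^k$ is a product of intervals, and the integrand is continuous with modulus $1$ on a set of finite Lebesgue measure. Fubini's theorem for the product measure (equivalently, iterating \eqref{eq:premeasurefact} $k$ times, starting from $\mu^{\mathrm{osc}}_{V_0}=1$) then gives, for every fixed $R>0$,
\[
  \int_{[-R,R]^k}\exp\Bigl(\frac i2\ip{y}{Ay}\Bigr)dy
  =\prod_{i<k}\int_{-R}^{R}e^{ia_ix^2/2}\,dx .
\]

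Next I pass to the limit $R\to\infty$. Each factor converges to $\sqrt{2\pi/a_i}\,e^{i\pi/4}$ by Theorem~\ref{thm:fresnel1} applied with $a=a_i>0$. Since a finite product of convergent sequences converges to the product of the limits, the left-hand side tends to $\prod_{i<k}\sqrt{2\pi/a_i}\,e^{i\pi/4}$. I would formalise this step by induction on $k$. The case $k=0$ is trivial, with both sides equal to $1$. The inductive step uses the factorisation above together with $\lim(x_Ry_R)=\lim x_R\cdot\lim y_R$.

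Finally I rearrange the constant. Here
\[
  \prod_{i<k}\sqrt{2\pi/a_i}\,e^{i\pi/4}=(2\pi)^{k/2}e^{i\pi k/4}\Bigl(\prod_{i<k}a_i\Bigr)^{-1/2}
  \quad\text{and}\quad
  \det A=\prod_{i<k}a_i>0 .
\]
The square roots involved are real and positive, so $\prod_i a_i^{-1/2}=(\det A)^{-1/2}$ with no branch issue.

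The main obstacle is bookkeeping rather than analysis. One has to identify $[-R,R]^k$ with a product set so that the measure on $\RR^k$ is the product of one-dimensional Lebesgue measures and Fubini applies cleanly. One also has to ensure the complex powers $(2\pi)^{k/2}$ and $(\det A)^{-1/2}$ agree with the products of real square roots. I would handle the latter by working throughout with nonnegative reals and casting to $\CC$ only at the end.
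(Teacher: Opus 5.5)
Your proposal is correct and follows the paper's proof essentially step for step. You factorise the kernel for diagonal $A$, apply Fubini on the cube to get $\prod_{i<k}\int_{-R}^{R}e^{ia_ix^2/2}\,dx$, pass to the limit factor by factor via Theorem~\ref{thm:fresnel1}, and collect constants using $\det A=\prod_{i<k}a_i$; your remarks on induction in $k$ and on avoiding branch issues are bookkeeping that the paper leaves implicit.
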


\begin{proof}
\emph{(a) Factorisation of the integrand.} For $A=\mathrm{diag}(a)$ one has
$\ip{y}{Ay}=\sum_{i<k}a_iy_i^2$, so
\[
  \exp\Bigl(\frac i2\ip{y}{Ay}\Bigr)=\prod_{i<k}e^{ia_iy_i^2/2},
\]
a product of one-dimensional factors, the $i$-th depending on $y_i$ only.

\emph{(b) Factorisation of the integral.} The cube $[-R,R]^k$ is the product
of the intervals $[-R,R]$, and the Lebesgue measure of $\RR^k$ is the product
of the Lebesgue measures of the factors. The integrand is continuous and
bounded in modulus by $1$ on the compact cube, hence integrable there, so
Fubini's theorem applies and
\begin{equation}
  \label{eq:cubefact}
  \int_{[-R,R]^k}\exp\Bigl(\frac i2\ip{y}{Ay}\Bigr)dy
   =\prod_{i<k}\int_{-R}^{R}e^{ia_ix^2/2}\,dx .
\end{equation}

\emph{(c) The one-dimensional limits.} By Theorem~\ref{thm:fresnel1}, for each
$i<k$,
\[
  \lim_{R\to\infty}\int_{-R}^{R}e^{ia_ix^2/2}dx=\sqrt{\frac{2\pi}{a_i}}\,e^{i\pi/4}.
\]

\emph{(d) The product of the limits.} A finite product of convergent families
converges to the product of the limits, so, letting $R\to\infty$
in~\eqref{eq:cubefact},
\[
  \lim_{R\to\infty}\int_{[-R,R]^k}\exp\Bigl(\frac i2\ip{y}{Ay}\Bigr)dy
   =\prod_{i<k}\sqrt{\frac{2\pi}{a_i}}\,e^{i\pi/4}
   =(2\pi)^{k/2}\,e^{i\pi k/4}\Bigl(\prod_{i<k}a_i\Bigr)^{-1/2},
\]
the last equality collecting the $k$ factors $\sqrt{2\pi}$, the $k$ factors
$e^{i\pi/4}$ and the $k$ factors $a_i^{-1/2}$ separately.

\emph{(e) The determinant.} For a diagonal matrix the determinant is the
product of the diagonal entries, $\det A=\prod_{i<k}a_i$, which turns the
previous display into
$(2\pi)^{k/2}e^{i\pi k/4}(\det A)^{-1/2}$, as asserted. (The corresponding
statements have been machine-checked: \LN{fresnel_multi},
\LN{freeField_fresnel_formula}.)
\end{proof}

\begin{remark}
\label{rem:scope}
The formalised version covers an action diagonal in the chosen mode basis;
this is no loss of generality for the free field, since the mode basis is
exactly the eigenbasis of $A$. The infinite-dimensional extension is achieved
by the general Cousin lemma and the $K_0$-reduction of the previous stage:
every $\delta$-fine division of a cylinder is reduced to finitely many modes,
and consistency (Theorems~\ref{thm:vnvm} and \ref{thm:consistency}) guarantees
that the value does not depend on their number.
\end{remark}

\begin{figure}[ht]
\centering
\begin{tikzpicture}[scale=1.0]
  \draw[->] (0,0) -- (6.2,0) node[right] {$R$};
  \draw[->] (0,-1.6) -- (0,1.9) node[above] {$\int_{-R}^{R}e^{iax^2/2}dx$};
  \draw[dashed] (0,1.2) -- (6.0,1.2) node[right] {\small limit};
  \draw[thick,domain=0.15:6.0,samples=200,smooth]
    plot (\x,{1.2 - 0.9*cos(180*\x)/(1+1.6*\x)});
  \draw[thick,dotted,domain=0.6:6.0,samples=100,smooth]
    plot (\x,{1.2 + 0.9/(1+1.6*\x)});
  \draw[thick,dotted,domain=0.6:6.0,samples=100,smooth]
    plot (\x,{1.2 - 0.9/(1+1.6*\x)});
\end{tikzpicture}
\caption{Oscillating convergence of the truncated Fresnel integrals and the
``funnel'' $4/(aR)$ of Theorem~\ref{thm:fresnel1}: the convergence is genuine
but not absolute.}
\label{fig:fresnel}
\end{figure}

\section{The non-diagonal free field and the generating functional}
\label{sec:general}

Sections~\ref{sec:general}--\ref{sec:cylgeneral} and \ref{sec:limit} bring the
description of the free field to a complete closure: the assumption that the action is diagonal
(Remark~\ref{rem:scope}) is removed, a generating functional with an external
source is introduced, the two-point function is computed, all $n$-point
functions are obtained from Wick's theorem (Section~\ref{sec:wick}), the
oscillating
premeasure is embedded into the general theory of cylindrical premeasures, the
value of the HK integral is identified with the Fresnel value, and finally the
passage from the damped (Euclidean) picture to the oscillating (real-time) one
is established. The interacting field, unbounded generators and
renormalisation remain open (see Section~\ref{sec:concl}).

\subsection{Orthogonal diagonalisation of a positive-definite action}
\label{ssec:diag}

Let $A$ be a real symmetric positive-definite $k\times k$ matrix. By the
spectral theorem there exist an orthogonal matrix $U$ (in Lean,
\LN{Matrix.IsHermitian.eigenvectorUnitary}) and eigenvalues $a_i>0$ such that
\begin{equation}
  \label{eq:spectral}
  A=U\,\mathrm{diag}(a)\,U^{\top},\qquad
  \ip{y}{Ay}=\sum_{i<k}a_i\,(O y)_i^2,\qquad O:=U^{\top},
\end{equation}
with $|\det O|=1$ and $\det A=\prod_{i<k}a_i$. This is formalised as
\LN{InfiniteHK3.eigenRotation_spec}; the matrix $O$ and the family $a$
themselves are introduced as \LN{eigenRotation} and \LN{eigenModes}.

\begin{remark}[On the choice of the exhausting family]
\label{rem:exhaustion}
The integral $\int e^{(i/2)\ip{y}{Ay}}dy$ converges only conditionally, so its
value depends on the family of bounded regions used to exhaust $\RR^k$. The
orthogonal change of variables $y=Uz$ preserves Lebesgue measure but maps the
cube $[-R,R]^k$ onto a rotated cube. Therefore the natural (and the only
correct) exhausting family for a non-diagonal action is the family of cubes
adapted to the eigenbasis:
\begin{equation}
  \label{eq:adapted}
  Q_R(A,c):=\{\,y\in\RR^k:\ O(y+c)\in[-R,R]^k\,\},
\end{equation}
in Lean \LN{adaptedCube}. For a diagonal $A$ and $c=0$ these are exactly the
cubes $[-R,R]^k$, so that Theorem~\ref{thm:fresnelfree} is the special case of
Theorem~\ref{thm:fresnelgeneral} below.
\end{remark}

\subsection{The Fresnel formula for a general action}
\label{ssec:fresnelgeneral}

\begin{theorem}[Fresnel formula for a general positive-definite action; Lean:
\LN{freeField_fresnel_formula_general}]
\label{thm:fresnelgeneral}
Let $A$ be a real symmetric positive-definite $k\times k$ matrix. Then
\[
  \lim_{R\to\infty}\int_{Q_R(A,0)}
    \exp\Bigl(\frac{i}{2}\ip{y}{Ay}\Bigr)dy
  =(2\pi)^{k/2}\,e^{i\pi k/4}\,(\det A)^{-1/2},
\]
where $Q_R(A,0)$ is the adapted cube~\eqref{eq:adapted}. The right-hand side is
introduced in Lean as \LN{fresnelValueGeneral}.
\end{theorem}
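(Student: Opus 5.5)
The plan is to reduce the statement to the diagonal case, Theorem~\ref{thm:fresnelfree}, by the orthogonal change of variables $z=Oy$ supplied by the spectral decomposition~\eqref{eq:spectral}. Fix $R>0$. The map $y\mapsto Oy$ is a linear isometry of $\RR^k$, since $O=U^{\top}$ is orthogonal. Hence $|\det O|=1$, and $O$ preserves Lebesgue measure; in Lean this is the measure-preserving property of a linear isometry equivalence, as in step~(c) of the proof of Theorem~\ref{thm:coordgauss}. By definition~\eqref{eq:adapted} with $c=0$, we have $Q_R(A,0)=O^{-1}([-R,R]^k)$. So the substitution $z=Oy$ maps $Q_R(A,0)$ bijectively onto the cube $[-R,R]^k$.

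For the integrand, \eqref{eq:spectral} gives $\ip{y}{Ay}=\sum_{i<k}a_i(Oy)_i^2=\ip{Oy}{\mathrm{diag}(a)\,Oy}$. The change-of-variables formula for a measure-preserving map, applied to the bounded continuous function $\mathbf 1_{[-R,R]^k}(z)\exp(\tfrac i2\ip{z}{\mathrm{diag}(a)z})$, then yields, for every $R>0$,
\[
  \int_{Q_R(A,0)}\exp\Bigl(\frac i2\ip{y}{Ay}\Bigr)dy
  =\int_{[-R,R]^k}\exp\Bigl(\frac i2\ip{z}{\mathrm{diag}(a)z}\Bigr)dz .
\]
This is an equality of functions of $R$, not merely of their limits. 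Since the eigenvalues satisfy $a_i>0$ by positive-definiteness, Theorem~\ref{thm:fresnelfree} applies to the right-hand side. Its limit is therefore $(2\pi)^{k/2}e^{i\pi k/4}(\det\mathrm{diag}(a))^{-1/2}=(2\pi)^{k/2}e^{i\pi k/4}(\prod_i a_i)^{-1/2}$. The last step is to replace $\prod_i a_i$ by $\det A$, using $\det A=\det U\cdot\prod_ia_i\cdot\det U^{\top}=\prod_ia_i$, which is also recorded in \LN{eigenRotation_spec}.

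I expect the main obstacle to be bookkeeping rather than analysis. First, positivity of the eigenvalues $a_i$ must be extracted from positive-definiteness: evaluate $\ip{u_i}{Au_i}=a_i\|u_i\|^2$ on the eigenvectors $u_i$. Second, the adapted cube must be identified with the preimage of the standard cube in the form Lean's change-of-variables lemma expects, i.e.\ written as $\mathbf 1_{Q_R}=\mathbf 1_{[-R,R]^k}\circ O$. Third, care is needed with the meaning of $(\det A)^{-1/2}$. Because $\det A>0$, it can be read as a real positive power, or equivalently as the principal complex power. The identity $\prod_i a_i^{-1/2}=(\prod_ia_i)^{-1/2}$ then holds for positive reals, so no branch issues arise. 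No new oscillatory estimate is needed: all conditional convergence is already handled inside Theorem~\ref{thm:fresnel1}. This is exactly why the eigenbasis-adapted exhaustion of Remark~\ref{rem:exhaustion} is used.
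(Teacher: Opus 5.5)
Your proposal is correct and follows essentially the same route as the paper: the spectral decomposition $A=O^{\top}\mathrm{diag}(a)\,O$ with $a_i>0$, the measure-preserving orthogonal substitution $z=Oy$ that identifies $Q_R(A,0)$ with the preimage of $[-R,R]^k$ and gives an equality of integrals for every $R>0$, and finally Theorem~\ref{thm:fresnelfree} for the diagonal action together with $\det\mathrm{diag}(a)=\det A$. One small slip: $\mathbf 1_{[-R,R]^k}(z)\exp\bigl(\tfrac i2\ip{z}{\mathrm{diag}(a)z}\bigr)$ is bounded and measurable but not continuous, which is harmless because the change-of-variables formula for a measure-preserving map needs only measurability and integrability.
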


\begin{proof}
\emph{(a) Spectral decomposition.} The matrix $A$ is real symmetric, so by the
spectral theorem there are an orthogonal matrix $U$ ($U^{\top}U=1$) and real
numbers $a_0,\dots,a_{k-1}$ with $A=U\,\mathrm{diag}(a)\,U^{\top}$; writing
$O:=U^{\top}$, this reads $A=O^{\top}\mathrm{diag}(a)\,O$. Positive
definiteness of $A$ gives $a_i=\ip{Ue_i}{A\,Ue_i}>0$ for each $i$ (Lean:
\LN{eigenRotation_spec}).

\emph{(b) The quadratic form in the new variable.} Put $z=Oy$. Then
\[
  \ip{y}{Ay}=\ip{y}{O^{\top}\mathrm{diag}(a)Oy}
   =\ip{Oy}{\mathrm{diag}(a)\,Oy}=\ip{z}{\Lambda z}=\sum_{i<k}a_iz_i^2 ,
  \qquad \Lambda:=\mathrm{diag}(a),
\]
which is~\eqref{eq:spectral}.

\emph{(c) The change of variable preserves Lebesgue measure.} The matrix $O$
is orthogonal, hence $y\mapsto Oy$ is a linear isometry of $\RR^k$ onto itself
and $(\det O)^2=\det(O^{\top}O)=1$, so $|\det O|=1$. By the transformation
formula for Lebesgue measure under a linear map, $y\mapsto Oy$ is
measure-preserving (Lean: \LN{measurePreserving_mulVec_add} together with
\LN{Real.map_matrix_volume_pi_eq_smul_volume_pi}); being a linear bijection,
it is also a measurable embedding
(\LN{measurableEmbedding_mulVec_add}).

\emph{(d) The exhausting family.} By the definition~\eqref{eq:adapted} of the
adapted cube,
\[
  Q_R(A,0)=\{y:\ Oy\in[-R,R]^k\}=(y\mapsto Oy)^{-1}\bigl([-R,R]^k\bigr),
\]
so the image of $Q_R(A,0)$ under $y\mapsto Oy$ is exactly the cube
$[-R,R]^k$.

\emph{(e) Change of variables.} Combining (b), (c) and (d), the
change-of-variables formula for a measure-preserving measurable embedding
(Lean: \LN{setIntegral_adaptedCube}) gives, for every $R>0$,
\begin{equation}
  \label{eq:cov-adapted}
  \int_{Q_R(A,0)}\exp\Bigl(\frac i2\ip{y}{Ay}\Bigr)dy
   =\int_{[-R,R]^k}\exp\Bigl(\frac i2\ip{z}{\Lambda z}\Bigr)dz .
\end{equation}

\emph{(f) Evaluation.} The right-hand side of~\eqref{eq:cov-adapted} is the
integral treated in Theorem~\ref{thm:fresnelfree} for the diagonal action
$\Lambda$, whose limit as $R\to\infty$ is
$(2\pi)^{k/2}e^{i\pi k/4}(\det\Lambda)^{-1/2}$. Finally
\[
  \det\Lambda=\prod_{i<k}a_i=\det\bigl(O^{\top}\Lambda O\bigr)=\det A ,
\]
since $\det(O^{\top})\det(O)=1$. Substituting into~\eqref{eq:cov-adapted} and
letting $R\to\infty$ completes the proof.

\emph{(g) A remark on the exhausting family.} The integral converges only
conditionally, so the choice of the exhausting family is not a matter of
convenience but part of the statement (Remark~\ref{rem:exhaustion}); the
family $Q_R(A,0)$ is the one for which the change of variables of step (e)
leads back to a cube in the new coordinates. For a diagonal $A$ one may take
$O=1$, and the adapted cube is literally $[-R,R]^k$, so that
Theorem~\ref{thm:fresnelfree} is the special case of the present theorem.
(The corresponding statement has been machine-checked:
\LN{freeField_fresnel_formula_general}.)
\end{proof}

\subsection{The generating functional}
\label{ssec:genfunc}

Introduce the action with an external source $j\in\RR^k$,
\[
  S_j(y)=\tfrac12\ip{y}{Ay}+\ip{j}{y}
\]
(in Lean \LN{sourceAction}, with kernel $e^{iS_j}$ given by
\LN{sourceKernel}), and set
\begin{equation}
  \label{eq:genfunc}
  Z(j):=\lim_{R\to\infty}\int_{Q_R(A,A^{-1}j)}e^{iS_j(y)}\,dy
\end{equation}
(in Lean, \LN{freeFieldGeneratingFunctional}), where the exhausting family is
shifted by $-A^{-1}j$, i.e.\ centred at the stationary point of the phase,
which is the correct analogue of completing the square.

\begin{theorem}[Value of the generating functional]
\label{thm:genfunc}
(Lean: \LN{freeFieldGeneratingFunctional_value},
\LN{freeFieldGeneratingFunctional_tendsto}.)
For a symmetric positive-definite $A$ and every $j\in\RR^k$ the limit
in~\eqref{eq:genfunc} exists and
\[
  Z(j)=(2\pi)^{k/2}\,e^{i\pi k/4}\,(\det A)^{-1/2}\,
        \exp\Bigl(-\frac{i}{2}\ip{j}{A^{-1}j}\Bigr).
\]
In particular $Z(0)=(2\pi)^{k/2}e^{i\pi k/4}(\det A)^{-1/2}\ne 0$
(Lean: \LN{freeFieldGeneratingFunctional_zero},
\LN{fresnelValueGeneral_ne_zero}).
\end{theorem}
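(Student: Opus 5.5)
The plan is to complete the square and reduce to Theorem~\ref{thm:fresnelgeneral} by a translation. Put $y_0:=-A^{-1}j$, the stationary point of the phase. Since $A$ is symmetric, for every $y\in\RR^k$ with $w:=y-y_0=y+A^{-1}j$ we have
\[
  S_j(y)=\tfrac12\ip{w}{Aw}-\tfrac12\ip{j}{A^{-1}j}.
\]
To check this, expand $\tfrac12\ip{y+A^{-1}j}{A(y+A^{-1}j)}=\tfrac12\ip{y}{Ay}+\ip{j}{y}+\tfrac12\ip{j}{A^{-1}j}$, using $\ip{A^{-1}j}{Ay}=\ip{j}{y}$, which holds because $A^{-1}$ is symmetric. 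Consequently
\[
  e^{iS_j(y)}=\exp\Bigl(-\tfrac i2\ip{j}{A^{-1}j}\Bigr)\,\exp\Bigl(\tfrac i2\ip{w}{Aw}\Bigr),
\]
where the first factor is a constant independent of $y$ and of $R$.

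Next I would transport the exhausting family. By definition~\eqref{eq:adapted}, $Q_R(A,A^{-1}j)=\{y:\ O(y+A^{-1}j)\in[-R,R]^k\}$. This is exactly the image of $Q_R(A,0)$ under the translation $w\mapsto w-A^{-1}j$. Translation preserves Lebesgue measure and is a measurable embedding, so the change-of-variables formula applies. In Lean this is the same device as \LN{measurePreserving_mulVec_add}/\LN{setIntegral_adaptedCube} with the identity matrix and an added vector, or alternatively the general affine map $y\mapsto O(y+c)$, which sends $Q_R(A,c)$ directly onto $[-R,R]^k$. It gives, for every $R>0$,
\[
  \int_{Q_R(A,A^{-1}j)}e^{iS_j(y)}\,dy
  =\exp\Bigl(-\tfrac i2\ip{j}{A^{-1}j}\Bigr)\int_{Q_R(A,0)}\exp\Bigl(\tfrac i2\ip{w}{Aw}\Bigr)dw .
\]
Integrability on each $Q_R$ poses no problem: the integrand is continuous with modulus $1$, and $Q_R$ is a compact set of finite measure (it is an isometric image of a cube).

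Now let $R\to\infty$. By Theorem~\ref{thm:fresnelgeneral}, the integral on the right converges to $(2\pi)^{k/2}e^{i\pi k/4}(\det A)^{-1/2}$. Multiplying by the constant factor gives both the existence of the limit in~\eqref{eq:genfunc} and the stated value of $Z(j)$. Setting $j=0$ gives $Z(0)$.

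For nonvanishing, note that $(2\pi)^{k/2}>0$, that $|e^{i\pi k/4}|=1$, and that $\det A=\prod_i a_i>0$ by \eqref{eq:spectral}. Hence $(\det A)^{-1/2}$ is a positive real number, and the product is nonzero.

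The main obstacle is bookkeeping rather than analysis. One must make sure the shift convention in $Q_R(A,c)$ matches the stationary point, so that the translated region is exactly $Q_R(A,0)$ and not a rotated or oppositely shifted one. This matters because the integral is only conditionally convergent, so a mismatched exhaustion could change the limit. I would verify the set identity pointwise first, before invoking the measure-preserving change of variables.
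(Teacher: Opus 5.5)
Your proposal is correct and follows essentially the same route as the paper. Both arguments complete the square with $u=y+A^{-1}j$ (the paper's \LN{quadAction_shift}), factor out the constant phase $e^{-\frac i2\ip{j}{A^{-1}j}}$, map $Q_R(A,A^{-1}j)$ onto $Q_R(A,0)$ by a translation that preserves Lebesgue measure, invoke Theorem~\ref{thm:fresnelgeneral}, and obtain $Z(0)\ne0$ from $(2\pi)^{k/2}(\det A)^{-1/2}>0$ together with $|e^{i\pi k/4}|=1$.
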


\begin{proof}
\emph{(a) Completing the square.} Write $u=y+A^{-1}j$ and expand the quadratic
form, using bilinearity:
\[
  \ip{u}{Au}=\ip{y}{Ay}+\ip{y}{AA^{-1}j}+\ip{A^{-1}j}{Ay}
              +\ip{A^{-1}j}{AA^{-1}j}.
\]
Here $AA^{-1}=1$, so the second term is $\ip{y}{j}$ and the fourth is
$\ip{A^{-1}j}{j}=\ip{j}{A^{-1}j}$. For the third, the symmetry of $A$ gives
$\ip{A^{-1}j}{Ay}=\ip{A\,A^{-1}j}{y}=\ip{j}{y}$. The two middle terms are
therefore equal and add up to $2\ip{j}{y}$ --- not $\ip{j}{y}$ --- so that
\begin{equation}
  \label{eq:square}
  \ip{y+A^{-1}j}{A(y+A^{-1}j)}=\ip{y}{Ay}+2\ip{j}{y}+\ip{j}{A^{-1}j}
\end{equation}
(Lean: \LN{quadAction_shift}). Dividing by $2$,
$\tfrac12\ip{u}{Au}=S_j(y)+\tfrac12\ip{j}{A^{-1}j}$.

\emph{(b) Factorisation of the kernel.} Multiplying the last identity by $i$
and exponentiating,
\begin{equation}
  \label{eq:kernelfact}
  e^{iS_j(y)}=e^{-\frac i2\ip{j}{A^{-1}j}}\;e^{\frac i2\ip{u}{Au}},
  \qquad u=y+A^{-1}j .
\end{equation}

\emph{(c) The shifted exhausting family.} By definition
$Q_R(A,A^{-1}j)=\{y:\ O(y+A^{-1}j)\in[-R,R]^k\}$, so the translation
$y\mapsto u=y+A^{-1}j$ maps $Q_R(A,A^{-1}j)$ bijectively onto $Q_R(A,0)$. A
translation preserves Lebesgue measure, hence
\[
  \int_{Q_R(A,A^{-1}j)}e^{\frac i2\ip{y+A^{-1}j}{A(y+A^{-1}j)}}dy
   =\int_{Q_R(A,0)}e^{\frac i2\ip{u}{Au}}\,du .
\]

\emph{(d) Conclusion.} The prefactor in~\eqref{eq:kernelfact} is a constant,
independent of $y$, so it may be taken outside the integral:
\[
  \int_{Q_R(A,A^{-1}j)}e^{iS_j(y)}dy
   =e^{-\frac i2\ip{j}{A^{-1}j}}\int_{Q_R(A,0)}e^{\frac i2\ip{u}{Au}}du .
\]
By Theorem~\ref{thm:fresnelgeneral} the integral on the right converges, as
$R\to\infty$, to $(2\pi)^{k/2}e^{i\pi k/4}(\det A)^{-1/2}$; multiplying a
convergent family by a fixed constant, the limit~\eqref{eq:genfunc} exists and
\[
  Z(j)=(2\pi)^{k/2}e^{i\pi k/4}(\det A)^{-1/2}
        \exp\Bigl(-\frac i2\ip{j}{A^{-1}j}\Bigr).
\]

\emph{(e) Non-vanishing at $j=0$.} At $j=0$ the exponential factor is $1$ and
$Z(0)=(2\pi)^{k/2}e^{i\pi k/4}(\det A)^{-1/2}$. Here $(2\pi)^{k/2}>0$ and
$(\det A)^{-1/2}>0$, because $\det A=\prod_ia_i>0$ for a positive-definite
$A$, while $|e^{i\pi k/4}|=1$; a product of a strictly positive real number
and a complex number of modulus one is nonzero, so $Z(0)\ne0$ (Lean:
\LN{freeFieldGeneratingFunctional_zero}, \LN{fresnelValueGeneral_ne_zero}).
This is what makes the normalisation $\widehat Z=Z/Z(0)$ legitimate. (The
corresponding statements have been machine-checked:
\LN{freeFieldGeneratingFunctional_value},
\LN{freeFieldGeneratingFunctional_tendsto}.)
\end{proof}

\subsection{The two-point function}
\label{ssec:twopoint}

The normalised generating functional is
$\widehat Z(j)=Z(j)/Z(0)=\exp\bigl(-\tfrac{i}{2}\ip{j}{A^{-1}j}\bigr)$
(Lean: \LN{normalizedGeneratingFunctional},
\LN{normalizedGeneratingFunctional_eq}). Differentiating it with respect to the
source yields the correlations. We use the second route allowed by the
formulation of the problem: the two-point function is defined as the second
derivative of $\widehat Z$ at zero source, computed through a polarisation
identity, rather than as the ratio of two conditionally convergent integrals
(the latter would require a separate convergence analysis of
$\int y_py_q e^{iS}$).

Namely, for $v\in\RR^k$
\begin{equation}
  \label{eq:secondderiv}
  \frac{d^2}{dt^2}\Big|_{t=0}\widehat Z(tv)=-\,i\,\ip{v}{A^{-1}v}
\end{equation}
(in Lean, \LN{deriv_two_normalizedGeneratingFunctional}), and, writing $e_p$
for the standard unit vectors, we define
\[
  \langle y_py_q\rangle:=
  -\frac12\Bigl[
    \frac{d^2}{dt^2}\Big|_{0}\widehat Z\bigl(t(e_p+e_q)\bigr)
   -\frac{d^2}{dt^2}\Big|_{0}\widehat Z(te_p)
   -\frac{d^2}{dt^2}\Big|_{0}\widehat Z(te_q)\Bigr]
\]
(Lean: \LN{freeFieldTwoPointHK}).

\begin{theorem}[Two-point function; Lean: \LN{freeField_two_point_HK}]
\label{thm:twopoint}
For a symmetric positive-definite $A$ and all $p,q<k$,
\[
  \langle y_py_q\rangle=i\,(A^{-1})_{pq}.
\]
The quantity $(A^{-1})_{pq}$ is introduced as \LN{freeFieldTwoPoint}
(\LN{freeFieldTwoPoint_eq_inv}).
\end{theorem}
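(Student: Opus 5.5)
The plan is to reduce everything to the second-derivative formula \eqref{eq:secondderiv} and then to a polarisation computation for the symmetric bilinear form $B(v,w):=\ip{v}{A^{-1}w}$.

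\emph{Step 1: the normalised functional.} By Theorem~\ref{thm:genfunc}, $Z(j)$ has an explicit closed form and $Z(0)\neq0$. Dividing, we get $\widehat Z(j)=\exp\bigl(-\tfrac i2\ip{j}{A^{-1}j}\bigr)$. Along a ray $j=tv$ this reads $g(t)=\exp(-\tfrac i2 t^2 Q(v))$, where $Q(v):=\ip{v}{A^{-1}v}$.

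\emph{Step 2: the second derivative.} Differentiating once gives $g'(t)=-itQ(v)\,g(t)$. Differentiating again gives $g''(t)=\bigl(-iQ(v)-t^2Q(v)^2\bigr)g(t)$. Setting $t=0$ and using $g(0)=1$ yields $g''(0)=-iQ(v)$, which is \eqref{eq:secondderiv}. In the formal setting the main work here is obtaining the derivatives of the composite $t\mapsto\exp(c\,t^2)$ through the chain rule. One must also check that the first derivative agrees with the displayed function on a neighbourhood, not only at $0$. This makes it legitimate to differentiate it a second time.

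\emph{Step 3: polarisation.} Substitute \eqref{eq:secondderiv} into the definition of $\langle y_py_q\rangle$ to obtain
\[
\langle y_py_q\rangle=-\tfrac12\bigl[-iQ(e_p+e_q)+iQ(e_p)+iQ(e_q)\bigr]
=\tfrac i2\bigl[Q(e_p+e_q)-Q(e_p)-Q(e_q)\bigr].
\]
By bilinearity, $Q(e_p+e_q)-Q(e_p)-Q(e_q)=B(e_p,e_q)+B(e_q,e_p)$. Since $A$ is symmetric, $A^{-1}$ is symmetric as well, because $(A^{-1})^\top=(A^\top)^{-1}$. Hence both terms equal $(A^{-1})_{pq}$, and the result is $i(A^{-1})_{pq}$. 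The case $p=q$ is covered automatically, since then $Q(2e_p)=4Q(e_p)$.

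\emph{Main obstacle.} I expect the real difficulty to lie in Step 2, in correctly iterating the derivative of a complex-valued function of a real variable. Steps 1 and 3 are algebraic. The only subtlety in Step 3 is invoking symmetry of $A^{-1}$, which I would derive from symmetry of $A$ via the inverse-transpose identity.
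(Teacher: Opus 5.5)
Your proposal is correct and follows the paper's proof essentially step for step. Both restrict $\widehat Z$ to the line $t\mapsto tv$, compute $g''(0)=-i\ip{v}{A^{-1}v}$, and conclude by polarisation of the form $\ip{\cdot}{A^{-1}\cdot}$, using the symmetry of $A^{-1}$ inherited from that of $A$.
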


\begin{proof}
\emph{(a) Restriction to a line.} Let $v\in\RR^k$ and $t\in\RR$. By bilinearity
of the form $\ip{\cdot}{A^{-1}\cdot}$,
\[
  \ip{tv}{A^{-1}(tv)}=t^2\ip{v}{A^{-1}v},
\]
so that, with $c:=-\tfrac i2\ip{v}{A^{-1}v}$,
\[
  \widehat Z(tv)=\exp\Bigl(-\frac i2\,t^2\ip{v}{A^{-1}v}\Bigr)=\exp\bigl(ct^2\bigr).
\]

\emph{(b) Two derivatives at the origin.} The function $g(t)=\exp(ct^2)$ is
smooth, with $g'(t)=2ct\exp(ct^2)$ and
$g''(t)=\bigl(2c+4c^2t^2\bigr)\exp(ct^2)$; at $t=0$ this gives
$g''(0)=2c$. Hence for every $v\in\RR^k$
\begin{equation}
  \label{eq:secondderiv-proof}
  \frac{d^2}{dt^2}\Big|_{t=0}\widehat Z(tv)=2c=-\,i\,\ip{v}{A^{-1}v},
\end{equation}
which is~\eqref{eq:secondderiv} (Lean:
\LN{deriv_two_normalizedGeneratingFunctional}).

\emph{(c) Polarisation.} The matrix $A$ is symmetric, hence so is $A^{-1}$,
and the form $\ip{\cdot}{A^{-1}\cdot}$ is a symmetric bilinear form.
Expanding it on $e_p+e_q$,
\[
  \ip{e_p+e_q}{A^{-1}(e_p+e_q)}
   =\ip{e_p}{A^{-1}e_p}+\ip{e_p}{A^{-1}e_q}+\ip{e_q}{A^{-1}e_p}
     +\ip{e_q}{A^{-1}e_q},
\]
and $\ip{e_p}{A^{-1}e_q}=(A^{-1})_{pq}$,
$\ip{e_q}{A^{-1}e_p}=(A^{-1})_{qp}=(A^{-1})_{pq}$. Therefore
\begin{equation}
  \label{eq:polarisation}
  \ip{e_p+e_q}{A^{-1}(e_p+e_q)}-\ip{e_p}{A^{-1}e_p}-\ip{e_q}{A^{-1}e_q}
   =2\,(A^{-1})_{pq}.
\end{equation}

\emph{(d) Conclusion.} Substituting~\eqref{eq:secondderiv-proof} with
$v=e_p+e_q$, $v=e_p$ and $v=e_q$ into the definition of
$\langle y_py_q\rangle$ and using~\eqref{eq:polarisation},
\[
  \langle y_py_q\rangle
   =-\frac12\Bigl[-i\ip{e_p+e_q}{A^{-1}(e_p+e_q)}+i\ip{e_p}{A^{-1}e_p}
      +i\ip{e_q}{A^{-1}e_q}\Bigr]
   =\frac i2\cdot2\,(A^{-1})_{pq}=i\,(A^{-1})_{pq}.
\]

\emph{(e) Why this route.} We do not define $\langle y_py_q\rangle$ as the
ratio $\int y_py_q e^{iS}\big/\int e^{iS}$ of two conditionally convergent
integrals: the numerator has an unbounded polynomial factor, so its
convergence along the adapted exhausting family would require a separate
analysis, and its value would again depend on the exhaustion. Differentiating
the generating functional, whose value is already known in closed form by
Theorem~\ref{thm:genfunc}, avoids this entirely. (The corresponding statement
has been machine-checked: \LN{freeField_two_point_HK}.)
\end{proof}

\begin{remark}[The factor $i$]
\label{rem:factor-i}
In real time the ratio $\int y_py_q\,e^{iS}\,/\int e^{iS}$ equals precisely
$i(A^{-1})_{pq}$, not $(A^{-1})_{pq}$: the latter is the Euclidean (Gaussian)
answer. Check for $k=1$: analytic continuation of
$\int x^2e^{-bx^2/2}dx\big/\int e^{-bx^2/2}dx=1/b$ to $b=-ia$ gives
$1/(-ia)=i/a$. The factor $i$ is the usual factor of the Feynman propagator;
here it is obtained as a theorem, not postulated.
\end{remark}

\section{Wick's theorem and the $n$-point functions}
\label{sec:wick}

The generating functional of Theorem~\ref{thm:genfunc} determines not only the
two-point function of Theorem~\ref{thm:twopoint}, but every correlation
function of the free field. The present section closes the free field at the
level of \emph{all} $n$-point functions: they vanish in odd order and, in even
order, factorise into a sum over pairings of the two-point function. No new
analytic input is required: the whole argument consists in differentiating the
explicit exponential
$\widehat Z(j)=\exp\bigl(-\tfrac{i}{2}\ip{j}{A^{-1}j}\bigr)$ obtained in
\S\ref{ssec:genfunc}, and it is carried out in Lean under no hypothesis beyond
the positive definiteness of $A$.

\subsection{Definition of the $n$-point functions}
\label{ssec:npointdef}

The normalised generating functional $\widehat Z$ is a smooth function on
$\RR^k$ (Lean: \LN{contDiff_normalizedGeneratingFunctional}). We write
$\partial_pf(j)=Df(j)[e_p]$ for the partial derivative in the $p$-th
coordinate direction (Lean: \LN{partialDerivC}) and, for a list
$P=(p_1,\dots,p_n)$ of indices, $\partial_P=\partial_{p_1}\cdots\partial_{p_n}$
(Lean: \LN{iterPartialDerivC}). The order of the differentiations is
irrelevant: $\widehat Z$ is a smooth function on the finite-dimensional space
$\RR^k$, so all its iterated derivatives are continuous and Schwarz's theorem
on the symmetry of mixed partial derivatives applies; in Lean this is the
symmetry of the second derivative of a $C^2$ map together with
\LN{ContDiff.fderiv_apply}, which supplies the required smoothness of each
intermediate derivative.

\begin{definition}[$n$-point function; Lean: \LN{freeFieldNPoint}]
\label{def:npoint}
For a symmetric positive-definite $A$ and indices $p_1,\dots,p_n<k$,
\begin{equation}
  \label{eq:npointdef}
  \langle y_{p_1}\cdots y_{p_n}\rangle
   :=(-i)^n\,
     \frac{\partial^n}{\partial j_{p_1}\cdots\partial j_{p_n}}
     \widehat Z(j)\Big|_{j=0},
  \qquad
  \widehat Z(j)=\exp\Bigl(-\frac{i}{2}\ip{j}{A^{-1}j}\Bigr).
\end{equation}
\end{definition}

For $n=2$ this agrees with the two-point function of \S\ref{ssec:twopoint},
which was defined there through a polarisation identity (Lean:
\LN{freeFieldNPoint_two_eq_twoPointHK}).

The right-hand side of Wick's theorem is a sum over the perfect matchings
(pairings) of the list of indices. We encode it by the standard recursion: the
first index is paired with each of the remaining ones in turn, and both paired
indices are deleted. Denote by $\mathrm{holes}(P)$ the list of pairs $(q,P')$
in which $q$ runs over the entries of $P$ and $P'$ is $P$ with that entry
deleted (Lean: \LN{holes}).

\begin{definition}[Pairing sum; Lean: \LN{wickPairingSum}]
\label{def:pairingsum}
For a weight $B$ put $W_B(\varnothing)=1$ and
\[
  W_B(p\!::\!P)=\sum_{(q,P')\in\mathrm{holes}(P)}B_{pq}\,W_B(P').
\]
\end{definition}

If $P$ has $2m$ entries, then $W_B(P)$ is precisely
$\sum_{\pi}\prod_{(i,j)\in\pi}B_{p_ip_j}$, the sum over the $(2m-1)!!$ perfect
matchings $\pi$ of the positions $\{1,\dots,2m\}$, each unordered pair being
counted once; if $P$ has odd length, the recursion terminates in an empty sum,
so that $W_B(P)=0$.

\subsection{Wick's theorem: the odd case}
\label{ssec:wickodd}

\begin{theorem}[Odd correlations vanish; Lean: \LN{freeFieldNPoint_odd}]
\label{thm:wickodd}
Let $A$ be symmetric positive-definite. Then for every list of odd length,
\[
  \langle y_{p_1}\cdots y_{p_{2m+1}}\rangle=0 .
\]
\end{theorem}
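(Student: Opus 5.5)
The plan is to argue by parity: $\widehat Z$ is an even function of the source, so every derivative of odd order vanishes at the origin. By Definition~\ref{def:npoint}, $\langle y_{p_1}\cdots y_{p_{2m+1}}\rangle=(-i)^{2m+1}\,\partial_P\widehat Z(0)$ with $P=(p_1,\dots,p_{2m+1})$. It therefore suffices to show that $\partial_P\widehat Z(0)=0$ for every list $P$ of odd length.

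\emph{Step 1: evenness.} The quadratic form is even, $\ip{-j}{A^{-1}(-j)}=\ip{j}{A^{-1}j}$, so $\widehat Z(-j)=\widehat Z(j)$ for all $j\in\RR^k$. This identity is read off directly from the closed form of $\widehat Z$ obtained in Theorem~\ref{thm:genfunc}.

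\emph{Step 2: parity of iterated derivatives.} Let $\sigma(j)=-j$. I will prove by induction on the length $n$ of $P$ that
\[
  (\partial_P\widehat Z)(-j)=(-1)^n(\partial_P\widehat Z)(j)\qquad(j\in\RR^k).
\]
The induction is more convenient on the statement ``if $g\circ\sigma=\varepsilon g$ with $\varepsilon=\pm1$ and $g$ is smooth, then $(\partial_p g)\circ\sigma=-\varepsilon\,\partial_p g$.'' This follows from the chain rule applied to $g(-j)=\varepsilon g(j)$: differentiating in direction $e_p$ gives $-Dg(-j)[e_p]=\varepsilon\,Dg(j)[e_p]$. Each intermediate derivative is again smooth by \LN{contDiff_normalizedGeneratingFunctional} and \LN{ContDiff.fderiv_apply}, so the step iterates. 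Since $\partial_P=\partial_{p_1}\cdots\partial_{p_n}$ is defined recursively, the induction follows the list structure.

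\emph{Step 3: conclusion.} For odd $n$, setting $j=0$ in the parity identity gives $(\partial_P\widehat Z)(0)=-(\partial_P\widehat Z)(0)$, hence $(\partial_P\widehat Z)(0)=0$, and the $n$-point function vanishes.

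\emph{Alternative route.} One could instead show by induction that $\partial_P\widehat Z(j)=Q_P(j)\,\widehat Z(j)$, where $Q_P$ is a polynomial whose monomials all have degree of the same parity as $|P|$. Then $Q_P(0)=0$ for odd $|P|$. This route is closer to the pairing recursion needed later for the even case of Wick's theorem. For the odd case alone, however, the symmetry argument is shorter.

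The main obstacle is Step 2 in the formal setting. One has to state the induction hypothesis as a functional identity valid for all $j$, not only at $0$, and one needs smoothness of every intermediate derivative so that the chain rule for $g\circ\sigma$ applies at each stage. Once the one-step lemma for $\partial_p$ is established for an arbitrary smooth function of definite parity, the rest is bookkeeping on the length of the list.
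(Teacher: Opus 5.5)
Your proof is correct, but its main line is not the one the paper carries out. The paper derives the odd case from Wick's theorem (Theorem~\ref{thm:wick}): the correlation equals the pairing sum $W_B(P)$ with $B=iA^{-1}$. It then shows $W_B(P)=0$ for odd $|P|$ by a purely combinatorial induction on the recursion of Definition~\ref{def:pairingsum}. The base case is that $\mathrm{holes}(\varnothing)$ is empty, and each step passes only to sublists $P'$ of odd length.

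The evenness of $\widehat Z$ appears in the paper only as a one-line closing remark (``equivalently, and more directly''). You have turned that remark into a complete proof by stating the parity identity for all $j$ and inducting along the list with the one-step chain-rule lemma for $j\mapsto -j$.

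What each route buys:
\begin{itemize}
  \item \emph{Your route} is self-contained and more general. It uses only evenness and smoothness of $\widehat Z$, needs neither Lemma~\ref{lem:firstderiv} nor Lemma~\ref{lem:leibniz}, and would apply verbatim to any smooth even generating functional.
  \item \emph{The paper's route} is more economical within its own development. The Wick recursion must be established anyway for the even case, after which the odd case costs no further analysis, only a trivial induction on $W_B$.
\end{itemize}

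Your ``alternative route'' via the parity-graded polynomial $Q_P$ is essentially the paper's argument in another guise. It is the same recursion from Lemmas~\ref{lem:firstderiv} and~\ref{lem:leibniz}, tracked through degrees instead of through pairings.
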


\begin{proof}
By Theorem~\ref{thm:wick} below, the correlation equals the pairing sum
$W_B(P)$ with $B_{pq}=i(A^{-1})_{pq}$, so it suffices to show that
$W_B(P)=0$ for every list $P$ of odd length. We prove this by induction on
the odd length $2m+1$.

For $m=0$ the list has one entry, $P=(p)$, and
Definition~\ref{def:pairingsum} gives
$W_B\bigl(p::\varnothing\bigr)=\sum_{(q,P')\in\mathrm{holes}(\varnothing)}
B_{pq}W_B(P')$; the list $\mathrm{holes}(\varnothing)$ is empty, since there
is no entry to delete, so the sum is empty and $W_B(P)=0$.

For the step, let $P=p::P_0$ have length $2m+3$, so that $P_0$ has length
$2m+2$. Every pair $(q,P')\in\mathrm{holes}(P_0)$ has $|P'|=|P_0|-1=2m+1$,
again odd; by the induction hypothesis $W_B(P')=0$ for each of them, whence
\[
  W_B(P)=\sum_{(q,P')\in\mathrm{holes}(P_0)}B_{pq}\,W_B(P')=0 .
\]
This proves the claim for all odd lengths (Lean: \LN{wickPairingSum_odd}).
Equivalently, and more directly: $\widehat Z(-j)=\widehat Z(j)$, because the
quadratic form $\ip{j}{A^{-1}j}$ is even, so $\widehat Z$ is an even function
of the source and all its odd-order derivatives vanish at $j=0$. (The
corresponding statement has been machine-checked: \LN{freeFieldNPoint_odd}.)
\end{proof}

\subsection{Wick's theorem: the even case}
\label{ssec:wickeven}

Two elementary lemmas carry the whole proof.

\begin{lemma}[First derivative; Lean:
\LN{partialDerivC}\allowbreak\LN{_normalizedGeneratingFunctional}]
\label{lem:firstderiv}
For a symmetric positive-definite $A$,
\[
  \partial_p\widehat Z(j)=-\,i\,(A^{-1}j)_p\,\widehat Z(j).
\]
\end{lemma}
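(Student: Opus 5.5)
The plan is a direct chain-rule computation on the explicit formula $\widehat Z(j)=\exp\bigl(g(j)\bigr)$ with $g(j):=-\tfrac i2\ip{j}{A^{-1}j}$ (Lean: \LN{normalizedGeneratingFunctional_eq}). Since $\exp$ is differentiable on $\CC$ with derivative $\exp$, and $g$ is a quadratic polynomial in $j$, hence smooth, the chain rule gives
\[
  D\widehat Z(j)[h]=\widehat Z(j)\cdot Dg(j)[h].
\]
Taking $h=e_p$ yields $\partial_p\widehat Z(j)=\widehat Z(j)\,\partial_pg(j)$. The proof therefore reduces to showing $\partial_p g(j)=-i(A^{-1}j)_p$.

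To compute $Dg(j)[h]$, I expand $\ip{j+th}{A^{-1}(j+th)}$ by bilinearity:
\[
  \ip{j+th}{A^{-1}(j+th)}=\ip{j}{A^{-1}j}+t\bigl(\ip{h}{A^{-1}j}+\ip{j}{A^{-1}h}\bigr)+t^2\ip{h}{A^{-1}h}.
\]
The derivative at $t=0$ is the linear coefficient $\ip{h}{A^{-1}j}+\ip{j}{A^{-1}h}$. Since $A$ is symmetric, $A^{-1}$ is symmetric too, because $(A^{-1})^\top=(A^\top)^{-1}=A^{-1}$; this step uses invertibility, which positive definiteness supplies. Hence $\ip{j}{A^{-1}h}=\ip{A^{-1}j}{h}$, and the two middle terms coincide. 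This is exactly the mechanism that produced the factor $2$ in the proof of Theorem~\ref{thm:genfunc}(a). So
\[
  Dg(j)[h]=-\tfrac i2\cdot 2\ip{h}{A^{-1}j}=-i\ip{h}{A^{-1}j}.
\]
Taking $h=e_p$ gives $\ip{e_p}{A^{-1}j}=(A^{-1}j)_p$, and therefore
\[
  \partial_p\widehat Z(j)=-i\,(A^{-1}j)_p\,\widehat Z(j).
\]

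The main obstacle is not analytic but bookkeeping, especially in the Lean formalisation. Three points need care. First, the real-to-complex coercion: $g$ maps $\RR^k\to\CC$ through a real quadratic form multiplied by $-i/2$, so $Dg$ must be taken as the real-linear derivative composed with the coercion $\RR\to\CC$. Second, I would establish the derivative of the quadratic form $j\mapsto\ip{j}{A^{-1}j}$ as $h\mapsto\ip{h}{A^{-1}j}+\ip{j}{A^{-1}h}$, for instance by viewing it as a continuous bilinear map evaluated on the diagonal. Third, the symmetry of $A^{-1}$ must be used explicitly to merge the two terms. If the bilinear-map route is awkward, I would fall back on computing $\partial_p$ as a one-variable derivative of $t\mapsto\widehat Z(j+te_p)=\exp(\alpha+\beta t+\gamma t^2)$, exactly as in step (b) of the proof of Theorem~\ref{thm:twopoint}, where now $\beta=-i(A^{-1}j)_p$. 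This works because $\partial_p f(j)=Df(j)[e_p]$ coincides with that line derivative for differentiable $f$, and $\widehat Z$ is smooth (\LN{contDiff_normalizedGeneratingFunctional}).
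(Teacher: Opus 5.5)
Your proposal is correct and is essentially the paper's argument. The paper restricts the quadratic form to the line $t\mapsto j+te_p$, expands it as $\ip{j}{A^{-1}j}+2t\,(A^{-1}j)_p+t^2(A^{-1})_{pp}$ using the symmetry of $A^{-1}$, and differentiates the exponential at $t=0$; your computation of $Dg(j)[h]$ for general $h$ followed by $h=e_p$ does the same thing, and your fallback route is literally the paper's proof.
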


\begin{proof}
Along the line $t\mapsto j+te_p$ the quadratic form expands as
$\ip{j+te_p}{A^{-1}(j+te_p)}
 =\ip{j}{A^{-1}j}+2t\,(A^{-1}j)_p+t^2(A^{-1})_{pp}$,
where the symmetry of $A^{-1}$ is used (Lean: \LN{quadForm_line}).
Differentiating $\exp\bigl(-\tfrac{i}{2}(\,\cdot\,)\bigr)$ at $t=0$ gives the
stated factor $-i(A^{-1}j)_p$.
\end{proof}

\begin{lemma}[Leibniz rule for a linear factor; Lean:
\LN{iterPartialDerivC_linear_mul}]
\label{lem:leibniz}
Let $\ell(j)=\sum_qw_qj_q$ be a linear form with complex coefficients and let
$f$ be smooth. Then, for every list $P$,
\[
  \partial_P(\ell f)(j)=\ell(j)\,\partial_Pf(j)
   +\sum_{(q,P')\in\mathrm{holes}(P)}w_q\,\partial_{P'}f(j).
\]
\end{lemma}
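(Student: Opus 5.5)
We prove the statement by induction on the length of the list $P$, keeping $\ell$ and $f$ arbitrary and using only linearity of $\partial_p$ and the one-step product rule. Recall $\partial_P=\partial_{p_1}\cdots\partial_{p_n}$; we read it as ``apply $\partial_{P'}$ first, then $\partial_{p_1}$'' for $P=p_1::P'$. By the symmetry of mixed partials recorded in \S\ref{ssec:npointdef}, the order does not matter for smooth $f$. The base case $P=\varnothing$ is trivial: $\partial_\varnothing(\ell f)=\ell f$, and $\mathrm{holes}(\varnothing)$ is empty.

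Two facts are needed for the step. First, since $\ell$ is linear, $\partial_p\ell(j)=D\ell(j)[e_p]=w_p$, a constant. Second, the product rule gives $\partial_p(\ell g)=\ell\,\partial_pg+w_p\,g$ for any smooth $g$. Smoothness of all intermediate derivatives $\partial_{P'}f$ follows as in \S\ref{ssec:npointdef} from \LN{ContDiff.fderiv_apply}, so every $\partial$ is applied to a differentiable function.

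For the inductive step, let $P=p::P_0$ and assume the formula holds for $P_0$. Then
\[
  \partial_P(\ell f)=\partial_p\Bigl(\ell\,\partial_{P_0}f+\sum_{(q,P')\in\mathrm{holes}(P_0)}w_q\,\partial_{P'}f\Bigr).
\]
Applying the product rule to the first term gives $\ell\,\partial_p\partial_{P_0}f+w_p\,\partial_{P_0}f$. Applying linearity to the sum turns each summand into $w_q\,\partial_p\partial_{P'}f=w_q\,\partial_{p::P'}f$.

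It remains to match this against the right-hand side for $P$. By definition, $\mathrm{holes}(p::P_0)$ consists of the pair $(p,P_0)$, which contributes $w_p\,\partial_{P_0}f$, together with the pairs $(q,p::P')$ for $(q,P')\in\mathrm{holes}(P_0)$, which contribute $w_q\,\partial_{p::P'}f$. This is exactly the term list produced above, and $\ell\,\partial_p\partial_{P_0}f=\ell\,\partial_Pf$ by definition of $\partial_P$.

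The main obstacle is this combinatorial identification: showing that the recursive definition of $\mathrm{holes}$ splits as (delete the head) $\sqcup$ (cons the head onto every hole of the tail). This must be verified as an equality of lists, or of multisets, so that the sums agree term by term. If $\mathrm{holes}$ happens to be defined by a different recursion, the fallback is to compare the two sums as multisets, or to permute the indices using the symmetry of $\partial_P$. The analytic side, meaning differentiability of each $\partial_{P'}f$ and the product rule, is routine.
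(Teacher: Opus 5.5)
Your proof is correct and has the same skeleton as the paper's. Both use induction on $P$, the product rule with $\partial_p\ell=w_p$, linearity, and the recursion in which $\mathrm{holes}(p::P_0)$ consists of $(p,P_0)$ followed by the pairs $(q,p::P')$ with $(q,P')\in\mathrm{holes}(P_0)$. The difference is which end of the list you peel.

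The paper unfolds $\partial_{p::P}$ as $\partial_P\circ\partial_p$. It first writes $\partial_p(\ell f)=w_pf+\ell\,\partial_pf$, then uses additivity and homogeneity of the iterated operator $\partial_P$ on smooth functions. It applies the induction hypothesis to the \emph{different} smooth function $\partial_pf$. The induction is therefore generalised over $f$, but symmetry of mixed partials is never used.

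You unfold $\partial_{p::P_0}$ as $\partial_p\circ\partial_{P_0}$, apply the hypothesis to $f$ itself, and differentiate once more. This needs only one-step linearity and the product rule. However, the paper's proof indicates that the iterated derivative is defined head-first. In that case your first equality and the identification $\partial_p\partial_{P'}f=\partial_{p::P'}f$ both rest on Schwarz's theorem, which you do cite. That is a heavier input than the lemma needs. Stating the induction for all smooth $f$ and peeling the innermost derivative, as the paper does, removes that dependence.
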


\begin{proof}
Induction on $P$. For the empty list the sum is empty and the assertion is
trivial. For $p\!::\!P$ the product rule gives
$\partial_p(\ell f)=w_pf+\ell\,\partial_pf$; applying $\partial_P$, using its
additivity and homogeneity on smooth functions (Lean:
\LN{iterPartialDerivC_add}, \LN{iterPartialDerivC_const_mul}) and the
induction hypothesis for $\partial_pf$, one obtains exactly the required
right-hand side, the term $w_p\,\partial_Pf$ being the entry of
$\mathrm{holes}(p\!::\!P)$ that removes the head.
\end{proof}

\begin{theorem}[Wick's theorem; Lean: \LN{freeField_wick_theorem}]
\label{thm:wick}
Let $A$ be symmetric positive-definite and $P=(p_1,\dots,p_n)$. Then
\[
  \langle y_{p_1}\cdots y_{p_n}\rangle=W_B(P),
  \qquad B_{pq}=i\,(A^{-1})_{pq}=\langle y_py_q\rangle .
\]
Explicitly, for $n=2m$,
\begin{equation}
  \label{eq:wick}
  \langle y_{p_1}\cdots y_{p_{2m}}\rangle
   =\sum_{\pi\in\mathrm{Pairings}(2m)}\ \prod_{(i,j)\in\pi}
      \langle y_{p_i}y_{p_j}\rangle
   =i^{\,m}\sum_{\pi\in\mathrm{Pairings}(2m)}\ \prod_{(i,j)\in\pi}
      (A^{-1})_{p_ip_j},
\end{equation}
while all correlations of odd order vanish.
\end{theorem}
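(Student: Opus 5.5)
The plan is to prove a stronger identity for the derivatives of $\widehat Z$ at an arbitrary source $j$, and only then set $j=0$. Put $B=iA^{-1}$ and $\ell_p(j):=-i(A^{-1}j)_p$, a linear form with complex coefficients $w^{(p)}_q=-i(A^{-1})_{pq}$. Lemma~\ref{lem:firstderiv} then reads $\partial_p\widehat Z=\ell_p\widehat Z$. The target is a closed form for $\partial_P\widehat Z(j)$: it should equal $\widehat Z(j)$ times a polynomial in $j$ whose value at $j=0$ is $(-i)^{-n}W_B(P)=i^nW_B(P)$.

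Rather than guessing the polynomial, I will prove the statement at $j=0$ directly by strong induction on the length of $P$. The induction hypothesis will be taken in the generalised form
\[
(-i)^{|Q|}\,\partial_Q\bigl(\ell\,\widehat Z\bigr)(0)=\sum_{(q,Q')\in\mathrm{holes}(Q)}(-i)\,w_q\,W_B(Q')
\]
for every linear form $\ell=\sum_q w_qj_q$ and every list $Q$ shorter than $P$. This generalised form is exactly what Lemma~\ref{lem:leibniz} delivers. The base case is $P=\varnothing$: here $\partial_\varnothing\widehat Z(0)=\widehat Z(0)=1=W_B(\varnothing)$.

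For the step, write $P=p::P_0$. Since the order of differentiation is irrelevant (the symmetry of iterated partial derivatives noted in \S\ref{ssec:npointdef}), I may peel off $\partial_p$ first, which gives $\partial_P\widehat Z=\partial_{P_0}(\ell_p\widehat Z)$. Lemma~\ref{lem:leibniz} then yields
\[
\partial_{P_0}(\ell_p\widehat Z)(j)=\ell_p(j)\,\partial_{P_0}\widehat Z(j)+\sum_{(q,P')\in\mathrm{holes}(P_0)}w^{(p)}_q\,\partial_{P'}\widehat Z(j).
\]
At $j=0$ the first term vanishes because $\ell_p(0)=0$. Each $P'$ has length $n-2$, so by the induction hypothesis $\partial_{P'}\widehat Z(0)=(-i)^{-(n-2)}W_B(P')$. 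Multiplying through by $(-i)^n$, the factor $(-i)^2w^{(p)}_q=(-1)\cdot(-i)(A^{-1})_{pq}=i(A^{-1})_{pq}=B_{pq}$ appears, and we obtain
\[
\langle y_P\rangle=\sum_{(q,P')\in\mathrm{holes}(P_0)}B_{pq}\,W_B(P').
\]
This is exactly the recursion of Definition~\ref{def:pairingsum}, so the claim follows. The induction therefore only needs the plain hypothesis on $\partial_{P'}\widehat Z(0)$; the generalised version above turns out to be unnecessary, because Lemma~\ref{lem:leibniz} is applied at arbitrary $j$ and evaluated afterwards.

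It remains to derive the explicit form~\eqref{eq:wick}. Unfolding the recursion identifies $W_B(P)$ with a sum over perfect matchings, as remarked after Definition~\ref{def:pairingsum}. Each matching has $m$ pairs, so each term carries $m$ factors $B$, and pulling out $i^m$ gives the last expression. Theorem~\ref{thm:twopoint} identifies $B_{pq}$ with $\langle y_py_q\rangle$. The odd case is Theorem~\ref{thm:wickodd}.

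I expect the main obstacle to be the bookkeeping around Lemma~\ref{lem:leibniz}. Its application requires $\ell_p\widehat Z$ to be smooth, and it requires that iterated partials of $\widehat Z$ at a general $j$ are well defined, so that evaluation at $0$ can be taken after differentiating. One must also confirm that the deleted-entry lists produced by $\mathrm{holes}$ match the recursion exactly, including multiplicities when indices repeat. Since both sides use the same $\mathrm{holes}$ operation on $P_0$, this should be literal. The smoothness follows from \LN{contDiff\_normalizedGeneratingFunctional} and the fact that $\ell_p$ is polynomial.
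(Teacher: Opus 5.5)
Your proposal is correct and follows the paper's proof essentially step for step. You combine Lemmas~\ref{lem:firstderiv} and~\ref{lem:leibniz} and evaluate at $j=0$, where $\ell_p(0)=0$, to get the recursion $D(p::P_0)=\sum_{(q,P')\in\mathrm{holes}(P_0)}\bigl(-i(A^{-1})_{pq}\bigr)D(P')$ for $D(P):=\partial_P\widehat Z(0)$. Inducting on the length with $(-i)^2w^{(p)}_q=B_{pq}$ then turns this into the defining recursion of $W_B$, and the matching form with the factor $i^m$ is read off as in the paper. The differences are inessential: the appeal to symmetry of mixed partials is not needed, since the paper's convention (visible in the proof of Lemma~\ref{lem:leibniz}) already differentiates in the head index first. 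The generalised induction hypothesis is superfluous, as you note. For the odd clause you only need the combinatorial fact $W_B(P)=0$ for odd $|P|$ from the proof of Theorem~\ref{thm:wickodd}, because that theorem is itself derived from the identity you have just proved.
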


\begin{proof}
Write $D(P)=\partial_P\widehat Z(0)$. By Lemma~\ref{lem:firstderiv},
$\partial_{p}\widehat Z=\ell\,\widehat Z$ with the linear form
$\ell(j)=-i\sum_q(A^{-1})_{pq}j_q$, whose coefficients are
$w_q=-i(A^{-1})_{pq}$. Applying Lemma~\ref{lem:leibniz} with $f=\widehat Z$
and evaluating at $j=0$, where $\ell(0)=0$, we obtain the Wick recursion
\begin{equation}
  \label{eq:wickrec}
  D(p\!::\!P)=\sum_{(q,P')\in\mathrm{holes}(P)}
    \bigl(-i\,(A^{-1})_{pq}\bigr)\,D(P')
\end{equation}
(Lean: \LN{iterPartialDerivC_ngf_cons_zero}).

We now induct on the length $n$ of the list. For $n=0$ both sides equal $1$,
because $\widehat Z(0)=1$. Let $n\ge1$ and $P=p\!::\!P_0$. Every pair
$(q,P')\in\mathrm{holes}(P_0)$ satisfies $|P'|=n-2$. Multiplying
\eqref{eq:wickrec} by $(-i)^n$ and using
$(-i)^{|P'|+3}=(-i)^{|P'|}(-i)^3=i\,(-i)^{|P'|}$, we get
\[
  \langle y_{p}y_{P_0}\rangle
   =\sum_{(q,P')\in\mathrm{holes}(P_0)}
      i\,(A^{-1})_{pq}\cdot(-i)^{|P'|}D(P'),
\]
where the reindexing of the power of $-i$ is carried out as follows: the list
$p\!::\!P_0$ has $n=|P_0|+1$ entries and $|P'|=|P_0|-1=n-2$, so
$n=|P'|+2$ and
\[
  (-i)^n\cdot\bigl(-i(A^{-1})_{pq}\bigr)
   =(-i)^{|P'|+2}(-i)\,(A^{-1})_{pq}
   =(-i)^{|P'|}\,(-i)^3\,(A^{-1})_{pq}
   =(-i)^{|P'|}\cdot i\,(A^{-1})_{pq},
\]
because $(-i)^2=-1$ and hence $(-i)^3=(-i)^2(-i)=(-1)(-i)=i$.
The induction hypothesis identifies
$(-i)^{|P'|}D(P')$ with $W_B(P')$, and the coefficient $i(A^{-1})_{pq}$ that
has just been isolated is precisely the weight $B_{pq}$. This is exactly the
recursion of Definition~\ref{def:pairingsum} for $W_B$ with $B=i\,A^{-1}$.

Finally, $B_{pq}=i(A^{-1})_{pq}=\langle y_py_q\rangle$ by
Theorem~\ref{thm:twopoint}, which gives the first equality
of~\eqref{eq:wick}. For the second, let $n=2m$; a perfect matching $\pi$ of
$2m$ positions consists of exactly $m$ unordered pairs, and each pair
contributes one factor $B_{p_ip_j}=i(A^{-1})_{p_ip_j}$. Extracting the factor
$i$ from each of the $m$ pairs,
\[
  \prod_{(i,j)\in\pi}B_{p_ip_j}
   =\prod_{(i,j)\in\pi}i\,(A^{-1})_{p_ip_j}
   =i^{\,m}\prod_{(i,j)\in\pi}(A^{-1})_{p_ip_j},
\]
and summing over $\pi$ gives the closed form
$i^{\,m}\sum_\pi\prod_{(i,j)\in\pi}(A^{-1})_{p_ip_j}$ of~\eqref{eq:wick}.
(The corresponding statement has been machine-checked:
\LN{freeField_wick_theorem}.)
\end{proof}

\subsection{The four-point function and interpretation}
\label{ssec:fourpoint}

\begin{corollary}[Four-point function; Lean: \LN{freeFieldNPoint_four}]
\label{cor:fourpoint}
For a symmetric positive-definite $A$ and $p,q,r,s<k$,
\[
  \langle y_py_qy_ry_s\rangle
  =-\bigl[(A^{-1})_{pq}(A^{-1})_{rs}
         +(A^{-1})_{pr}(A^{-1})_{qs}
         +(A^{-1})_{ps}(A^{-1})_{qr}\bigr].
\]
\end{corollary}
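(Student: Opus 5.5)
The plan is to specialise Theorem~\ref{thm:wick} to the list $P=(p,q,r,s)$ of length $n=4$, so that $m=2$, and evaluate the pairing sum $W_B(P)$ with $B=i\,A^{-1}$ by unfolding the recursion of Definition~\ref{def:pairingsum} by hand. Since Theorem~\ref{thm:wick} already gives $\langle y_py_qy_ry_s\rangle=W_B(p,q,r,s)$, no analytic input is needed beyond that theorem. What remains is a finite computation with lists, followed by collecting the powers of $i$.

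First I will compute $\mathrm{holes}(q,r,s)=\{(q,(r,s)),\,(r,(q,s)),\,(s,(q,r))\}$. The definition then gives
\[
  W_B(p,q,r,s)=B_{pq}W_B(r,s)+B_{pr}W_B(q,s)+B_{ps}W_B(q,r).
\]
Next I will evaluate the two-element sums. Here $\mathrm{holes}(s)=\{(s,\varnothing)\}$ and $W_B(\varnothing)=1$, so $W_B(r,s)=B_{rs}$, and similarly $W_B(q,s)=B_{qs}$ and $W_B(q,r)=B_{qr}$. This yields the three pairings
\[
  W_B(P)=B_{pq}B_{rs}+B_{pr}B_{qs}+B_{ps}B_{qr}.
\]

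Finally I will substitute $B_{xy}=i(A^{-1})_{xy}$. Each product then carries $i^2=-1$, which is the factor $i^{m}$ with $m=2$ in~\eqref{eq:wick}, and this produces exactly the asserted expression with the overall minus sign.

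The only point requiring care is bookkeeping: the order in which $\mathrm{holes}$ lists its entries must match the three terms, and the correct index pairs must be recorded. In Lean this reduces to \LN{simp} unfolding \LN{holes} and \LN{wickPairingSum} on a concrete four-element list, followed by \LN{ring} with $I^2=-1$. I do not expect a genuine obstacle. If unfolding the definitions proved awkward, I would instead apply the recursion~\eqref{eq:wickrec} twice directly to $D(p,q,r,s)$ and multiply by $(-i)^4=1$, which gives the same three terms with coefficients $(-i)^2(A^{-1})(A^{-1})=-(A^{-1})(A^{-1})$.
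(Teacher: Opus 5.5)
Your proposal is correct and is essentially the paper's argument: the paper applies Theorem~\ref{thm:wick} with $m=2$, lists the three matchings $pq\,|\,rs$, $pr\,|\,qs$, $ps\,|\,qr$ by pairing $p$ with each remaining index (exactly your unfolding of $\mathrm{holes}(q,r,s)$), and then uses $i^2=-1$. Unfolding the recursion for $W_B$ directly, instead of reading the terms off the closed form~\eqref{eq:wick}, has one small advantage: it does not depend on the identification of $W_B$ with the sum over perfect matchings, which the paper states but does not prove.
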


\begin{proof}
Apply Theorem~\ref{thm:wick} with $n=2m=4$, so $m=2$. The perfect matchings of
the four positions $\{1,2,3,4\}$, labelled here by the indices $p,q,r,s$, are
exactly three:
\[
  pq\,|\,rs,\qquad pr\,|\,qs,\qquad ps\,|\,qr ,
\]
since the position of $p$ may be matched with any one of the three remaining
positions and the two positions then left over are forced to be matched with
each other. Formula~\eqref{eq:wick} therefore reads
\[
  \langle y_py_qy_ry_s\rangle
   =i^{\,2}\bigl[(A^{-1})_{pq}(A^{-1})_{rs}+(A^{-1})_{pr}(A^{-1})_{qs}
     +(A^{-1})_{ps}(A^{-1})_{qr}\bigr],
\]
and $i^2=-1$ gives the stated sign. (The corresponding statement has been
machine-checked: \LN{freeFieldNPoint_four}.)
\end{proof}

\begin{remark}[The real-time factor]
\label{rem:wickfactor}
The sign in Corollary~\ref{cor:fourpoint} is the real-time counterpart of the
Euclidean formula
$\langle y_py_qy_ry_s\rangle_{\mathrm{E}}
 =(A^{-1})_{pq}(A^{-1})_{rs}+(A^{-1})_{pr}(A^{-1})_{qs}
  +(A^{-1})_{ps}(A^{-1})_{qr}$:
each contraction carries the Feynman factor $i$ of
Remark~\ref{rem:factor-i}, so that a correlation of order $2m$ carries the
factor $i^{\,m}$. As in the case of the two-point function, this factor is
obtained here as a theorem, not postulated.
\end{remark}

\begin{remark}[What this closes]
\label{rem:wickclosure}
Theorem~\ref{thm:wick} completes the description of the free field in the
Henstock--Kurzweil (gauge) picture: the value of the functional integral
(Theorem~\ref{thm:fresnelgeneral}), the generating functional
(Theorem~\ref{thm:genfunc}) and now the whole hierarchy of correlation
functions are available, and all of them are consequences of the Fresnel
formula proved without any Wick rotation. Note that no additional analytic
work was needed: the derivatives are taken of the closed-form expression for
$\widehat Z$, which was itself obtained as a limit of Henstock--Kurzweil
integrals.
\end{remark}

\section{The oscillating premeasure in the general theory}
\label{sec:cylgeneral}

\subsection{The oscillating Gaussian cylindrical premeasure}
\label{ssec:cylosc}

The structure \LN{InfiniteHK.CylindricalPremeasure} of
stage~\cite{BerdinskyInfiniteHK} stores a family of \emph{measures} $\mu_V$
with values in $[0,\infty]$. The oscillating premeasure takes complex values
and has infinite total variation (its modulus is Lebesgue measure), so it
cannot in principle be an instance of that structure. We therefore introduce
its complex analogue \LN{InfiniteHK3.ComplexCylindricalPremeasure}: an
orthonormal mode system $e:\NN\to\HH$ together with a function
$(n,s)\mapsto\mu(n,s)\in\CC$ giving the value on the cylinder over the product
$\prod_{i<n}s_i$; the cylinder in $\HH$ itself is denoted \LN{cylinder}.

The oscillating Gaussian cylindrical premeasure and its normalised version are
\[
  \mu^{\mathrm{osc}}(n,s)=\int_{\prod_{i<n}s_i}
    e^{\frac{i}{2}\sum_{i<n}a_iy_i^2}\,dy,
  \qquad
  \widehat\mu^{\mathrm{osc}}(n,s)=
    \frac{\mu^{\mathrm{osc}}(n,s)}{\prod_{i<n}\sqrt{2\pi/a_i}\,e^{i\pi/4}}.
\]
In Lean these are \LN{oscGaussianCylindricalPremeasure} and
\LN{oscGaussianCylindrical}\allowbreak\LN{PremeasureNormalized}.

\subsection{Projective consistency}
\label{ssec:cylconsist}

\begin{definition}[Projective consistency of a complex cylindrical premeasure]
\label{def:cylconsist}
A complex cylindrical premeasure $\mu$ is projectively consistent (in Lean,
\LN{ComplexCylindricalPremeasure.IsProjectivelyConsistent}) if for every $n$
and every family $s=(s_i)_{i<n}$
\[
  \lim_{R\to\infty}\mu\bigl(n+1,\;s\times[-R,R]\bigr)=\mu(n,s).
\]
The limiting form is exactly the correct formulation of the Kolmogorov
identity for an improper (gauge) integral: the total mass of the added mode is
attained only in the limit.
\end{definition}

\begin{theorem}[Lean: \LN{oscGaussian_isProjectivelyConsistent}]
\label{thm:cylconsist}
Let $e$ be an orthonormal mode system and $a_i>0$ for all $i$. Then the
normalised oscillating Gaussian cylindrical premeasure
$\widehat\mu^{\mathrm{osc}}$ is projectively consistent in the sense of
Definition~\ref{def:cylconsist}.
\end{theorem}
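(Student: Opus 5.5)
The plan is to reduce the statement to Theorem~\ref{thm:consistency} (equivalently, to the Kolmogorov identity \eqref{eq:kolmogorov}), since the complex cylindrical premeasure $\widehat\mu^{\mathrm{osc}}(n,s)$ is by definition the normalised premeasure $\widehat\mu^{\mathrm{osc}}_{V_n}(\prod_{i<n}s_i)$ written in the coordinates of the first $n$ modes. Fix $n$ and a family $s=(s_i)_{i<n}$ of measurable sets. Write $Z_n:=\prod_{i<n}\sqrt{2\pi/a_i}\,e^{i\pi/4}$ and $\phi_n:=\sqrt{2\pi/a_n}\,e^{i\pi/4}$, so that $Z_{n+1}=Z_n\phi_n$.

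The first step is the factorisation \eqref{eq:premeasurefact}. The integrand $\exp(\frac i2\sum_{i\le n}a_iy_i^2)$ splits as $\prod_{i<n}e^{ia_iy_i^2/2}\cdot e^{ia_ny_n^2/2}$, and the domain $\prod_{i<n}s_i\times[-R,R]$ is a product. Fubini's theorem then gives
\[
  \mu^{\mathrm{osc}}(n+1,s\times[-R,R])=\mu^{\mathrm{osc}}(n,s)\int_{-R}^{R}e^{ia_nx^2/2}\,dx .
\]
This is the step where I expect the main obstacle. Fubini requires integrability of the first factor on $\prod_{i<n}s_i$, and for an unbounded $s_i$ the integrand has modulus $1$ and is not Lebesgue integrable. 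I would first follow Theorem~\ref{thm:consistency}, which assumes finite Lebesgue measure, and restrict to the sets $s$ for which $\mu^{\mathrm{osc}}(n,s)$ is defined as a Bochner integral. If the formal definition instead assigns the value $0$ to non-integrable sets, the identity must be checked directly in that case. The product $\prod_{i<n}s_i\times[-R,R]$ is non-integrable precisely when $\prod_i s_i$ is, because $[-R,R]$ has finite positive measure; both sides are then $0$, and the limit of a constant zero sequence is $0$. Lean's conventions make this case split routine.

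The second step passes to the limit. By Theorem~\ref{thm:fresnel1} with $a=a_n>0$, the factor $\int_{-R}^Re^{ia_nx^2/2}dx$ tends to $\phi_n$. Since $\mu^{\mathrm{osc}}(n,s)$ is independent of $R$, the unnormalised premeasure therefore tends to $\mu^{\mathrm{osc}}(n,s)\,\phi_n$.

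The final step divides by $Z_{n+1}=Z_n\phi_n$. This is legitimate because $Z_{n+1}\ne0$: each factor of $Z_{n+1}$ is a positive real number times a unit complex number, exactly as in the non-vanishing argument of Theorem~\ref{thm:genfunc}(e). Dividing by a nonzero constant preserves limits, so
\[
  \lim_{R\to\infty}\widehat\mu^{\mathrm{osc}}(n+1,s\times[-R,R])=\frac{\mu^{\mathrm{osc}}(n,s)}{Z_n}=\widehat\mu^{\mathrm{osc}}(n,s),
\]
which is Definition~\ref{def:cylconsist}. The orthonormality of $e$ plays no role beyond fixing the mode system, since the premeasure is defined entirely in coordinates.
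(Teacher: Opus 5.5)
Your proposal is correct and follows essentially the same route as the paper: it takes the unnormalised limit from Theorem~\ref{thm:consistency} (Fubini plus the one-dimensional Fresnel formula), uses the product relation $Z_{n+1}=Z_n\sqrt{2\pi/a_n}\,e^{i\pi/4}$ with every factor nonzero, and cancels after dividing by $Z_{n+1}$. Your separate treatment of the infinite-measure case, where both sides vanish under the Bochner convention, is a useful addition, since the proof of Theorem~\ref{thm:consistency} tacitly assumes that $\prod_{i<k}s_i$ has finite Lebesgue measure.
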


\begin{proof}
\emph{(a) The unnormalised limit.} Fix $n$ and a family $s=(s_i)_{i<n}$ of
measurable subsets of $\RR$. Theorem~\ref{thm:consistency}, applied with $k=n$
and with the added mode of eigenvalue $a_n>0$, gives
\begin{equation}
  \label{eq:unnormlimit}
  \lim_{R\to\infty}\mu^{\mathrm{osc}}\bigl(n+1,\;s\times[-R,R]\bigr)
   =\mu^{\mathrm{osc}}(n,s)\cdot\sqrt{\frac{2\pi}{a_n}}\,e^{i\pi/4}.
\end{equation}

\emph{(b) The product formula for the Fresnel mass.} Write
\[
  Z_n:=\prod_{i<n}\sqrt{\frac{2\pi}{a_i}}\,e^{i\pi/4}
\]
for the total Fresnel mass of the first $n$ modes. Splitting off the last
factor of a product over $\{0,\dots,n\}$ (\LN{Fin.prod_univ_castSucc}),
\begin{equation}
  \label{eq:massproduct}
  Z_{n+1}=Z_n\cdot\sqrt{\frac{2\pi}{a_n}}\,e^{i\pi/4}.
\end{equation}
Each factor is a product of a strictly positive real number with a complex
number of modulus one, hence nonzero (\LN{fresnelValue1_ne_zero}); therefore
$Z_n\ne0$ and $Z_{n+1}\ne0$.

\emph{(c) Cancellation.} By definition
$\widehat\mu^{\mathrm{osc}}(m,\cdot)=\mu^{\mathrm{osc}}(m,\cdot)/Z_m$.
Dividing~\eqref{eq:unnormlimit} by the nonzero constant $Z_{n+1}$ and
using~\eqref{eq:massproduct},
\[
  \lim_{R\to\infty}\widehat\mu^{\mathrm{osc}}\bigl(n+1,\;s\times[-R,R]\bigr)
   =\frac{\mu^{\mathrm{osc}}(n,s)\,\sqrt{2\pi/a_n}\,e^{i\pi/4}}
          {Z_n\,\sqrt{2\pi/a_n}\,e^{i\pi/4}}
   =\frac{\mu^{\mathrm{osc}}(n,s)}{Z_n}
   =\widehat\mu^{\mathrm{osc}}(n,s),
\]
which is exactly the identity of Definition~\ref{def:cylconsist}. (The
corresponding statement has been machine-checked:
\LN{oscGaussian_isProjectivelyConsistent}.)
\end{proof}

\subsection{Identification of the HK integral with the Fresnel value}
\label{ssec:hkfresnel}

Theorem~\ref{thm:exists} states that the HK integral exists on every bounded
box, and Theorem~\ref{thm:fresnelfree} that the limit of the integrals over
cubes equals the Fresnel value. The statement linking the two is as follows.

\begin{theorem}[Value of the free-field HK integral; Lean:
\LN{fresnelValue_eq_HKIntegral}]
\label{thm:hkfresnel}
Let $a_i>0$ for $i<k$ and let $l$ be an arbitrary set of integration
parameters (in particular the Henstock one). Write $I_n$ for the box
$[-(n+1),n+1]^k$ (Lean: \LN{cubeBox}). Then:
\begin{enumerate}
  \item for every $n$ the functional $e^{iS}$ is HK-integrable on $I_n$, and
    its HK integral equals $\int_{I_n}e^{iS(y)}dy$;
  \item $\displaystyle\lim_{n\to\infty}\int_{I_n}e^{iS(y)}dy
    =(2\pi)^{k/2}e^{i\pi k/4}(\det A)^{-1/2}$.
\end{enumerate}
Thus the value of the free-field functional integral on the whole space is the
Fresnel value: no residual freedom and no dependence on the gauge remains.
\end{theorem}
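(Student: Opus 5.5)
The plan is to prove the two items separately. Each reduces almost verbatim to a result already established: item~1 to Theorem~\ref{thm:exists}, and item~2 to Theorem~\ref{thm:fresnelfree}. Here $A=\mathrm{diag}(a)$ is the diagonal action in the mode basis, and $S(y)=\tfrac12\sum_{i<k}a_iy_i^2$.

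For item~1, fix $n$. The box $I_n=[-(n+1),n+1]^k$ is a bounded (indeed compact) box in $\RR^k$. The integrand $y\mapsto e^{iS(y)}$ is the composite of a polynomial with the complex exponential, so it is continuous and bounded by $1$ in modulus. Theorem~\ref{thm:exists} then gives, for an arbitrary set of integration parameters $l$, HK-integrability on $I_n$ with HK integral equal to the Lebesgue/Riemann integral $\int_{I_n}e^{iS(y)}\,dy$.

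If one wants the argument spelled out rather than quoted, I would repeat it. By Heine--Cantor, uniform continuity gives a $\rho>0$ with Riemann-sum error at most $\varepsilon$ for every $\rho$-fine tagged division. The constant gauge $\delta\equiv\rho$ witnesses integrability. Cousin's lemma for $l$ ensures that $\delta$-fine divisions exist, so the HK value is well defined and unique. The only point to check is that the hypotheses imposed on $l$ in the Lean API include existence of fine divisions. This is the same hypothesis already used in Theorem~\ref{thm:exists}, so no new work is needed.

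For item~2, let $R_n:=n+1$, so that $I_n=[-R_n,R_n]^k$. By Theorem~\ref{thm:fresnelfree},
\[
  \int_{[-R,R]^k}e^{iS(y)}\,dy\longrightarrow(2\pi)^{k/2}e^{i\pi k/4}(\det A)^{-1/2}
  \qquad (R\to\infty).
\]
Since $R_n\to\infty$, composing this real-parameter limit with the sequence $n\mapsto R_n$ (the tendsto-composition lemma, $R_n$ tending to $\infty$ along \texttt{atTop}) gives the sequential limit. A quantitative alternative is also available: expand the difference of products via the factorisation~\eqref{eq:cubefact}, and bound each one-dimensional factor using Theorem~\ref{thm:fresnel1}, which gives the error $4/(a_iR_n)$ per factor. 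Either route works; the first is a single line.

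The only genuine obstacle I anticipate is bookkeeping rather than analysis. One must ensure that the HK integral in item~1 and the Lebesgue integral appearing in Theorem~\ref{thm:fresnelfree} refer to the same object on $I_n$, namely the same measure (volume on $\RR^k$ versus the product of one-dimensional volumes) and the same box convention (closed versus half-open, which differ by a null set). I would settle this by rewriting the box integral as an integral over \texttt{Set.Icc} via the measure-zero boundary. After that, the final sentence of the statement follows directly: uniqueness of the HK value on each $I_n$ (Corollary~\ref{cor:arb}) together with uniqueness of limits leaves no freedom in the value.
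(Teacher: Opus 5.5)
Your proposal is correct and takes essentially the same route as the paper. Item~1 is Theorem~\ref{thm:exists} applied to $I=I_n$. Item~2 is Theorem~\ref{thm:fresnelfree} combined with $R=n+1\to\infty$, where the closed-versus-half-open issue (the carrier of the Mathlib box $I_n$ is $(-R,R]^k$) is resolved by a null-set argument. The only cosmetic difference is that the paper reruns the Fubini factorisation directly over $(-R,R]^k$ (\LN{fresnel_multi_Ioc}) instead of rewriting the box integral as an integral over the closed cube; the two are equivalent.
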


\begin{proof}
\emph{(a) HK integrability on each box.} The box $I_n$ is a bounded box in
$\RR^k$ and $e^{iS}$ is continuous, so the first item is precisely
Theorem~\ref{thm:exists} applied to $I=I_n$; in particular the HK integral
exists for every set of integration parameters $l$ and coincides with
$\int_{I_n}e^{iS(y)}dy$.

\emph{(b) The carrier of the box is a half-open cube.} In the sense of
\LN{BoxIntegral.Box}, the carrier of $I_n$ is the half-open cube
\[
  (-R,R]^k,\qquad R=n+1 ,
\]
(\LN{coe_cubeBox}): a Mathlib box is the product of half-open intervals
$(l_i,u_i]$, and $I_n$ is the box with $l_i=-(n+1)$, $u_i=n+1$ for every $i$.

\emph{(c) The Fresnel formula for half-open cubes.} The proof of
Theorem~\ref{thm:fresnelfree} applies verbatim with $[-R,R]$ replaced by
$(-R,R]$: the factorisation of the integrand and Fubini's theorem are
unaffected, and each one-dimensional integral over $(-R,R]$ differs from the
one over $[-R,R]$ by the integral over the single point $\{-R\}$, a Lebesgue
null set, hence not at all. Therefore
\[
  \lim_{R\to\infty}\int_{(-R,R]^k}e^{iS(y)}dy
   =(2\pi)^{k/2}e^{i\pi k/4}(\det A)^{-1/2}
\]
(Lean: \LN{fresnel_multi_Ioc}).

\emph{(d) Conclusion.} Substituting $R=n+1$ and letting $n\to\infty$, which is
legitimate because $R=n+1\to\infty$, gives the second item. Thus the value of
the free-field functional integral on the whole space is the Fresnel value.
(The corresponding statement has been machine-checked:
\LN{fresnelValue_eq_HKIntegral}.)
\end{proof}

\section{An application: the Black--Scholes formula as a
Henstock--Kurzweil integral}
\label{sec:finance}

The machinery assembled in Sections~\ref{sec:osc}--\ref{sec:cylgeneral} is not
confined to the oscillating case. The present section carries out, in full and
with no appeal to any external preprint, the derivation of the Black--Scholes
formula for a European call option as a Henstock--Kurzweil integral with
respect to a Gaussian cylindrical premeasure. Only two ingredients of this
paper are used: the Gaussian tail identity implicit in
Theorem~\ref{thm:coordgauss} (through the standard normal distribution
function $N$) and the identification of the HK value with the value of the
ordinary integral (Corollary~\ref{cor:arb} and Theorem~\ref{thm:hkfresnel}).
No Wick rotation is performed. A parallel treatment, with barrier options and
a machine-verified pricing, is \cite{BerdinskyUshakovFinance}.

\subsection{Setup}
\label{ssec:bs-setup}

Let $S_t$ denote the price of a risky asset at time $t\in[0,T]$, let $r\ge0$
be the risk-free interest rate and $\sigma>0$ the volatility. Under the
risk-neutral measure $\mathbb Q$ the discounted price $e^{-rt}S_t$ is a
martingale, which for the geometric Brownian model means
\begin{equation}
  \label{eq:bs-model}
  S_t=S_0\exp\Bigl(\bigl(r-\tfrac{\sigma^2}{2}\bigr)t+\sigma W_t\Bigr),
\end{equation}
where $W$ is a standard Brownian motion under $\mathbb Q$. The drift
$r-\sigma^2/2$ is exactly the one for which
$\mathbb E^{\mathbb Q}[S_t]=S_0e^{rt}$: indeed, for a centred Gaussian
variable $G$ of variance $v$ one has $\mathbb E[e^{G}]=e^{v/2}$, so
$\mathbb E^{\mathbb Q}\bigl[e^{\sigma W_t}\bigr]=e^{\sigma^2t/2}$ and the two
exponentials cancel.

\subsection{The Feynman--Kac representation as a Gaussian HK integral}
\label{ssec:bs-fk}

The price at time $0$ of a European call with strike $K$ and maturity $T$ is
\begin{equation}
  \label{eq:bs-price}
  C(S_0,K,T)=e^{-rT}\,\mathbb E^{\mathbb Q}\bigl[(S_T-K)^{+}\bigr],
  \qquad x^{+}:=\max(x,0).
\end{equation}
The expectation on the right is an integral over the probability space of
Brownian paths, and we shall write it, equivalently, as a
Henstock--Kurzweil path integral with respect to the Gaussian cylindrical
premeasure $\DHK$ of Section~\ref{sec:reminder}:
\begin{equation}
  \label{eq:bs-price-HK}
  C(S_0,K,T)
   =e^{-rT}\int (S_T-K)^{+}\,\DHK .
\end{equation}
The notation is the same as in
\cite{BerdinskyInfiniteHK2,BerdinskyUshakovFinance}: $\DHK$ denotes the
Gaussian cylindrical premeasure of the path space, and $\int\cdot\,\DHK$ is
the HK functional integral defined by the limit of Riemann sums over
$\delta$-fine divisions (Definition~\ref{def:premeasure} and
Corollary~\ref{cor:arb}); the terminal value $S_T$ is a function of the path,
that is, a cylindrical functional in the sense of Definition~\ref{def:cyl},
and by Corollary~\ref{cor:arb} its HK integral exists and equals the value of
the ordinary integral against the Gaussian measure. The standard construction
of the path-space measure, together with the weak convergence of the
associated finite-dimensional marginals, is set out in
\cite{Billingsley1999}.

That space carries precisely the object studied in this paper:
the Gaussian cylindrical premeasure of Definition~\ref{def:premeasure},
restricted to a one-dimensional field, the modes being the coordinates of the
path in any orthonormal family. By Corollary~\ref{cor:arb} the integral of a
continuous cylindrical functional against this premeasure exists as a
Henstock--Kurzweil integral along arbitrary cylindrical divisions and has the
value of the ordinary integral; by Theorem~\ref{thm:vnvm} the value does not
depend on the number of modes used to describe the cylinder. Thus
\eqref{eq:bs-price} may be read throughout as a Gaussian HK integral, and
every equality below is an equality of HK values.

The payoff $(S_T-K)^{+}$ depends on the path only through its terminal value,
i.e. through a single coordinate; by the consistency of
Theorem~\ref{thm:vnvm} the whole computation therefore collapses to a
one-dimensional Gaussian integral, which we now evaluate.

\subsection{Change of variables to the log-price}
\label{ssec:bs-log}

Put $x=\ln(S/S_0)$, so that $S=S_0e^{x}$. By~\eqref{eq:bs-model} the terminal
log-price is
\begin{equation}
  \label{eq:bs-xT}
  x_T=\bigl(r-\tfrac{\sigma^2}{2}\bigr)T+\sigma W_T ,
\end{equation}
a Gaussian variable with mean $m:=(r-\sigma^2/2)T$ and variance $\sigma^2T$,
since $W_T$ is centred Gaussian of variance $T$. Writing
\begin{equation}
  \label{eq:bs-density}
  \rho(x)=\frac{1}{\sigma\sqrt{2\pi T}}\,
    \exp\Bigl(-\frac{(x-m)^2}{2\sigma^2T}\Bigr)
\end{equation}
for its density, the payoff becomes $(S_0e^{x_T}-K)^{+}$ and
\begin{equation}
  \label{eq:bs-integral}
  C=e^{-rT}\int_{\RR}\bigl(S_0e^{x}-K\bigr)^{+}\rho(x)\,dx .
\end{equation}

\subsection{Reduction to two Gaussian tails}
\label{ssec:bs-reduction}

The payoff is positive exactly when $S_0e^{x}>K$, i.e. when
$x>d:=\ln(K/S_0)$; hence
\[
  \bigl(S_0e^{x}-K\bigr)^{+}=\bigl(S_0e^{x}-K\bigr)\,\mathbf 1_{\{x>d\}} .
\]
Substituting into~\eqref{eq:bs-integral} and splitting the integral, which is
legitimate because both pieces are absolutely convergent (the Gaussian density
decays faster than $e^{x}$ grows),
\begin{equation}
  \label{eq:bs-split}
  C=e^{-rT}\Bigl[S_0\int_d^{\infty}e^{x}\rho(x)\,dx
      -K\int_d^{\infty}\rho(x)\,dx\Bigr].
\end{equation}
Both integrals are Gaussian tails. Denote by
\begin{equation}
  \label{eq:bs-N}
  N(u)=\int_{-\infty}^{u}\frac{1}{\sqrt{2\pi}}e^{-v^2/2}\,dv
\end{equation}
the distribution function of the standard normal law; it is the
one-dimensional case of the density of Theorem~\ref{thm:coordgauss}, and
$N(u)+N(-u)=1$ by the symmetry $v\mapsto-v$ of the density.

\subsection{Evaluation of the second integral}
\label{ssec:bs-second}

In $\int_d^{\infty}\rho(x)dx$ substitute
\[
  y=\frac{x-m}{\sigma\sqrt T},\qquad dx=\sigma\sqrt T\,dy ,
\]
which is an increasing affine bijection of $\RR$; the lower limit $x=d$ becomes
$y=d':=(d-m)/(\sigma\sqrt T)$ and the density becomes the standard one:
\[
  \int_d^{\infty}\rho(x)\,dx
   =\int_{d'}^{\infty}\frac{1}{\sqrt{2\pi}}e^{-y^2/2}\,dy
   =1-N(d')=N(-d').
\]
Now
\begin{equation}
  \label{eq:bs-d2}
  -d'=\frac{m-d}{\sigma\sqrt T}
     =\frac{\bigl(r-\tfrac{\sigma^2}{2}\bigr)T-\ln(K/S_0)}{\sigma\sqrt T}
     =\frac{\ln(S_0/K)+\bigl(r-\tfrac{\sigma^2}{2}\bigr)T}{\sigma\sqrt T}
     =:d_2 ,
\end{equation}
so that
\begin{equation}
  \label{eq:bs-secondvalue}
  \int_d^{\infty}\rho(x)\,dx=N(d_2).
\end{equation}

\subsection{Evaluation of the first integral}
\label{ssec:bs-first}

For the first integral in~\eqref{eq:bs-split} we complete the square in the
exponent. By~\eqref{eq:bs-density},
\[
  e^{x}\rho(x)=\frac{1}{\sigma\sqrt{2\pi T}}
    \exp\Bigl(x-\frac{(x-m)^2}{2\sigma^2T}\Bigr),
\]
and the exponent is a quadratic polynomial in $x$:
\begin{align*}
  x-\frac{(x-m)^2}{2\sigma^2T}
   &=-\frac{(x-m)^2-2\sigma^2Tx}{2\sigma^2T}
    =-\frac{\bigl(x-m-\sigma^2T\bigr)^2-2\sigma^2Tm-\sigma^4T^2}{2\sigma^2T}\\
   &=-\frac{\bigl(x-m-\sigma^2T\bigr)^2}{2\sigma^2T}+m+\frac{\sigma^2T}{2},
\end{align*}
where in the middle step we used
$(x-m)^2-2\sigma^2Tx=(x-m-\sigma^2T)^2-2\sigma^2Tm-\sigma^4T^2$, which is
verified by expanding both sides. Consequently
\[
  e^{x}\rho(x)=e^{m+\sigma^2T/2}\cdot\frac{1}{\sigma\sqrt{2\pi T}}
    \exp\Bigl(-\frac{\bigl(x-(m+\sigma^2T)\bigr)^2}{2\sigma^2T}\Bigr)
   =e^{m+\sigma^2T/2}\,\widetilde\rho(x),
\]
where $\widetilde\rho$ is the Gaussian density of mean $m+\sigma^2T$ and the
same variance $\sigma^2T$. Integrating over $x>d$ and repeating the
substitution of~\S\ref{ssec:bs-second}, now with the shifted mean,
\[
  \int_d^{\infty}e^{x}\rho(x)\,dx
   =e^{m+\sigma^2T/2}\,
     N\Bigl(\frac{-d+m+\sigma^2T}{\sigma\sqrt T}\Bigr).
\]
The argument of $N$ is
\begin{equation}
  \label{eq:bs-d1}
  \frac{-d+m+\sigma^2T}{\sigma\sqrt T}
   =\frac{\ln(S_0/K)+\bigl(r-\tfrac{\sigma^2}{2}\bigr)T+\sigma^2T}{\sigma\sqrt T}
   =\frac{\ln(S_0/K)+\bigl(r+\tfrac{\sigma^2}{2}\bigr)T}{\sigma\sqrt T}=:d_1 ,
\end{equation}
and the prefactor is
\[
  e^{m+\sigma^2T/2}
   =\exp\Bigl(\bigl(r-\tfrac{\sigma^2}{2}\bigr)T+\tfrac{\sigma^2T}{2}\Bigr)
   =e^{rT}.
\]
Hence
\begin{equation}
  \label{eq:bs-firstvalue}
  \int_d^{\infty}e^{x}\rho(x)\,dx=e^{rT}N(d_1).
\end{equation}
Note that $d_1=d_2+\sigma\sqrt T$.

\subsection{Assembling the Black--Scholes formula}
\label{ssec:bs-assemble}

Substituting~\eqref{eq:bs-firstvalue} and~\eqref{eq:bs-secondvalue}
into~\eqref{eq:bs-split},
\[
  C=e^{-rT}\bigl[S_0e^{rT}N(d_1)-K\,N(d_2)\bigr],
\]
that is,
\begin{equation}
  \label{eq:bs-formula}
  \boxed{\;C=S_0\,N(d_1)-K\,e^{-rT}N(d_2)\;}
\end{equation}
with $d_1$ and $d_2$ given by~\eqref{eq:bs-d1} and~\eqref{eq:bs-d2}.

\subsection{Remarks}
\label{ssec:bs-remarks}

\paragraph{(a) Reading of the formula.}
The factor $e^{-rT}$ in~\eqref{eq:bs-price} is the discounting of a payment
made at time $T$. In~\eqref{eq:bs-formula} it has been absorbed into the
second term: the option is worth the value $S_0N(d_1)$ of the position in the
underlying held in the replicating portfolio, minus the present value
$Ke^{-rT}$ of the strike, paid with the risk-neutral probability $N(d_2)$ that
the option finishes in the money.

\paragraph{(b) What was used.}
The derivation uses exactly two ingredients from this paper: the Gaussian tail
identity, i.e. the standard normal density of Theorem~\ref{thm:coordgauss}
together with the additivity of \S\ref{sec:saks}, and the identification of
the HK value with the value of the ordinary integral
(Corollary~\ref{cor:arb}, Theorem~\ref{thm:hkfresnel}). No Wick rotation is
performed anywhere, and the oscillating machinery of
Sections~\ref{sec:osc}--\ref{sec:cylgeneral} is not needed here: the financial
kernel is damped, not oscillating, which is why the integrals
in~\eqref{eq:bs-split} converge absolutely.

\paragraph{(c) Comparison with the classical route.}
The classical derivation passes through the Feynman--Kac theorem: one writes
the Black--Scholes partial differential equation, verifies that its solution
admits a stochastic representation and evaluates the resulting expectation
(see \cite{Muldowney2012} or \cite{KaratzasShreve1991}). The route followed
here works directly with the path integral: the payoff depends on one
coordinate only, the consistency theorem of \S\ref{sec:saks} reduces the
cylindrical integral to that coordinate, and the whole computation collapses
to a single Gaussian tail, evaluated in \S\ref{ssec:bs-second}
and~\S\ref{ssec:bs-first}.

\paragraph{(d) Scope of the section.}
The present section is purely mathematical: it derives
formula~\eqref{eq:bs-formula} from the definitions of this paper and adds no
new theory. Interpretations of the same collapse to a one-dimensional
Gaussian tail in other applied settings are not repeated here.

\section{The Gaussian-to-oscillating limit}
\label{sec:limit}

\subsection{The damped family and its limit as $\varepsilon\to0^+$}
\label{ssec:damped}

For $\varepsilon>0$ the kernel $e^{-(\varepsilon-ia/2)x^2}$ is absolutely
integrable and defines the damped (Euclidean) Gaussian premeasure; at
$\varepsilon=0$ it becomes the purely oscillating kernel $e^{iax^2/2}$ (Lean:
\LN{dEps}, \LN{fresnelKernel}). The passage between the two pictures is
established by

\begin{theorem}[Gaussian-to-oscillating limit; Lean:
\LN{gaussianToOscillating_one_dim}]
\label{thm:gauss2osc}
For all $a\in\RR$ and $R\in\RR$,
\[
  \lim_{\varepsilon\to0^+}\int_{-R}^{R}e^{-(\varepsilon-ia/2)x^2}\,dx
  =\int_{-R}^{R}e^{iax^2/2}\,dx.
\]
\end{theorem}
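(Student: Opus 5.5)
The plan is to estimate the difference of the two integrals directly on the fixed interval, exactly as in term (i) of the proof of Theorem~\ref{thm:fresnel1}, but now for arbitrary real $a$ and $R$. Note that $e^{-(\varepsilon-ia/2)x^2}=e^{iax^2/2}e^{-\varepsilon x^2}$. First we reduce to the case $R\ge0$: for $R<0$ the oriented integral $\int_{-R}^{R}=-\int_{R}^{-R}$, so both sides change sign simultaneously and it suffices to treat $|R|$. The value $a=0$ needs no separate treatment, since the argument never divides by $a$.

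For $R\ge0$ and $|x|\le R$ we have the pointwise bound
\[
  \bigl|e^{iax^2/2}e^{-\varepsilon x^2}-e^{iax^2/2}\bigr|
  =1-e^{-\varepsilon x^2}\le 1-e^{-\varepsilon R^2},
\]
using $|e^{iax^2/2}|=1$ and the monotonicity of $t\mapsto1-e^{-\varepsilon t}$. Integrating over an interval of length $2R$ gives
\[
  \Bigl\|\int_{-R}^{R}e^{-(\varepsilon-ia/2)x^2}dx-\int_{-R}^{R}e^{iax^2/2}dx\Bigr\|
  \le 2R\bigl(1-e^{-\varepsilon R^2}\bigr).
\]
This is the bound \LN{approx_bound} already used in Theorem~\ref{thm:fresnel1}, whose proof nowhere uses $a>0$. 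Both integrands are continuous on the compact interval, so both integrals exist and the estimate is legitimate.

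Finally, $\varepsilon\mapsto2R(1-e^{-\varepsilon R^2})$ is continuous and vanishes at $\varepsilon=0$, so the right-hand side tends to $0$ as $\varepsilon\to0^+$; this gives the claimed limit. Alternatively, one could invoke dominated convergence (or continuity of parametric integrals over a compact set), with dominating function $1$ on $[-R,R]$. There is no real obstacle here. The only points needing care are the sign convention for $R<0$ and checking that the cited bound holds for general real $a$; I would handle the first by the orientation swap above, or in Lean by working with the interval integral over $\mathrm{uIoc}(-R,R)$, whose length is $2|R|$.
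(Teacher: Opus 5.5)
Your proof is correct, but it takes a different route from the paper's.

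\textbf{The paper's route.} The paper argues qualitatively. For each fixed $x$ the map $\varepsilon\mapsto e^{-(\varepsilon-ia/2)x^2}$ is continuous, and it is dominated by the integrable constant $1$ on $[-R,R]$. The continuity theorem for parametric integrals (dominated convergence) then shows that $\varepsilon\mapsto\int_{-R}^{R}e^{-(\varepsilon-ia/2)x^2}\,dx$ is continuous on all of $[0,\infty)$. The limit at $0^+$ is read off from this.

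\textbf{Your route.} You reuse the explicit term~(i) estimate from the proof of Theorem~\ref{thm:fresnel1}. It gives a rate, $2|R|\bigl(1-e^{-\varepsilon R^2}\bigr)\le 2|R|^3\varepsilon$, and needs nothing beyond monotonicity of the integral and $|e^{iax^2/2}|=1$. You correctly note that this estimate never uses $a>0$, and your orientation argument for $R<0$, or equivalently the length $2|R|$ of the interval, is fine.

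\textbf{Comparison.} Your version buys an explicit modulus of convergence and avoids any limit theorem. The paper's version gives continuity on the whole half-line at once and uses only the crude domination $\bigl|e^{-(\varepsilon-ia/2)x^2}\bigr|\le1$. It would therefore carry over unchanged to parametric kernels for which no closed-form pointwise difference bound is available. The dominated-convergence alternative you mention in passing is exactly the paper's proof.
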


\begin{proof}
\emph{(a) Continuity and domination.} Fix $x\in[-R,R]$. The map
$\varepsilon\mapsto e^{-(\varepsilon-ia/2)x^2}$ is a composition of the affine
map $\varepsilon\mapsto-(\varepsilon-ia/2)x^2$ with the complex exponential,
hence continuous on $[0,\infty)$, and its value at $\varepsilon=0$ is
$e^{iax^2/2}$. Moreover, for every $\varepsilon\ge0$,
\[
  \bigl|e^{-(\varepsilon-ia/2)x^2}\bigr|
   =e^{\operatorname{Re}\bigl(-(\varepsilon-ia/2)x^2\bigr)}
   =e^{-\varepsilon x^2}\ \le\ 1 .
\]

\emph{(b) The dominating function is integrable.} The constant function $1$ is
integrable on the bounded interval $[-R,R]$, of finite length $2|R|$.

\emph{(c) Continuity of the parametric integral.} By the theorem on the
continuity of an integral depending on a parameter --- Lebesgue's dominated
convergence theorem in its continuity form, or, equivalently, the uniform
continuity argument for the Riemann integral on a compact interval --- the map
\[
  \varepsilon\ \longmapsto\ \int_{-R}^{R}e^{-(\varepsilon-ia/2)x^2}\,dx
\]
is continuous on $[0,\infty)$.

\emph{(d) Conclusion.} Continuity at $\varepsilon=0$ is exactly the asserted
limit:
\[
  \lim_{\varepsilon\to0^+}\int_{-R}^{R}e^{-(\varepsilon-ia/2)x^2}dx
   =\int_{-R}^{R}e^{iax^2/2}\,dx .
\]
(The corresponding statement has been machine-checked:
\LN{gaussianToOscillating_one_dim}.)
\end{proof}

\begin{corollary}[Lean: \LN{gaussianToOscillating_premeasure}]
\label{cor:gauss2osc}
For all $a_i>0$ and every $R\ge0$ the damped Gaussian premeasure of the cube
$[-R,R]^k$ converges, as $\varepsilon\to0^+$, to the oscillating premeasure of
the same cube:
\[
  \lim_{\varepsilon\to0^+}\prod_{i<k}\int_{-R}^{R}
    e^{-(\varepsilon-ia_i/2)x^2}dx
  =\mu^{\mathrm{osc}}\bigl([-R,R]^k\bigr).
\]
\end{corollary}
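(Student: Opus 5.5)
The plan is to reduce the corollary to Theorem~\ref{thm:gauss2osc}, one factor at a time, and then use continuity of finite products. For each $i<k$ define
\[
  g_i(\varepsilon)=\int_{-R}^{R}e^{-(\varepsilon-ia_i/2)x^2}\,dx .
\]
Theorem~\ref{thm:gauss2osc}, applied with $a=a_i$ and the given $R$, gives
\[
  g_i(\varepsilon)\to\int_{-R}^{R}e^{ia_ix^2/2}\,dx
  \quad\text{as }\varepsilon\to0^+ .
\]
The hypothesis $a_i>0$ is not needed for this step, since that theorem holds for every real $a$. Multiplication $\CC^k\to\CC$ is continuous, so the product of these finitely many convergent families converges to the product of their limits. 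In Lean this would be a finset product of \texttt{Tendsto} statements. The conclusion is
\[
  \lim_{\varepsilon\to0^+}\prod_{i<k}g_i(\varepsilon)
  =\prod_{i<k}\int_{-R}^{R}e^{ia_ix^2/2}\,dx .
\]

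Next I identify the right-hand side with $\mu^{\mathrm{osc}}([-R,R]^k)$. By definition,
\[
  \mu^{\mathrm{osc}}([-R,R]^k)=\int_{[-R,R]^k}\exp\Bigl(\tfrac i2\sum_{i<k}a_iy_i^2\Bigr)\,dy .
\]
The integrand factorises as $\prod_{i<k}e^{ia_iy_i^2/2}$. It is continuous with modulus $1$ on a set of finite measure, so Fubini's theorem applies exactly as in~\eqref{eq:cubefact} of Theorem~\ref{thm:fresnelfree}. The integral over the cube is therefore the product of the one-dimensional integrals. This is also the case $s_i=[-R,R]$ of the premeasure for a product set, as in~\eqref{eq:premeasurefact}. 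For $R=0$ both sides vanish, since every interval is degenerate, so there is nothing special to check there.

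The only step needing any care is this identification. The cube must be written as a product set, $\mathrm{Set.pi}$ of intervals, so that the product-measure integral lemma, Mathlib's \texttt{integral\_fintype\_prod\_eq\_prod}, restricted to a box, applies. The integrand must also match the premeasure's definition in the form $\prod_i f_i(y_i)$. Once that rewriting is done, the limit argument is immediate.
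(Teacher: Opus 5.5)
Your proposal is correct and follows the paper's proof exactly: you apply Theorem~\ref{thm:gauss2osc} factorwise, take the limit of a finite product as the product of the limits, and identify the resulting product of one-dimensional integrals with $\mu^{\mathrm{osc}}([-R,R]^k)$ through the Fubini factorisation~\eqref{eq:cubefact}. One harmless slip: your aside that both sides vanish at $R=0$ fails for $k=0$, where both sides equal $1$, but the general argument never relies on that special case.
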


\begin{proof}
By Theorem~\ref{thm:gauss2osc}, applied factorwise with $a=a_i$, each of the
$k$ factors converges as $\varepsilon\to0^+$:
\[
  \int_{-R}^{R}e^{-(\varepsilon-ia_i/2)x^2}dx
   \ \xrightarrow[\ \varepsilon\to0^+\ ]{}\ \int_{-R}^{R}e^{ia_ix^2/2}dx .
\]
A finite product of convergent families converges to the product of the
limits, so
\[
  \lim_{\varepsilon\to0^+}\prod_{i<k}\int_{-R}^{R}e^{-(\varepsilon-ia_i/2)x^2}dx
   =\prod_{i<k}\int_{-R}^{R}e^{ia_ix^2/2}dx .
\]
By the factorisation~\eqref{eq:cubefact} of the proof of
Theorem~\ref{thm:fresnelfree}, the right-hand side is exactly the oscillating
premeasure $\mu^{\mathrm{osc}}([-R,R]^k)$ of the cube. (The corresponding
statement has been machine-checked:
\LN{gaussianToOscillating_premeasure}.)
\end{proof}

\subsection{Interpretation}
\label{ssec:interpretation}

Theorem~\ref{thm:gauss2osc} and Corollary~\ref{cor:gauss2osc} close the link
between the Euclidean and the real-time pictures of the free field at the
level of premeasures: the oscillating premeasure is the pointwise (in the
region) limit of the damped Gaussian premeasures as the damping is removed. It
is essential that this limit is \emph{not} a Wick rotation: the parameter
$\varepsilon$ is real, the contour of integration is not rotated, and the
whole construction stays inside real time. The ingredients of the proof were
already present in the proof of the estimate of Theorem~\ref{thm:fresnel1}
(\LN{approx_bound}, \LN{integral_dEps}, \LN{tail_full},
\LN{continuousAt_gaussian_value}); what is new here is the isolation of the
limit itself as a separate statement, in the form in which it is used in the
text.

Taken together, Sections~\ref{sec:general}--\ref{sec:cylgeneral}
and~\ref{sec:limit} mean that the free field is completely closed: the value of the free-field functional integral
(Theorem~\ref{thm:fresnelgeneral}), the generating functional
(Theorem~\ref{thm:genfunc}), the correlations of all orders
(Theorems~\ref{thm:twopoint} and~\ref{thm:wick}),
the premeasure as an object of the general theory
(Theorem~\ref{thm:cylconsist}), the value of the HK integral
(Theorem~\ref{thm:hkfresnel}) and the link between the two pictures
(Theorem~\ref{thm:gauss2osc}) are all machine-verified.

\section{Formalisation in Lean 4}
\label{sec:lean}

All results are formalised in four files of the directory, building on the files of the previous stages. The
complete Lean~4 sources of the formalisation are available at
\href{https://doi.org/10.5281/zenodo.22857125}{DOI: 10.5281/zenodo.22857125}.
The table of the principal declarations follows.

\begin{longtable}{@{}p{0.47\textwidth}p{0.47\textwidth}@{}}
\hline
\textbf{Lean declaration} & \textbf{Mathematical meaning}\\
\hline
\endhead
\LN{InfiniteHK3.direction_le_of_cell_subset_cylOfBox} &
direction bound (Theorem~\ref{thm:dirbound})\\
\LN{InfiniteHK3.norm_coords_sub_le_of_span_le} &
control of the coordinates of a cell by the gauge\\
\LN{InfiniteHK3.IsCylDivision} &
arbitrary cylindrical division (Def.~\ref{def:cyldiv})\\
\LN{InfiniteHK3.exists_isCylDivision} &
existence of $\delta$-fine divisions\\
\LN{InfiniteHK3.map_coords_gaussianOn} &
pushforward of the Gaussian measure of a large subspace\\
\LN{InfiniteHK3.riemannSum_sub_integral_le_arbitrary} &
uniform Riemann-sum estimate (Theorem~\ref{thm:riemann-arb})\\
\LN{InfiniteHK3.HKIntegrableArbWith} &
HK integrability along arbitrary divisions\\
\LN{InfiniteHK3.HKIntegrable_arbitrary_division} &
consistency for arbitrary divisions (Corollary~\ref{cor:arb})\\
\LN{InfiniteHK3.spanBasis} &
orthonormal basis of the span of the modes\\
\LN{InfiniteHK3.measurePreserving_coordsOn} &
the coordinate map preserves Lebesgue measure\\
\LN{InfiniteHK3.coordGaussian_eq_stdGaussian} &
\LN{coordGaussian} is the standard Gaussian (Theorem~\ref{thm:coordgauss})\\
\LN{InfiniteHK3.coordGaussian_density} &
the density $(2\pi)^{-k/2}e^{-|y|^2/2}$\\
\LN{InfiniteHK3.cylMeasure_eq_stdGaussian_box} &
measure of a cylinder over a box\\
\LN{InfiniteHK3.IsCylBoxSystem} &
partial cylindrical system (Def.~\ref{def:system})\\
\LN{InfiniteHK3.saks_henstock} &
Saks--Henstock lemma (Theorem~\ref{thm:saks})\\
\LN{InfiniteHK3.hkIntegral_add_arbitrary} &
additivity over finite decompositions (Corollary~\ref{cor:add})\\
\LN{InfiniteHK3.hkIntegral_add_split} &
additivity for a cut by a hyperplane\\
\LN{InfiniteHK3.coordGaussian_marginal} &
marginal property of the Gaussian measure\\
\LN{InfiniteHK3.hkIntegral_consistency_Vn_Vm} &
consistency of $V_n$ and $V_m$ (Theorem~\ref{thm:vnvm})\\
\LN{InfiniteHK3.oscillatory_tail_bound} &
non-stationary-phase estimate (Theorem~\ref{thm:tail})\\
\LN{InfiniteHK3.approx_bound}, \LN{InfiniteHK3.tail_full} &
auxiliary regularisation and tail estimates\\
\LN{InfiniteHK3.fresnel_bound} &
quantitative Fresnel formula\\
\LN{InfiniteHK3.fresnel_one_dim} &
one-dimensional Fresnel formula (Theorem~\ref{thm:fresnel1})\\
\LN{InfiniteHK3.oscKernel}, \LN{InfiniteHK3.fresnelValue} &
oscillating kernel and Fresnel value\\
\LN{InfiniteHK3.fresnel_multi} &
multi-dimensional Fresnel formula\\
\LN{InfiniteHK3.oscGaussianPremeasure} &
oscillating Gaussian premeasure\\
\LN{InfiniteHK3.oscGaussian_projectivelyConsistent} &
projective consistency (Theorem~\ref{thm:consistency})\\
\LN{InfiniteHK3.freeFieldAction} &
free-field action $S(\varphi)=\tfrac12\ip{\varphi}{A\varphi}$\\
\LN{InfiniteHK3.freeFieldHKIntegral} &
existence of the HK integral on a box (Theorem~\ref{thm:exists})\\
\LN{InfiniteHK3.freeFieldHKIntegral_cyl} &
the cylindrical version of the same statement\\
\LN{InfiniteHK3.freeField_fresnel_formula} &
the formula $(2\pi)^{k/2}e^{i\pi k/4}(\det A)^{-1/2}$
(Theorem~\ref{thm:fresnelfree})\\
\hline
\LN{InfiniteHK3.eigenRotation}, \LN{InfiniteHK3.eigenModes} &
orthogonal diagonalisation of the action (\S\ref{ssec:diag})\\
\LN{InfiniteHK3.eigenRotation_spec} &
properties of the diagonalisation, including $\det A=\prod_i a_i$\\
\LN{InfiniteHK3.measurePreserving_mulVec_add} &
orthogonal maps and translations preserve Lebesgue measure\\
\LN{InfiniteHK3.adaptedCube} &
cube adapted to the eigenbasis (Remark~\ref{rem:exhaustion})\\
\LN{InfiniteHK3.quadKernel}, \LN{InfiniteHK3.fresnelValueGeneral} &
the kernel $e^{(i/2)\ip{y}{Ay}}$ and the Fresnel value of a general action\\
\LN{InfiniteHK3.freeField_fresnel_formula_general} &
Fresnel formula for a general $A$ (Theorem~\ref{thm:fresnelgeneral})\\
\LN{InfiniteHK3.quadAction_shift} &
completing the square\\
\LN{InfiniteHK3.freeFieldGeneratingFunctional} &
the generating functional $Z(j)$ (\S\ref{ssec:genfunc})\\
\LN{InfiniteHK3.freeFieldGeneratingFunctional_value} &
the value of $Z(j)$ (Theorem~\ref{thm:genfunc})\\
\LN{InfiniteHK3.normalizedGeneratingFunctional} &
the normalised functional $Z(j)/Z(0)$\\
\LN{InfiniteHK3.deriv_two_normalizedGeneratingFunctional} &
second derivative in the source~\eqref{eq:secondderiv}\\
\LN{InfiniteHK3.freeFieldTwoPoint} &
the two-point function $(A^{-1})_{pq}$\\
\LN{InfiniteHK3.freeField_two_point_HK} &
$\langle y_py_q\rangle=i(A^{-1})_{pq}$ (Theorem~\ref{thm:twopoint})\\
\LN{InfiniteHK3.ComplexCylindricalPremeasure} &
complex cylindrical premeasure (\S\ref{ssec:cylosc})\\
\LN{InfiniteHK3.oscGaussianCylindrical}\allowbreak\LN{PremeasureNormalized} &
the normalised oscillating cylindrical premeasure\\
\LN{InfiniteHK3.oscGaussian_isProjectivelyConsistent} &
projective consistency (Theorem~\ref{thm:cylconsist})\\
\LN{InfiniteHK3.fresnel_multi_Ioc}, \LN{InfiniteHK3.cubeBox} &
exhaustion by half-open cubes\\
\LN{InfiniteHK3.fresnelValue_eq_HKIntegral} &
the value of the HK integral (Theorem~\ref{thm:hkfresnel})\\
\LN{InfiniteHK3.gaussianToOscillating_one_dim} &
the limit $\varepsilon\to0^+$ (Theorem~\ref{thm:gauss2osc})\\
\LN{InfiniteHK3.gaussianToOscillating_premeasure} &
the same limit for the premeasure of a cube (Corollary~\ref{cor:gauss2osc})\\
\hline
\LN{InfiniteHK3.partialDerivC}, \LN{InfiniteHK3.iterPartialDerivC} &
partial and iterated partial derivatives in the source (\S\ref{ssec:npointdef})\\
\LN{InfiniteHK3.holes} &
one-hole decompositions of a list of indices\\
\LN{InfiniteHK3.wickPairingSum} &
the pairing sum $W_B$ (Def.~\ref{def:pairingsum})\\
\LN{InfiniteHK3.partialDerivC}\allowbreak\LN{_normalizedGeneratingFunctional} &
$\partial_p\widehat Z=-i(A^{-1}j)_p\widehat Z$ (Lemma~\ref{lem:firstderiv})\\
\LN{InfiniteHK3.iterPartialDerivC_linear_mul} &
Leibniz rule for a linear factor (Lemma~\ref{lem:leibniz})\\
\LN{InfiniteHK3.freeFieldNPoint} &
the $n$-point function (Def.~\ref{def:npoint})\\
\LN{InfiniteHK3.freeFieldNPoint_odd} &
vanishing of the odd correlations (Theorem~\ref{thm:wickodd})\\
\LN{InfiniteHK3.freeField_wick_theorem} &
Wick's theorem (Theorem~\ref{thm:wick})\\
\LN{InfiniteHK3.freeFieldNPoint_two}, \LN{InfiniteHK3.freeFieldNPoint_four} &
the two- and four-point functions (Corollary~\ref{cor:fourpoint})\\
\hline
\end{longtable}

\paragraph{The role of the formalisation.} All proofs above have been
machine-checked in Lean~4 on top of Mathlib. The check serves as a guarantee
of the accuracy of the computations: it rules out a flipped sign, a dropped
index, a factor absorbed into a constant, and the silent use of an unproved
assumption. No file of the formalisation contains \LN{sorry}, \LN{admit} or a
new axiom; every principal theorem depends on exactly \LN{propext},
\LN{Classical.choice} and \LN{Quot.sound}.

\section{Conclusion and open problems}
\label{sec:concl}

The four goals of the previous stage are closed: box divisions have been
replaced by arbitrary cylindrical divisions; the measure \LN{coordGaussian}
has been identified with the standard Gaussian one; the Saks--Henstock lemma,
additivity and the $V_n$/$V_m$ consistency have been proved; and finally the
Fresnel formula without Wick rotation has been obtained and, as a consequence,
the gauge closure of the free-field functional integral.

Beyond that, the free field is now completely closed: the Fresnel formula has
been extended to an arbitrary symmetric positive-definite action
(Theorem~\ref{thm:fresnelgeneral}); the generating functional has been
constructed and its value found (Theorem~\ref{thm:genfunc}); the two-point
function $\langle y_py_q\rangle=i(A^{-1})_{pq}$ has been computed
(Theorem~\ref{thm:twopoint}); Wick's theorem has been proved, so that the free
field is now closed not only at the level of the two-point function but at the
level of all correlation functions: the odd ones vanish
(Theorem~\ref{thm:wickodd}) and the even ones factorise into sums over
pairings of the two-point function (Theorem~\ref{thm:wick},
Corollary~\ref{cor:fourpoint}); the oscillating premeasure has been embedded
into the general theory as a complex cylindrical premeasure and its projective
consistency proved (Theorem~\ref{thm:cylconsist}); the value of the HK
integral has been identified with the Fresnel value
(Theorem~\ref{thm:hkfresnel}); and the passage from the damped picture to the
oscillating one has been established (Theorem~\ref{thm:gauss2osc}).

Not closed in the present work, and left for future research:
\begin{enumerate}
  \item the interacting field $\lambda\varphi^4$ in infinitely many modes;
  \item unbounded generators (Trotter--Kato, $C_0$-semigroups); only the case
    of bounded generators is covered, via
    \cite{BerdinskyUshakov2026,BerdinskyUshakovOpen};
  \item renormalisation and constructive QFT.
\end{enumerate}
The ingredients already available for these steps are the non-Gaussian regime
\cite{BerdinskyNonGaussian} and the Chernoff--Lie--Trotter machinery
\cite{Trotter1958,Kato1978,BerdinskyUshakov2026,BerdinskyUshakovOpen}.

\paragraph{Gauge fields and the topology of orbits.}
The apparatus of complex cylindrical premeasures and non-absolute
HK convergence developed here has been applied to the problem of
Gribov copies in non-abelian gauge theories in the companion paper
\cite{GribovSU3}. There it is proved that for \emph{gauge-invariant}
functionals the HK Riemann sums on two different gauge surfaces
coincide \emph{pointwise}, before any passage to the limit, and that
this argument does not depend on the dimension of the gauge orbit.
A dependence on the copy can manifest itself only in gauge
\emph{non-invariant} quantities --- for instance, through the
Faddeev--Popov weight --- and constitutes a separate problem. The
continuum limit of the gauge theory in this scheme remains open.

\end{document}